\documentclass{article}
\usepackage{booktabs}

\usepackage[round]{natbib}

\bibliographystyle{apalike}

\usepackage{hyperref}
\usepackage{amssymb, amsthm,amsmath}
\usepackage{float}
\usepackage{amsmath}
\usepackage{graphicx,psfrag,epsf}
\usepackage{enumerate}
\usepackage{natbib}
\usepackage{url} 
\usepackage{tikz}
\usepackage{pgfplots}
\usepackage{subcaption}
\usepackage[most]{tcolorbox}
\usepackage{a4wide}

\pgfplotsset{compat=1.18}



\usepackage[utf8]{inputenc}
\usetikzlibrary{patterns}
\usetikzlibrary{shapes}
\usepackage[linesnumbered,ruled]{algorithm2e}

\usepackage{bm}
\usepackage{dsfont}
\usepackage{amssymb}
\usepackage{amsfonts}
\usepackage{amsthm}
\usepackage{mathtools}
\usepackage{xcolor}
\usepackage{multirow}

\newtheorem{theorem}{Theorem}[section]
\newtheorem{lemma}[theorem]{Lemma}
\newtheorem{proposition}[theorem]{Proposition}

\newtheorem{remark}{Remark}[section]

\newcommand{\Cp}{\ensuremath{F}}

\newcommand{\mP}{\ensuremath{\mathbb P}}


\newcommand{\et}[1]{\textcolor{black}{#1}}
\newcommand{\ind}[1]{{\mathbf{1}\{#1\}}}
\newcommand{\range}[1]{[#1]}
\newcommand{\R}{\mathbb{R}}
\newcommand{\E}{\ensuremath{\mathbb E}}
\renewcommand{\P}{\ensuremath{\mathbb P}}

\renewcommand{\l}{\ell}
\newcommand{\wh}{\widehat}







\SetKwComment{Comment}{/* }{ */}

\begin{document}

\title{Powerful batch conformal prediction for classification}

\author{Ulysse Gazin\footnote{Universit\'e Paris Cit\'e and Sorbonne Universit\'e, CNRS, Laboratoire de Probabilit\'es, Statistique et Mod\'elisation. Email: ugazin@lpsm.paris} $\quad$ Ruth Heller\footnote{Department of Statistics and Operations Research, Tel-Aviv University. Email: ruheller@gmail.com} $\quad$ Etienne Roquain\footnote{Sorbonne Universit\'e and Universit\'e Paris Cit\'e, CNRS, Laboratoire de Probabilit\'es, Statistique et Mod\'elisation. Email: etienne.roquain@upmc.fr} $\quad$ Aldo Solari\footnote{Department of Economics, Ca Foscari University of Venice. Email: aldo.solari@unive.it}}

  \maketitle

\bigskip

\begin{abstract}

In a split conformal framework with $K$ classes, a calibration sample of $n$ labeled examples is observed for inference on the label of a new unlabeled example. We explore the setting where a `batch' of $m$ independent such unlabeled examples is given, and the goal is to construct a batch prediction set with 1-$\alpha$ coverage. Unlike individual prediction sets, the batch prediction set is a collection of label vectors of size $m$, while the calibration sample consists of univariate labels. 
A natural approach is to apply the Bonferroni correction, which concatenates individual prediction sets at level $1-\alpha/m$.  We propose a uniformly more powerful solution, based on specific combinations of conformal $p$-values that exploit the Simes inequality. We provide a general recipe for valid inference with any combinations of conformal $p$-values, and compare the performance of several useful choices.  Intuitively, the pooled evidence of relatively
 `easy' examples within the batch can help provide narrower batch prediction sets. Additionally, we introduce a more computationally intensive method that aggregates batch scores and can be even more powerful.    The theoretical guarantees are established when all examples are independent and identically distributed (iid), as well as more generally when iid is assumed only conditionally within each class. Notably, our results remain valid under label distribution shift, since the distribution of the labels need not be the same in the calibration sample and in the new batch. The effectiveness of the methods is highlighted through illustrative synthetic and real data examples.
\end{abstract}

\bigskip

\noindent%
{\it Keywords:} conformal inference, multiple testing, label distribution shift, Simes inequality.\\

\section{Introduction}

Conformal prediction is a popular tool for providing prediction sets with valid coverage \citep{vovk2005algorithmic}. The strength of the approach is that the guarantee holds  for any underlying data-distribution,  and can be combined with any machine learning algorithm.
In this paper, we follow the split/inductive conformal prediction in a classification setting for which a machine has been pre-trained on an independent training sample \citep{papadopoulos2002inductive, vovk2005algorithmic,lei2014conformal} and an independent calibration sample with {\it individual} labeled examples is available. We would like to use the calibration sample efficiently,  to derive the prediction set for the label vector of a {\it batch} of new examples, without making any distributional assumption.

Formally, let $X_i \in \mathcal{X}$ (the space $\mathcal{X}$ is without restrictions) be the covariate and $Y_i\in [K]$\footnote{Throughout the paper, we denote by $\range{\ell}$ the set $\{1,\dots,\ell\}$, for any integer $\ell\geq 1$.} be the class label for example $i$. We observe a calibration sample $\{(X_i,Y_i),i\in [n]\}$, and only the covariates from the batch $\{(X_{n+i}, Y_{n+i}),i\in [m]\}$. We assume that a machine has been pre-trained (with an independent training sample) and is able to produce non-conformity scores $S_{k}(x)$ for any label $k\in [K] $ and any {\it individual} covariate $x\in  \mathcal{X}$. 
The considered task is to produce a collection $\mathcal{C}^m_\alpha$ (called a batch prediction set) of batch label vectors $y:=(y_{i})_{i\in\range{m}}\in \range{K}^m$ such that one of the two following guarantees holds:
\begin{align}
&\P( (Y_{n+i})_{i\in [m]}\in \mathcal{C}^m_\alpha)\geq 1-\alpha\:; \label{aim}\\
&\P( (Y_{n+i})_{i\in [m]}\in \mathcal{C}^m_\alpha\:|\: (Y_{n+i})_{i\in [m] } = y)\geq 1-\alpha\: ,\label{aimcond}
\end{align}
where the guarantee in \eqref{aimcond} is meant to hold for any possible batch $y\in \range{K}^m$.
The unconditional guarantee in \eqref{aim} is considered for the {\it iid model}, for which the probability is taken with respect to (wrt) the sample $\{(X_i,Y_i),i\in [n+m]\}$ which is assumed to have iid components. By contrast,  the stronger conditional guarantee in \eqref{aimcond} is considered for the {\it conditional model} where the batch label vector $(Y_{n+i})_{i\in [m]}$ is fixed and the probability is taken wrt the distribution of the calibration sample $\{(X_i,Y_i),i\in [n]\}$ and the conditional distribution of $(X_{n+i})_{i\in [m]}$ given $(Y_{n+i})_{i\in [m]}$. Note that by independence, the conditional distribution of $(X_{n+i})_{i\in [m]}$ given $(Y_{n+i})_{i\in [m]}$ is simply equal to the product of the marginal distributions of $X_{n+i}$ given $Y_{n+i}$ for $i\in [m]$.

{While unconditional guarantees of the type \eqref{aim} are the most used targets for inference in the conformal literature \citep{angelopoulos2021gentle,angelopoulos2024theoretical}, we emphasize that \eqref{aimcond} is a much stronger guarantee \citep{vovk2005algorithmic,sadinle2019least, romano2020malice}, which is a particular case of  Mondrian conformal prediction. In our framework, since the true labels are fixed, the batch prediction set can be seen as a {\it batch confidence set}, that is, it is valid for all possible values of the true labels, and covers the case of a label distribution shift between the calibration sample and the batch: while methods built for the iid case implicitly use exchangeability of the labels and thus fail to cover the true batch in that case (see \S~\ref{sec:ClassvFull} for an illustration), methods with conditional coverage \eqref{aimcond} cover the true batch even if the classes are arbitrarily unbalanced. This is of practical importance given that this situation is commonly met in real data sets.}

The typical inference on a `batch' only reports  a  prediction set for each example \citep{lee2024batchpredictiveinference}.  By providing powerful methods that guarantee \eqref{aim},\eqref{aimcond}, the inference is far more flexible. 
First, we can extract  a prediction set for each example with a $1-\alpha$ coverage guarantee: for instance, \eqref{aimcond} entails for all $y\in \range{K}^m$,
$$
\P(\forall i\in [m],\: Y_{n+i}\in \mathcal{C}^m_{i,\alpha}\:|\: (Y_{n+i})_{i\in [m] } = y) \geq 1-\alpha,
$$ where $\mathcal{C}^m_{i,\alpha}$ is the set of the $i$-th coordinates of all the vectors in $\mathcal{C}^m_\alpha$, that is, 
$
\mathcal{C}^m_{i,\alpha}=\{y_{i}\in \range{K} \::\:\exists (y_{j})_{j\in [m]\backslash\{i\}}\in \range{K}^{m-1} \::\: (y_{j})_{j\in [m]}\in \mathcal{C}^m_\alpha \}.
$   
In addition to this, we can also extract from  the resulting batch prediction set bounds on the number of examples from each class.
For any possible batch vector $y\in \range{K}^m$, let
\begin{align}\label{countk}
m_{k}(y):=\sum_{i=1}^m \ind{y_{i}=k},\:\:k\in [K],
\end{align}
be the number of examples from class $k$ in the batch $y$. The guarantees \eqref{aim},\eqref{aimcond} ensure that with (conditional) probability at least $1-\alpha$, all unknown numbers $m_{k}((Y_{n+i})_{i\in [m]})$ are included in a range 
\begin{align}
[\ell_{\alpha}^{(k)},u_{\alpha}^{(k)}] &:=[\min \mathcal{N}_k(\mathcal{C}^m_\alpha), \max \mathcal{N}_k(\mathcal{C}^m_\alpha)],\label{lbub}
\end{align}
where $
\mathcal{N}_k(\mathcal{C}^m_\alpha):=\{ m_{k}(y)\:: \: y\in \mathcal{C}^m_\alpha\}$, 
for all $k\in [K]$.

We mention two applications of our work,  where the covariate  corresponds to an image and we should produce a prediction set for the label vector of a {\it batch} of such images:
\begin{itemize}
\item[(i)] Reading zip code \citep{vovk2013transductive}: given a machine trained to classify hand-written digits, we observe a written zip code, that is a batch of $m=5$ images, and we should produce a list of plausible zip codes (a subset of $\range{K}^m$) for this batch; building $\mathcal{C}^m_\alpha$ ensuring \eqref{aim} or \eqref{aimcond} provides a solution, see also  Figure~\ref{fig:USPS} below.
\item[(ii)] Survey animal populations: given a machine trained to classify animal images, we observe a set of $m$ animal images and we should provide a prediction sets for the counts of each animal; 
building $[\ell_{\alpha}^{(k)},u_{\alpha}^{(k)}]$ as in  \eqref{lbub} provides a solution, see illustrations in \S~\ref{sm-subsec-CIFAR}.
\end{itemize}

In a very recent paper, \cite{lee2024batchpredictiveinference} suggest constructing 
prediction sets for functions of the batch points (e.g., for the mean or median outcome of the batch), assuming exchangeability of the calibration and test data, for both regression and classification. Their motivation is thus the same as ours, of providing model-free joint inference on multiple test points. They did not develop methodology targeting the inferential  guarantees \eqref{aim},\eqref{aimcond}. For their aims, they use a similar approach to the permutation approach we suggest in \S~\ref{sec-batchscores}.

The guarantee \eqref{aim} has been considered in \cite{vovk2013transductive}. To achieve the $1-\alpha$ guarantee, the problem of a batch prediction set is seen as the problem of testing at level $\alpha$ each of the $y \in \range{K}^m$ possible sets of labels. \cite{vovk2013transductive} suggested in the full/transductive conformal setting using Bonferroni for each partitioning hypothesis. The advantage is that only $m\cdot K$ conformal $p$-values, i.e., $K$ for each example, need to be computed. So there is no need to go over all $K^m$ possible vectors of labels since $m\cdot K$ computations are enough. However, the computational simplicity comes at a severe cost: the batch prediction set using Bonferroni may be unnecessarily large, and thus less informative, than using more computationally intensive methods. 

Our main contributions are as follows. 
We start by casting the problem of finding the batch prediction set as the problem of finding all the vectors that are not rejected when testing each of the $y\in \range{K}^m$ possible sets of labels in the conditional and iid models. 
By using the well-known Simes test, we show that there is a uniformly better (i.e., narrower) batch prediction set than Bonferroni's, that we refer to as the {\it Simes batch prediction set}. We further introduce an adaptive variant (its theoretical guarantee are established for conformal $p$-values with a possible label shift, enriching the available literature \citep{Storey2003,bates2023testing}). We show how to construct batch prediction sets with any $p$-value combining function in Algorithm   \ref{alg:general}. We also provide a computational shortcut algorithm to compute the bounds \eqref{lbub} that maintains the $1-\alpha$ coverage guarantee. 
Finally, we provide an alternative method that combines batch scores rather than conformal $p$-values in Algorithm \ref{alg:batchscores}. We suggest the estimated likelihood ratio statistic, and show it has excellent power but a large computational cost compared to the methods that combine conformal $p$-values.   We demonstrate the usefulness of our recommendations for image classification and  USPS  digits problems.\footnote{The code used in all our experiments is made publicly available at \url{https://github.com/ulyssegazin/BatchCP_Classification}}

The novel methods are available in two versions, each being valid for the iid or conditional model. The theoretical proofs are deferred to the supplementary file. The latter also contains additional illustrations, numerical experiments and mathematical materials.

To illustrate our method, 
Table~\ref{tabUSPS} provides an example of batch prediction set for the particular zip code displayed in Figure~\ref{fig:USPS}. For each combining function, Bonferroni or Simes, the proposed batch prediction set can be expressed as the batches with a $p$-value larger than $\alpha$ (see \eqref{PredBonf}, \eqref{PredSimes} below). At $5\%$, we see that the Bonferroni batch prediction set is of size $8$, whereas the Simes batch prediction set is of size $6$ and is able to exclude the batches $(0,6,5,5,4)$ and $(0,6,6,5,4)$ from the prediction set. This is because all digits of the batch are acceptable according to Bonferroni's method, but are not acceptable {\it together} according to Simes' method.
To show that this phenomenon is not due to the particular data generation, a violin plot for $500$ replications is provided in Figure~\ref{violinUSPS}. Below the violin plot, the scatter plot of the number of rejections by each method clearly shows that the batch prediction set using Simes  can be much narrower than using Bonferroni (and is never larger than using Bonferroni).

{\small
 \begin{table}[ht]
 \centering
 \subfloat[]{\includegraphics[width=0.4\textwidth]{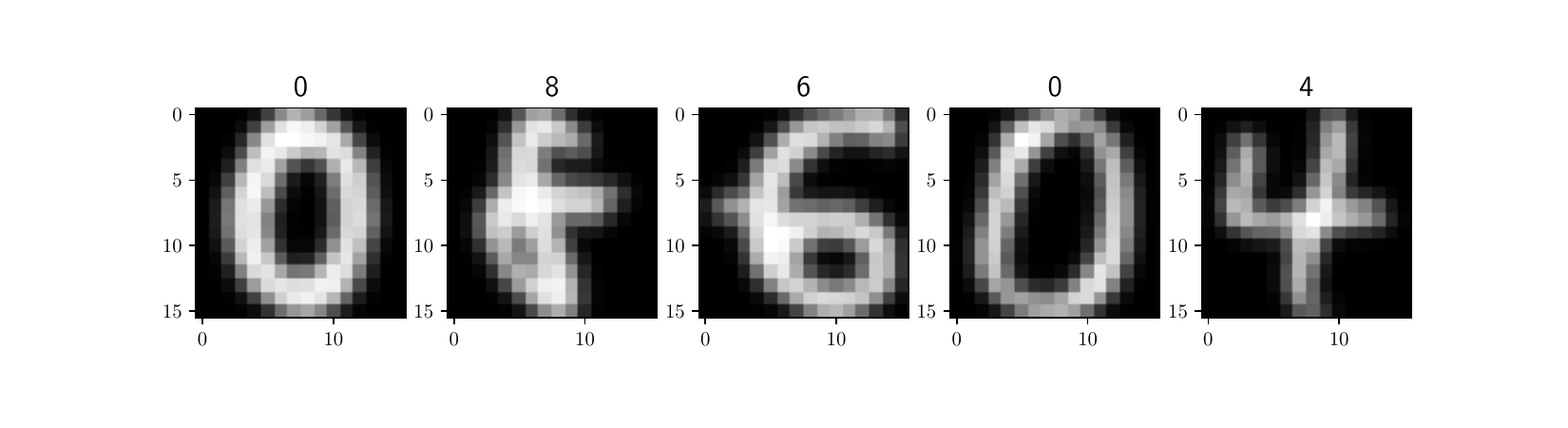} \label{fig:USPS}}\\
    \subfloat[]{ 
\begin{tabular}{|lllll|ll|}
\toprule
0 & 8 & 6 & 0 & 4 & Bonferroni & Simes \\
\midrule
0 & 6 & 5 & 5 & 4 & {\bf 0.065} & 0.038 \\
0 & 6 & 6 & 5 & 4 & {\bf 0.065} & 0.038 \\
0 & 6 & 5 & 0 & 4 & {\bf 0.065} & {\bf 0.065} \\
0 & 6 & 6 & 0 & 4 & {\bf 0.065} & {\bf 0.065} \\
0 & 8 & 5 & 5 & 4 & {\bf 0.077} & {\bf 0.077} \\
0 & 8 & 6 & 5 & 4 & {\bf 0.077} & {\bf 0.077} \\
0 & 8 & 5 & 0 & 4 & {\bf 0.277} & {\bf 0.277} \\
0 & 8 & 6 & 0 & 4 & {\bf 0.605} & {\bf 0.345} \\
\bottomrule
\end{tabular}}
 \caption{\label{tabUSPS} 
 Batch prediction sets  at level $0.05$ for Bonferroni's and Simes' methods computed on the particular batch displayed above the table (from the USPS dataset provided by the US Postal Service for the paper \cite{LeCun1989HandwrittenDR}) previously studied by \cite{vovk2013transductive}). Columns 6 and 7 provide the combination $p$-values using combining functions  \eqref{FBonf} and \eqref{FSimes}, respectively. The batch prediction set corresponds to batch $p$-values displayed in bold. }
 \end{table}
}

\begin{figure}[h!]
\vspace{-1cm}
\begin{center}

\begin{tabular}{r}

\hspace{-4.5mm}\includegraphics[scale=0.27]{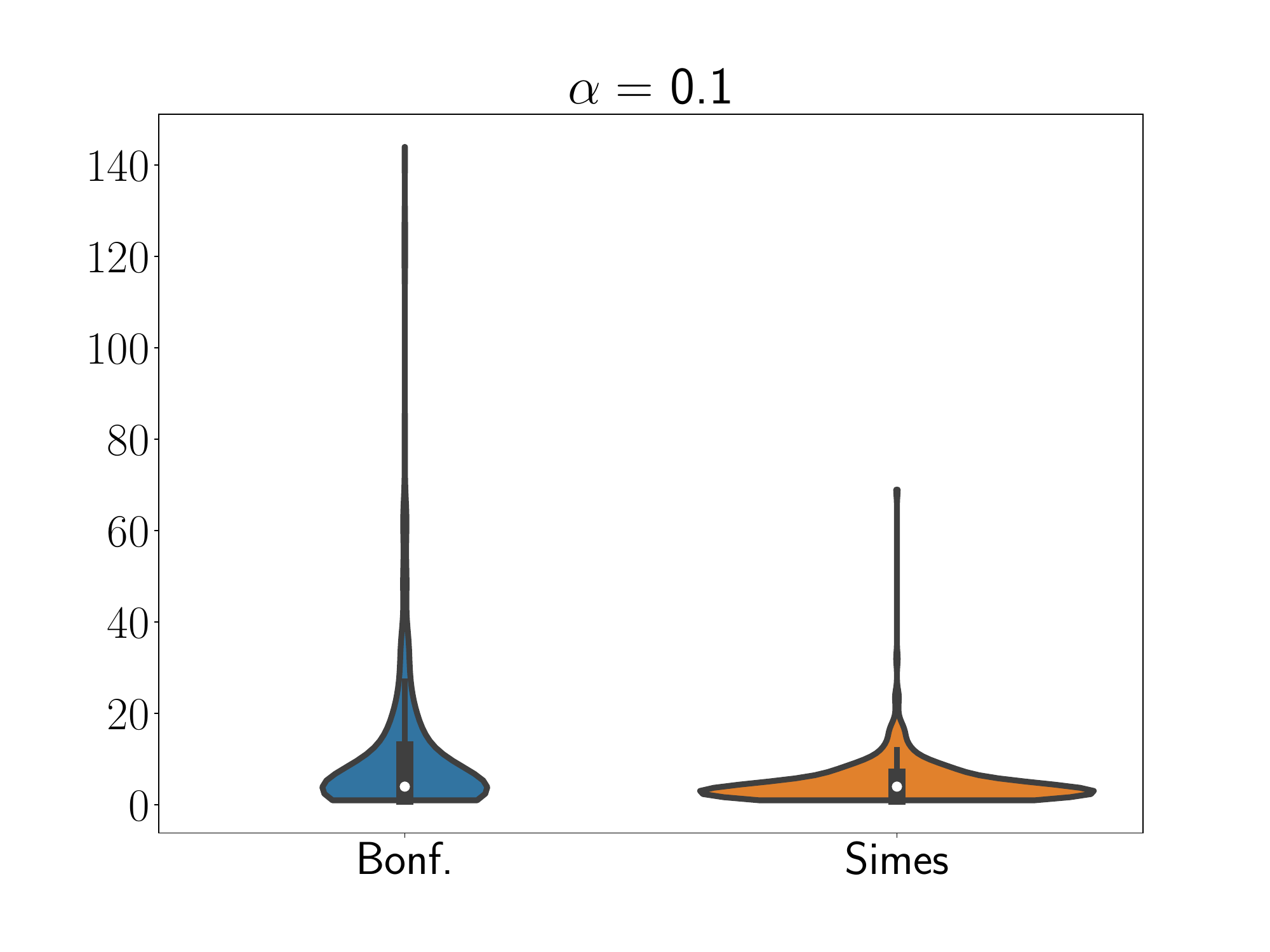}\vspace{-1.2cm}\\
\hspace{-4.5mm}\includegraphics[scale=0.27]{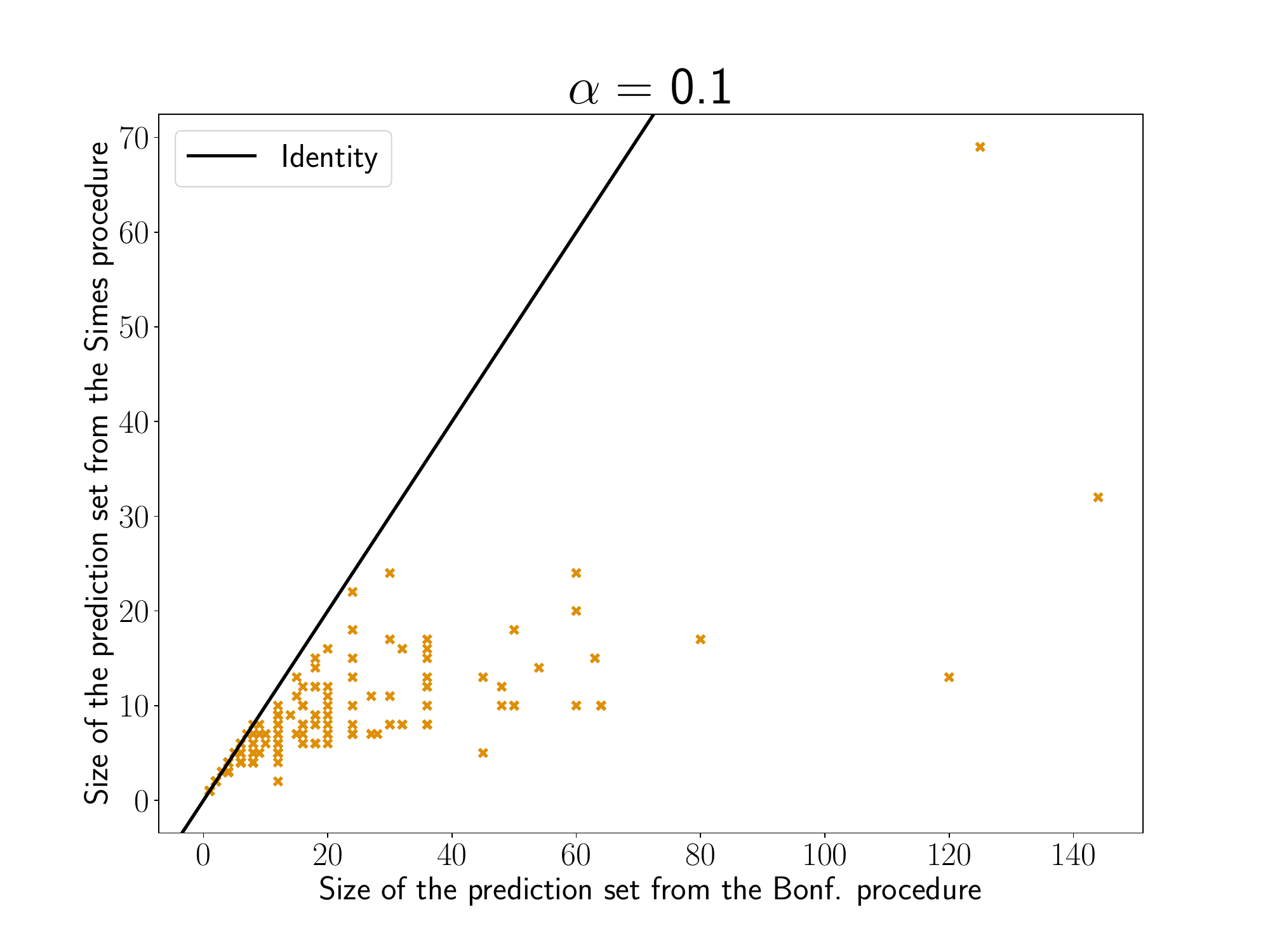} \vspace{-5mm}\\
\end{tabular}
\caption{Violin plots (top row) and scatter plot (bottom row)  for the size of the batch prediction sets of Bonferroni's and Simes' methods ($m=3$, $K=10$, $500$ replications) for $\alpha= 0.1$. Generated from the USPS dataset, as described in \S~\ref{sec:data}.\label{violinUSPS}}
\end{center}
\end{figure}

Finally, let us describe some related works.  Our methodology is tightly related to the multiple testing literature, in particular \cite{BY2001,BKY2006,barber2017p,Bogomolov23, HellerSolari23}, where Simes and adaptive Simes variants are shown to be useful for inference on a family of null hypotheses.
Existing work for the task of building prediction sets concentrated thus far primarily on providing  a false coverage rate (FCR) guarantee  \citep{bates2023testing,gazin2023transductive,gazin2024selecting,jin2024confidence}. To derive our theoretical results, we rely on the literature on conformal novelty detection  \citep{bates2023testing,marandon2024adaptive} under the `full null' configuration, that is, when the test sample is not contaminated by novelties. While we show that these works yield {\it de facto} the unconditional guarantee \eqref{aim}, we extend the theory to also cover the more challenging conditional guarantee \eqref{aimcond}. We emphasize that our work consider the setting where we observe a calibration sample of {\it examples} (not batches), as in \cite{lee2024batchpredictiveinference}. If a calibration sample of {\it batches} is at hand, the usual conformal inference pipeline can (and should) be used by defining batch scores that take into account the interaction between batch elements  \citep{messoudi2020conformal,messoudi2021copula,johnstone2021conformal,johnstone2022exact}. In our work, the batch examples are assumed independent and the calibration sample only contains scores for individual examples, so our setting is markedly different.

\section{Methods using combinations of conformal $p$-values}

Henceforth, we make the classical assumption that the scores $S_{Y_{i}}(X_i)$, $i\in [n+m]$, have no ties almost surely. 

\subsection{Conformal $p$-values}

For $k\in [K]$, we consider the conformal $p$-value \citep{vovk2005algorithmic} for testing the null ``$Y_{n+i} = k $'' versus ``$Y_{n+i}\neq  k $'' in the test sample. 
Formally, the  $p$-value family $(p^{(k)}_i, k\in [K], i\in [m])$ is given as follows: 
\begin{equation}\label{standardpvalue}
p^{(k)}_i= \frac{1}{|\mathcal{D}^{(k)}_{{\tiny \mbox{cal}}}|+1}\Big(1+\sum_{j\in \mathcal{D}^{(k)}_{{\tiny \mbox{cal}}}} \ind{S_{Y_j}(X_j)\geq S_{k}(X_{n+i})} \Big),
\end{equation} 
with $\mathcal{D}^{(k)}_{{\tiny \mbox{cal}}}$ being either $\range{n}$, of size $n$, in the iid setting
 or $\{j \in \range{n}\::\: Y_j= k\}$, of size $n_k$, in the conditional setting. The $p$-values in \eqref{standardpvalue} are  referred to as  {\it full-calibrated} $p$-values in the iid setting and  {\it class-calibrated} $p$-values in the conditional setting.

Since scores $\lbrace S_{Y_j}(X_j), j\in \mathcal{D}^{(Y_{n+i})}_{{\tiny \mbox{cal}}}\rbrace\cup \{S_{Y_{n+i}}(X_{n+i})\}$ are exchangeable both in the iid and class-conditional setting, the following, well known property, holds.

\begin{proposition}\label{prop:marginal}
    The conformal $p$-values are marginally super-uniform, that is, for all $i\in \range{m}$, for all $u\in [0,1]$, $\P(p^{(Y_{n+i})}_i\leq u)\leq u$ for full-calibrated $p$-values and $\P(p^{(Y_{n+i})}_i\leq u\:|\: (Y_{j})_{j\in \range{n+m}})\leq u$ for class-calibrated $p$-values.
    \end{proposition}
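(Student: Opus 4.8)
The plan is to reduce the statement to the classical fact that the rank of a distinguished element of an exchangeable, tie-free collection is uniformly distributed. Writing $N := |\mathcal{D}^{(Y_{n+i})}_{{\tiny \mbox{cal}}}|$ and substituting $k = Y_{n+i}$ into \eqref{standardpvalue}, I would first rewrite the $p$-value as $p^{(Y_{n+i})}_i = R/(N+1)$, where
\[
R := 1 + \sum_{j\in \mathcal{D}^{(Y_{n+i})}_{{\tiny \mbox{cal}}}} \ind{S_{Y_j}(X_j)\geq S_{Y_{n+i}}(X_{n+i})}
\]
is the descending rank of the test score $S_{Y_{n+i}}(X_{n+i})$ among the $N+1$ scores $\{S_{Y_j}(X_j) : j\in \mathcal{D}^{(Y_{n+i})}_{{\tiny \mbox{cal}}}\}\cup\{S_{Y_{n+i}}(X_{n+i})\}$. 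The entire argument then rests on showing that $R$ is uniform on $\range{N+1}$.

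The key ingredient is the exchangeability of exactly this family of scores, which was recorded just before the statement. Since the no-ties assumption guarantees that almost surely no two of these $N+1$ scores coincide, $R$ is a genuine rank, and exchangeability forces $\P(R=r)=1/(N+1)$ for every $r\in\range{N+1}$. From this, for any $u\in[0,1]$,
\[
\P\bigl(p^{(Y_{n+i})}_i \le u\bigr) = \P\bigl(R \le u(N+1)\bigr) = \frac{\lfloor u(N+1)\rfloor}{N+1} \le u,
\]
using $\lfloor u(N+1)\rfloor \le u(N+1)$, which is precisely the claimed super-uniformity.

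The only genuinely subtle point — and thus the main obstacle — is that I evaluate the $p$-value at the \emph{true} label $Y_{n+i}$, so that both the test score and, in the class-calibrated case, the calibration set $\mathcal{D}^{(Y_{n+i})}_{{\tiny \mbox{cal}}} = \{j\in\range{n}: Y_j = Y_{n+i}\}$ depend on the random quantity $Y_{n+i}$; this is where the two models must be separated. In the iid case the argument is immediate: the pairs $(X_j,Y_j)$ for $j\in\range{n}\cup\{n+i\}$ are iid, so the scores $S_{Y_j}(X_j)$ are iid, hence exchangeable, and $N=n$ is deterministic, giving the unconditional super-uniformity directly. In the conditional case I would condition on $(Y_j)_{j\in\range{n+m}}$ from the outset: once the labels are fixed, $\mathcal{D}^{(Y_{n+i})}_{{\tiny \mbox{cal}}}$ and $N=n_{Y_{n+i}}$ become deterministic, the residual randomness is that of $(X_j)$ given the labels with independent components, and with $k=Y_{n+i}$ fixed the relevant scores $S_{Y_j}(X_j)=S_k(X_j)$ are conditionally iid draws of $S_k(X)$ for $X$ from the class-$k$ conditional law, hence exchangeable conditionally. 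The uniform-rank computation then goes through verbatim with $\P(\cdot)$ replaced by $\P(\cdot\mid (Y_j)_{j\in\range{n+m}})$; the conditioning neutralizes the coupling between the tested label and the calibration set, and the remaining bookkeeping of the floor function and the $\geq$ convention is handled cleanly by the no-ties assumption.
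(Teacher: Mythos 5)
Your proof is correct and follows essentially the same route as the paper: the paper simply invokes the exchangeability of the scores $\{S_{Y_j}(X_j), j\in \mathcal{D}^{(Y_{n+i})}_{{\tiny \mbox{cal}}}\}\cup\{S_{Y_{n+i}}(X_{n+i})\}$ (in both models) and cites the result as well known, and this exchangeability-plus-rank-uniformity argument is exactly what you spell out, including the correct handling of the conditional model by fixing the labels so that the calibration set and $n_{Y_{n+i}}$ become deterministic. The same fact appears in the paper's Lemma C.2(ii), whose proof is the one-line observation that the tie-free scores are conditionally exchangeable, so your write-up is just a more detailed version of the paper's own reasoning.
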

Proposition~\ref{prop:marginal} ensures that each individual label set 
$
\mathcal{C}_{i,\alpha}:=\{ y_{i} \in \range{K}\::\: p^{(y_{i})}_i> \alpha \}
$
is a prediction set for $Y_{n+i}$ of (conditional) coverage at least $1-\alpha$. 

\subsection{Bonferroni batch prediction set}

The Bonferroni batch prediction set is given as follows:
\begin{align}\label{PredBonf}
 \begin{array}{ll}
\mathcal{C}^m_{\alpha,\mbox{\tiny Bonf} }:=&\{ y=(y_{i})_{i\in [m]} \in \range{K}^m\::\:\\ 
&\:\:\: \Cp_{\mbox{\tiny Bonf}}((p^{(y_{i})}_i)_{i\in \range{m}})> \alpha \},
\end{array}
\end{align}
where the $p$-value for the batch $y$ and for the Bonferroni method is given by
\begin{align}\label{FBonf}
\Cp_{\mbox{\tiny Bonf}}((p^{(y_{i})}_i)_{i\in \range{m}}):=m\min_{i\in \range{m}}\{p^{(y_{i})}_i\}.
\end{align}
Hence, this prediction set is rectangular:
 $\mathcal{C}^m_{\alpha,\mbox{\tiny Bonf} } = \times_{i=1}^m \{ k\in  \range{K}\::\: p^{(k)}_i> \alpha/m\}$;
it is simply the product of standard individual conformal prediction sets, taken at level $1-\alpha/m$.
By Proposition~\ref{prop:marginal} and a simple union bound, it is clear that \eqref{aim} and \eqref{aimcond} hold by using the full-calibrated  and class-calibrated $p$-values, respectively.

\subsection{Simes batch prediction set}

Let us denote by $p_{(\ell)}((y_{i})_{i\in[m]})$ the $\ell$-th largest element among the vector $(p^{(y_{i})}_i, i\in[m])$.
The Simes  batch prediction set is given as follows:
\begin{align}\label{PredSimes}
 \begin{array}{ll}
\mathcal{C}^m_{\alpha,\mbox{\tiny Simes} }:=&\{ y=(y_{i})_{i\in [m]} \in \range{K}^m\::\:\\ &\:\:\:\Cp_{\mbox{\tiny Simes}}((p^{(y_{i})}_i)_{i\in \range{m}}) > \alpha \},\end{array}
\end{align}
where the $p$-value for the batch $y$ and for the Simes method is given by
\begin{equation}
    \label{FSimes}
    \Cp_{\mbox{\tiny Simes}}((p^{(y_{i})}_i)_{i\in \range{m}}):=\min_{\ell\in \range{m}} \{m p_{(\ell)}(y)/\ell\}.
\end{equation}
The latter always improves the Bonferroni batch prediction set, that is, $\mathcal{C}^m_{\alpha,\mbox{\tiny Simes} }\subset \mathcal{C}^m_{\alpha,\mbox{\tiny Bonf} }$ pointwise. 
Note that the Simes batch prediction set is not a hyper-rectangle, and cannot be obtained from the individual prediction sets of each element of the batch.
In addition, the next result shows that it provides the correct (conditional) coverage.

\begin{theorem}\label{th:Simes}
The prediction set $\mathcal{C}^m_{\alpha,\mbox{\tiny Simes} }$ satisfies \eqref{aim} and \eqref{aimcond} by using the  full-calibrated and class-calibrated $p$-values, respectively. 
\end{theorem}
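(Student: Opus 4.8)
The plan is to reduce the coverage statements \eqref{aim} and \eqref{aimcond} to a single super-uniformity property of the Simes combination evaluated at the \emph{true} labels. Write $P_i := p^{(Y_{n+i})}_i$ for the conformal $p$-value at the true label of batch element $i$, and let $P_{(\ell)}$ denote the associated order statistics. By the very definition \eqref{PredSimes}, the event that the true vector $(Y_{n+i})_{i\in\range{m}}$ fails to belong to $\mathcal{C}^m_{\alpha,\mbox{\tiny Simes}}$ is exactly $\{\Cp_{\mbox{\tiny Simes}}((P_i)_{i\in\range{m}})\le\alpha\}$. Hence both guarantees follow once I establish $\P(\Cp_{\mbox{\tiny Simes}}((P_i)_{i\in\range{m}})\le\alpha)\le\alpha$, unconditionally for the full-calibrated $p$-values and conditionally on $(Y_j)_{j\in\range{n+m}}$ for the class-calibrated ones. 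Unfolding \eqref{FSimes}, the left-hand side is the probability that the Simes global test rejects, so it suffices to prove the \emph{Simes inequality} $\P(\exists\,\ell\in\range{m}:\, m\,P_{(\ell)}/\ell\le\alpha)\le\alpha$ for the family $(P_i)_{i\in\range{m}}$.

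By Proposition~\ref{prop:marginal} each $P_i$ is (conditionally) super-uniform, so the only missing ingredient is a joint dependence condition under which the Simes inequality is valid. I would invoke the classical fact that this inequality holds for super-uniform $p$-values that are either mutually independent \citep{Sim1986} or positively dependent in the PRDS sense \citep{BY2001}. The heart of the proof is therefore to verify that the true-label family $(P_i)_{i\in\range{m}}$ is PRDS on $\range{m}$.

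For the full-calibrated $p$-values in the iid model, all $P_i$ are computed from the common calibration set $\range{n}$, and the scores $S_{Y_j}(X_j)$, $j\in\range{n+m}$, are iid; thus $(P_i)_{i\in\range{m}}$ is precisely the vector of conformal $p$-values of exchangeable test points against a shared calibration sample under the ``full null''. I would import the PRDS property of such $p$-values from the conformal novelty-detection literature \citep{bates2023testing,marandon2024adaptive}, where it is obtained by conditioning on the calibration scores---which renders the $P_i$ conditionally independent and monotone in the test scores---and then exploiting the monotone coupling induced by the shared calibration. For the class-calibrated $p$-values in the conditional model, I would fix $(Y_j)_{j\in\range{n+m}}$ and partition the batch indices by their true class. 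For class $k$, the $p$-values $\{P_i: Y_{n+i}=k\}$ are built from the class-$k$ calibration scores $\{S_k(X_j):Y_j=k\}$ and test scores $\{S_k(X_{n+i}):Y_{n+i}=k\}$, which are iid \emph{within} the class given the labels; each such group is again a full-null conformal family and hence PRDS. Since distinct classes use disjoint calibration indices and conditionally independent covariates, the groups are mutually independent, and a concatenation of independent PRDS blocks is again PRDS; thus $(P_i)_{i\in\range{m}}$ is PRDS conditionally on the labels.

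Combining the two previous paragraphs, the Simes inequality applies and yields the bound $\alpha$ on the non-coverage probability: unconditionally in the iid model, giving \eqref{aim}, and conditionally on $(Y_j)_{j\in\range{n+m}}$ in the conditional model, giving \eqref{aimcond}. The step I expect to be most delicate is the PRDS verification, and in particular the passage to the class-conditional setting: conditioning on the calibration scores yields conditional independence but destroys conditional super-uniformity, so one cannot simply condition and invoke the independent case; the argument must instead go through the PRDS property of conformal $p$-values (or their exact joint law under the full null) and through the fact that independent PRDS blocks concatenate into a PRDS vector.
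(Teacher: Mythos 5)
Your proposal is correct, and its skeleton is the same as the paper's: reduce non-coverage to rejection of the Simes global test at the true labels, invoke marginal super-uniformity (Proposition~\ref{prop:marginal}) together with the \cite{BY2001} result that PRDS super-uniform $p$-values satisfy the Simes inequality, and cite \cite{bates2023testing} for the PRDS property of full-calibrated conformal $p$-values in the iid model. Where you genuinely diverge is in the verification of the conditional PRDS property for class-calibrated $p$-values. The paper (Proposition~\ref{prop:PRDS4Class}) proves it in one shot: by Lemma~\ref{lemmaPropConformalClass}, conditionally on $W_i$ and on the other-class $p$-values, the same-class $p$-values equal the \emph{deterministic} nondecreasing map $\Psi_i(p_i,W_i)$, and $p_i$ is independent of $(W_i,(p_j)_{j:Y_{n+j}\neq Y_{n+i}})$, so the conditional probability of a nondecreasing set is an indicator that is monotone in $u$; integration finishes the argument. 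You instead decompose the batch into within-class blocks, apply the iid-case PRDS result to each block (valid, since conditionally on the labels each block is exactly a full-null conformal family with $n_k$ calibration points), note the blocks are independent because they are functions of disjoint independent data, and then assert that a concatenation of independent PRDS vectors is PRDS. That concatenation lemma is true but you leave it unproved; it follows from a short conditioning argument (for $i$ in block $1$ and nondecreasing $D$, write $\P((P^1,P^2)\in D\,|\,P^1_i=u)=\E_{P^2}\bigl[\P(P^1\in D_{P^2}\,|\,P^1_i=u)\bigr]$ with $D_y=\{x:(x,y)\in D\}$ nondecreasing, and use block-$1$ PRDS pointwise in $y$), so it is a minor omission rather than a gap. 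The trade-off: your route is more modular and maximally reuses the literature, while the paper's direct coupling argument is self-contained and establishes the stronger structural facts (deterministic monotone dependence, exact conditional uniformity of $(n_i+1)p_i$) that it then reuses for Theorems~\ref{th:Simesexact} and~\ref{th:SimesAdapt}, which your block-PRDS argument alone would not deliver.
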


To prove Theorem~\ref{th:Simes}, we establish that the Simes inequality \citep{Sim1986} holds for the class/full-calibrated $p$-values in \S~\ref{sec:proofSimes}. This comes from the fact that the conformal $p$-value family is positively dependent in a specific sense. 

The conformal $p$-values are discrete, and therefore the guarantee \eqref{aim} or \eqref{aimcond} is typically a strict inequality. 
To resolve the conservativeness of the coverage that follows from the discreteness of the conformal $p$-values, a standard solution is to use randomized conformal $p$-values \citep{vovk2013transductive}. This solution is (arguably) unattractive since decisions are randomized. Interestingly, exact coverage is possible without need for randomization for specific values of $\alpha$ detailed in the following theorem.     
\begin{theorem}\label{th:Simesexact}
 The coverage for $\mathcal{C}^m_{\alpha,\mbox{\tiny Simes} }$ is exactly $1-\alpha$ in the two following cases:
 \begin{itemize}
       \item in the iid model, for full-calibrated $p$-values, if $\alpha (n+1)/m$ is an integer;
    \item in the conditional model, for class-calibrated $p$-values if $\alpha (n_k+1)/m$ is an integer for all $k\in \range{K}$.
 \end{itemize}
\end{theorem}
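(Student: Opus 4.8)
The plan is to show that, under the stated integer conditions, the Simes inequality used for Theorem~\ref{th:Simes} is in fact an equality. Coverage is $1$ minus the non-coverage probability
$\P(\Cp_{\mbox{\tiny Simes}}((p^{(Y_{n+i})}_i)_{i\in\range{m}})\le \alpha)$, and recalling \eqref{FSimes} this rejection event can be written, in the convention-free form, as $\{\exists \ell\in\range{m}:\ \#\{i:\ p^{(Y_{n+i})}_i\le \ell\alpha/m\}\ge \ell\}$. Theorem~\ref{th:Simes} gives that this is $\le\alpha$; I must show it is exactly $\alpha$. The conformal $p$-values are exactly uniform on the grid $\{r/(n+1):r\in\range{n+1}\}$ (resp. $\{r/(n_k+1)\}$) by exchangeability, and the role of the integer hypothesis is precisely that each Simes threshold then falls on this grid: with $c:=\alpha(n+1)/m\in\N$ one has $\ell\alpha/m=\ell c/(n+1)$, a genuine atom for every $\ell$ (indeed $\ell c\le mc=\alpha(n+1)\le n+1$). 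No probability mass is wasted between consecutive atoms, so the rejection event reduces to a purely combinatorial statement about ranks, which I will then evaluate in closed form.

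For the iid model I would pass to ranks $r_i:=(n+1)p^{(Y_{n+i})}_i\in\range{n+1}$, the position of $S_{Y_{n+i}}(X_{n+i})$ from the top among itself and the $n$ calibration scores. Ordering the $n+m$ almost surely distinct scores and writing $q_1<\dots<q_m$ for the positions of the test points, a one-line count gives that the $\ell$-th smallest rank equals $r_{(\ell)}=q_\ell-\ell+1$, and that $(q_1,\dots,q_m)$ is uniform over the $\binom{n+m}{m}$ subsets of $\range{n+m}$. Grid alignment turns the rejection event into $\{\exists\ell:\ r_{(\ell)}\le \ell c\}=\{\exists\ell:\ q_\ell\le (c+1)\ell-1\}$, so coverage is the event $\{\forall \ell:\ q_\ell\ge (c+1)\ell\}$. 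Coding a calibration point as a step $+1$ and a test point as a step $-c$ in a lattice walk started at $0$, this is exactly the event that the walk stays nonnegative, and I would evaluate its probability with the cycle lemma of Dvoretzky and Motzkin (prepending one calibration step to pass from ``$\ge 0$'' to the strict-positivity form of the lemma). The count of admissible arrangements comes out to $\big(1-\tfrac{mc}{n+1}\big)\binom{n+m}{m}$, so the coverage is exactly $1-mc/(n+1)=1-\alpha$.

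For the conditional model the same reduction applies blockwise: conditionally on the labels the $p$-values split into $K$ independent groups, group $k$ carrying the shared-calibration structure above on a grid of mesh $1/(n_k+1)$, and the hypothesis $\alpha(n_k+1)/m=c_k\in\N$ aligns every Simes threshold with that grid. Writing Simes rejection as $\{\exists\ell\in\range{m}:\ \sum_{k}\#\{i\in\text{group }k:\ r^{(k)}_i\le \ell c_k\}\ge \ell\}$, the task is once more to count exactly, but the independent blocks are now coupled only through the pooled order statistics and carry heterogeneous step sizes $c_k$. I expect this exact evaluation to be the main obstacle: the single-type walk computation must be promoted to a multi-type ballot argument (or an induction over blocks, or a generating-function bookkeeping), and one has to verify that the clean cancellation producing $1-\alpha$ survives the different grid meshes. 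The $m=1$ instance of the iid computation already yields the exact marginal statement and serves both as a sanity check and as a base case for such an induction.
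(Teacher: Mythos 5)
Your treatment of the iid case is correct, and it is a genuinely different route from the paper's. With $c=\alpha(n+1)/m\in\N$ the grid alignment, the identity $r_{(\ell)}=q_\ell-\ell+1$, the reduction of coverage to the event that the $(+1,-c)$-walk stays nonnegative, and the cycle-lemma count are all right: prepending one $+1$ step gives $\frac{n+1-mc}{n+m+1}\binom{n+m+1}{m}=\big(1-\frac{mc}{n+1}\big)\binom{n+m}{m}$ admissible orderings, hence coverage exactly $1-\alpha$. The paper never counts globally: it writes the non-coverage probability as the expected FDP of the BH procedure at the true label vector, $\sum_{i\in\range{m}}\E\big[\ind{p_i\le(\alpha/m)(1\vee\wh{\ell}(\mathbf{p}))}/(1\vee\wh{\ell}(\mathbf{p}))\big]$, replaces $\wh{\ell}(\mathbf{p})$ by a leave-one-out quantity $L_i$ measurable with respect to everything except $p_i$ (Lemma~\ref{lemmaPropConformalClass}(v) combined with Lemma~\ref{BHsmallerp}), and then uses the exact discrete uniformity of $p_i$ given $W_i$: under the integrality condition, $\P(p_i\le(\alpha/m)L_i\:|\:W_i)=(\alpha/m)L_i$, so each of the $m$ terms equals $\alpha/m$ exactly. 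Your approach buys a self-contained, purely combinatorial proof; the paper's buys locality, in that the integer hypothesis is only ever invoked one coordinate at a time.

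That locality is precisely what your proposal lacks in the conditional model, and this is a genuine gap which you yourself flag: you reduce the second bullet to an exact multi-type ballot count with heterogeneous step sizes $c_k$, coupled across classes through the pooled order statistics, and you do not carry it out. There is no off-the-shelf multi-type cycle lemma for this situation, and it is not evident that the clean cancellation survives: the $K$ blocks interact through every Simes threshold simultaneously, so the count does not factorize over classes, and an induction over blocks has no obvious invariant to propagate. The paper's decomposition dissolves the difficulty entirely: after the leave-one-out step, class heterogeneity enters only through the one-dimensional conditional law of $p_i$ given $W_i$, which is uniform on $\{1/(n_i+1),\dots,1\}$ with $n_i=n_{Y_{n+i}}$, so the per-class assumption $\alpha(n_k+1)/m\in\N$ delivers each term exactly and summing over $i$ gives $\alpha$ with no cross-class combinatorics at all. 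As written, your proposal proves the first bullet of the theorem but not the second; to complete it you would need either to establish the multi-type ballot identity (a nontrivial new combinatorial result) or to switch to a per-coordinate argument of the paper's type for the conditional case.
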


The proof is given in \S~\ref{sec:proofSimesexact}.

\subsection{Adaptive Simes batch prediction set}\label{sec:adapt}

For any possible label vector $y=(y_{i})_{i\in \range{m}}\in \range{K}^m$, let
\begin{equation}\label{equm0y}
    m_0(y):=\sum_{i\in \range{m}} \ind{y_{i}=Y_{n+i}},
\end{equation}
the number of coordinates of $y$ that are equal to  the true label vector $Y=(Y_{n+i})_{i\in \range{m}}$.
Since $m_0(Y)=m$, the Simes batch prediction set 
$\mathcal{C}^m_{\alpha,\mbox{\tiny Simes} }$ has exactly the same coverage when replacing $m$ by $m_0(y)$ in the threshold. Meanwhile, using $m_0(y)$ may narrow the batch prediction set,  because $m_0(y)<m$ for any vector $y\neq Y$. 
 Unfortunately, $m_0(y)$ is unknown so that this  improved prediction region is only an `oracle' one that cannot be used. 
Our approach consists first in estimating $m_0(y)$ by 
\begin{align}
&\hat{m}_0(y) := 
(1-\lambda)^{-1} \Big(1+\sum_{i=1}^m \ind{p^{(y_{i})}_i\geq \lambda }\Big),\label{pi0estiStoreysimple}
\end{align}
which is an analogue of the so-called Storey estimator in the multiple testing literature \citep{Storey2002}. 
Here, $\lambda\in (0,1)$ is a parameter that is free but should be such that $(n+1)\lambda$ is an integer in the iid setting, or such that $(n_k+1)\lambda$ is an integer for all $k\in \range{K}$ in the conditional setting. If these conditions are too strict, we can accommodate any value of $\lambda\in (0,1)$ by adjusting the formula \eqref{pi0estiStoreysimple} to account for discreteness: the modification is minor, see \S~\ref{sec:detailAdaptive}. 

The adaptive Simes batch prediction set is 
\begin{align}
 &\begin{array}{ll}
\mathcal{C}^m_{\alpha,\mbox{\tiny A-Simes} }:=&\{ y=(y_{i})_{i\in [m]} \in \range{K}^m\::\: \\ &\:\:\:\Cp_{\mbox{\tiny A-Simes}}((p^{(y_{i})}_i)_{i\in \range{m}})>\alpha \},
\end{array}
\label{PredSimesAdapt}
\end{align}
where the $p$-value for batch $y$ and for the adaptive Simes method is given by
\begin{equation}
    \label{FASimes}
    \Cp_{\mbox{\tiny A-Simes}}((p^{(y_{i})}_i)_{i\in \range{m}}):=\min_{\ell\in \range{m}} \{\hat{m}_0(y)\cdot p_{(\ell)}(y)/\ell\},
\end{equation}
and $\hat{m}_0(y)$ is an estimator of $m_0(y)$ \eqref{equm0y}, typically as in \eqref{pi0estiStoreysimple}.

\begin{theorem}\label{th:SimesAdapt}
 The coverage for $\mathcal{C}^m_{\alpha,\mbox{\tiny A-Simes} }$ with the Storey estimator \eqref{pi0estiStoreysimple} is at least $1-\alpha$ both in the iid model (using full-calibrated $p$-values) and in the conditional model (using class-calibrated $p$-values). 
\end{theorem}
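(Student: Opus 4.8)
The plan is to show that the complement of the coverage event has probability at most $\alpha$, i.e. that the adaptive Simes combination evaluated at the \emph{true} label vector $Y=(Y_{n+i})_{i\in[m]}$ is super-uniform. Concretely, it suffices to prove
$$\P\big(\Cp_{\mbox{\tiny A-Simes}}((p^{(Y_{n+i})}_i)_{i\in[m]})\le \alpha\big)\le \alpha$$
unconditionally in the iid model (full-calibrated $p$-values) and conditionally on $(Y_j)_{j\in[n+m]}$ in the conditional model (class-calibrated $p$-values); by definition \eqref{PredSimesAdapt} this is exactly \eqref{aim} and \eqref{aimcond} for $\mathcal{C}^m_{\alpha,\mbox{\tiny A-Simes}}$. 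The reason for evaluating at $Y$ is that every entry $p^{(Y_{n+i})}_i$ is then a null conformal $p$-value, super-uniform by Proposition~\ref{prop:marginal}, and that $m_0(Y)=m$ in \eqref{equm0y}.

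Next I would recast this event in multiple-testing terms. By \eqref{FASimes}, $\{\Cp_{\mbox{\tiny A-Simes}}(Y)\le\alpha\}=\{\exists\,\ell\in[m]:\ p_{(\ell)}(Y)\le \ell\alpha/\hat{m}_0(Y)\}$, which is precisely the event that the step-up (BH/Simes) procedure with data-dependent critical values $\ell\alpha/\hat{m}_0(Y)$, applied to the family $(p^{(Y_{n+i})}_i)_{i}$, makes at least one rejection. Since all these $p$-values are null, every rejection is a false rejection, so this probability equals the false discovery rate of the associated Storey-adaptive step-up procedure in the full-null configuration. The target inequality is therefore an instance of adaptive FDR control.

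I would then establish the control by combining two ingredients. First, the positive dependence of the null conformal $p$-values — the very property verified in the proof of Theorem~\ref{th:Simes} (Section~\ref{sec:proofSimes}) that yields the Simes inequality — is also what drives FDR control of conformal step-up procedures \citep{bates2023testing,marandon2024adaptive}. Second, the Storey estimator \eqref{pi0estiStoreysimple} is tuned to the conformal grid: since full-calibrated $p$-values take values in $\{1/(n+1),\dots,1\}$, the requirement that $(n+1)\lambda$ be an integer makes $\lambda$ land exactly on that grid, so that the indicators $\ind{p^{(Y_{n+i})}_i\ge\lambda}$ carry the clean conformal probabilities needed for $\hat{m}_0(Y)$ to over-estimate $m_0(Y)=m$ in the appropriate conditional sense \citep{Storey2002,BKY2006,HellerSolari23}. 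The technical core is to decouple the random denominator $\hat{m}_0(Y)$ from the rejection event; I would do this by conditioning on the pooled multiset of true-label scores and exploiting exchangeability (a leave-one-out argument over the calibration bag), which turns the circular dependence into a tractable conditional computation and is exactly where the integer condition on $\lambda$ is used (the minor correction of Section~\ref{sec:detailAdaptive} then handles a general $\lambda$).

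Finally, the conditional model would be handled by first conditioning on all labels $(Y_j)_{j\in[n+m]}$ and running the same argument on the class-calibrated $p$-values. I expect this to be the main obstacle: unlike the iid case, where all $m$ test points share the single calibration set $[n]$, the class-calibrated $p$-values at the true labels use \emph{disjoint} calibration subsets across distinct classes while sharing them within a class, so the dependence is a mixture and $\hat{m}_0$ aggregates $p$-values built from independent calibration pools (with the integer condition now reading $(n_k+1)\lambda\in\N$ for all $k$). The bulk of the work is thus to verify that the decoupling and super-uniformity argument survives this per-class structure, combining within-class exchangeability across classes, so that the adaptive FDR bound — and hence the claimed $1-\alpha$ coverage — continues to hold.
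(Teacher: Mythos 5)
Your overall architecture coincides with the paper's: reduce coverage to the statement that the adaptive Simes combination evaluated at the true label vector exceeds $\alpha$ with probability at least $1-\alpha$, view the complementary event as at least one rejection of a Storey-adaptive step-up procedure in the full-null configuration, and decouple the random denominator by conditioning on the pooled multiset of true-class scores (this is exactly the paper's $W_i$ in Lemma~\ref{lemmaPropConformalClass}). However, there is a genuine gap, in two parts. First, you attribute the control to the positive-dependence (PRDS) property that ``drives FDR control of conformal step-up procedures.'' PRDS is what the paper uses for the \emph{non-adaptive} Simes set (Theorem~\ref{th:Simes}), but it is not sufficient for the Storey-adaptive procedure: adaptive step-up procedures do not control FDR under positive dependence in general, and the paper's proof of Theorem~\ref{th:SimesAdapt} never invokes PRDS. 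What it uses instead is the exact distributional content of Lemma~\ref{lemmaPropConformalClass}: conditionally on $W_i$ and the labels, $(n_i+1)p_i$ is uniform on $\range{n_i+1}$ and independent of $W_i$, the same-class $p$-values are a monotone function $\Psi_i(p_i,W_i)$ of $p_i$, and the other-class $p$-values are independent of $p_i$.

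Second, the step you defer as ``the bulk of the work'' is precisely the core of the proof, and nothing in your sketch supplies it. After the leave-one-out substitution (replacing $p_i$ by its minimal value $(n_i+1)^{-1}$, via the sets $\mathcal{N}(W_i)$ and the threshold $c^*(W_i)$), the paper must show $\sum_{i\in\range{m}}\E\big[1/\hat m_0(\mathbf{p}')\big]\le 1$, where $\hat m_0(\mathbf{p}')$ aggregates indicators $\ind{p'_j\ge\lambda_j}$ across classes built from \emph{independent} calibration pools. This requires: representing these indicator sums, conditionally on the order statistics of each class's calibration scores, as independent binomials whose success probabilities $\nu_k,\nu'_i$ are beta-distributed (Lemma~\ref{lem:beta}); dominating the resulting Poisson-binomial stochastically by a binomial with the geometric-mean parameter (Lemma~\ref{lem:domin}); applying the bound $\E[1/(1+\mathrm{Bin}(m-1,\nu))]\le 1/(m\nu)$ (Lemma~\ref{lem:BY}); and finishing with Jensen's inequality and explicit inverse beta moments, where the normalization $\kappa(y)$ in \eqref{equkappaClass} is tailored exactly so that the final product is at most $1$. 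Your appeal to $\hat m_0(Y)$ ``over-estimating'' $m_0(Y)=m$ in a conditional sense is not the right property: the paper itself notes that $\hat m_0$ need not exceed $m$, and what the proof actually needs is the expectation bound on $1/\hat m_0(\mathbf{p}')$, not a pointwise domination. As written, the proposal identifies the correct strategy and correctly locates the difficulty in the per-class structure, but it does not contain the argument that resolves it, in either the conditional or the iid case.
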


The proof is given in \S~\ref{sec:proofSimesadapt}. 
Note that the adaptive Simes method with estimator \eqref{pi0estiStoreysimple} (referred to as {\it Storey Simes} in what follows) does not provide a uniform improvement over Simes (or Bonferroni),  because $\hat{m}_0(y)>m$ is possible for some batches $y$. However, $\hat{m}_0(y)$ is typically (much) smaller than $m$ for batches $y$ which are far from the true batch. Hence, the adaptive version leads to a substantial improvement in a situation where the batch prediction set is large (`weak' signal), see examples in \S~\ref{sec:xp}. 

The tuning parameter $\lambda$ is by default chosen equal to $1/2$ but other choices are possible, see \S~\ref{sec:detailAdaptive} in the supplement.
Therein, we also provide another type of estimator, corresponding to the so-called  `quantile' estimator \citep{BKY2006,marandon2024adaptive} and for which a choice of parameter is the `median' estimator (and the corresponding method is referred to as {\it median Simes}). While we have no theoretical guarantee for median Simes, the performance of median Simes tends to be better than Storey Simes for strong signal and worse when the signal is weak, see \S~\ref{appendix-sim}.

\subsection{General $p$-value combining prediction set}\label{sec:numapprox}

We present a general method for  guaranteeing \eqref{aim} and \eqref{aimcond} using any combining function, denoted by $\Cp((p^{(y_{i})}_i)_{i\in \range{m}})$,  for the conformal $p$-values that test that the batch labels are $y\in \range{K}^m$. Consider a batch prediction set of the form 
\begin{equation}\label{genpredictionset}
  \mathcal{C}^m_{t,F}  :=\{ (y_{i})_{i\in [m]} \in \range{K}^m\::\:  \Cp((p^{(y_{i})}_i)_{i\in \range{m}}) \geq t\},
\end{equation} 

where $t$ is some threshold, possibly depending on the $p$-value vector.
From Theorems~\ref{th:Simes}~and~\ref{th:SimesAdapt}, a valid choice is $t=\alpha$ and 
$F=\Cp_{\mbox{\tiny A-Simes}}$ as in \eqref{FASimes}
with either $\hat{m}_0(y)=m$ or $\hat{m}_0(y)$ as in \eqref{pi0estiStoreysimple}. Algorithm \ref{alg:general} shows how  to find a valid empirical choice of $t$ for any $F$ (see also the simplified version given in \S~\ref{appendix-generalcombinationsalgorithm-iid}, Algorithm~\ref{alg:general_iid}, for the particular case of the iid model). Importantly for computation, the empirical threshold (line 10 in Algorithm \ref{alg:general}) does not depend on the actual scores from the calibration and test examples. 
However, in the conditional model, the threshold 
depends on $(m_k(y))_{k\in \range{K}}$ so $B$ permutations of $\range{n+m}$ should be generated for every configuration of $(m_k)_{k\in \range{K}}$ such that $\sum_{k=1}^K m_k = m$ (where $m_k\in [0,m]$). Hence, the computational cost is more severe than for the iid model, which only requires $B$ permutations of $\range{n+m}$. However, these computations can be done once for all, before observing the data for the batch.

Finally, the attentive reader may have noticed that the inequality in \eqref{genpredictionset} is not strict, which is in contrast with the previous sections. This is necessary to obtain a general valid coverage as stated in Theorem~\ref{th:gencontrol} below and is consistent with standard randomized test theory for the  batch statistic $\Cp((p^{(y_{i})}_i)_{i\in \range{m}})$  \citep{RW2005}, which is intended to be small when a rejection should be made. The prediction set \eqref{genpredictionset} can be equivalently expressed as $C^m_{t,F} = \lbrace y\in \range{K}^m: \hat{q}(y)>\alpha \rbrace$, where $\hat{q}=\hat{q}(y)$ is some batch $p$-value (see \eqref{formulapvaluedistfree} in \S~\ref{proof:gen}), possibly depending on $(m_k(y))_{k \in [K]}$ in the class-conditional case. This representation is similar to the representation in previous sections. Both representations have the same computational complexity, but the representation in the algorithm has the advantage of making it clear that the thresholds $t^{(m_k(y))_{k\in \range{K}}}$ do not depend on the data, i.e.,  the construction of the prediction set is {\it distribution free}.

\begin{theorem}\label{th:gencontrol}
The coverage of the batch prediction set $\mathcal{C}^m_{t,F}$ provided in Algorithm \ref{alg:general}.
 is at least $1-\alpha$ both in the iid model (using full-calibrated $p$-values) and in the conditional model (using class-calibrated $p$-values). For the iid model, the outer loop (lines 1, 2, 12 in Algorithm \ref{alg:general}) is not needed, see Algorithm~\ref{alg:general_iid} in \S~\ref{appendix-generalcombinationsalgorithm-iid}.
\end{theorem}

\begin{algorithm}[!htb]
\small
\SetKwInOut{Input}{Input}
\Input{Number of examples from class $k$ in the calibration set $n_k$, $k \in [K]$;  
       combining function $F$;  
       level $\alpha \in (0,1)$;  
       number of permutations $B$;  
       conformal $p$-values $(p^{(y_{i})}_i)_{i\in \range{m}}$.
}

\For{each possible allocation $h=:(h_k)_{k \in [K]}$ such that $0\leq h_k \leq m$ and $\sum_{k=1}^{K} h_k = m$
}{
    define $z=z(h)=(z_{i})_{i \in [m]} \in [K]^{m}$ as any element such that $m_{k}(z)=h_k$ for all $k \in [K]$;

    \For{each $b\in [B]$}{
        
        Generate a random permutation $\pi_b$ of $[n+m]$;
        
        Compute null conformal $p$-values:
        \[
        \hat{p}^{(z_{i})}_{i,b} \gets \frac{1+\sum_{j\in \mathcal{D}^{(z_{i})}_{{\tiny \mbox{cal}}}} \ind{\pi_b(j) \geq \pi_b(n+i)} }
        {|\mathcal{D}^{(z_{i})}_{{\tiny \mbox{cal}}}|+1}
        \]
        for $i \in [m]$;

        Compute combined statistic:\\
        $\xi^h_b \gets \Cp((\hat{p}^{(z_{i})}_{i,b},i\in \range{m}))$;
    }

    Compute threshold:\\
    $t^h \gets \xi^h_{ ( \lfloor (B+1)\alpha \rfloor )  }$,\\ where $\xi^h_{(1)} \leq \ldots \leq \xi^h_{(B)}$ are the sorted values  of  $\xi^h_1, \ldots, \xi^h_{B}$ and $\xi^h_{(0)}:=-\infty$;
}

Construct batch prediction set:\\
$\mathcal{C}^m_{t,F}  \gets \{ y  
\in \range{K}^m\::\:  \Cp((p^{(y_{i})}_i)_{i\in \range{m}}) \geq t^{(m_k(y))_{k\in\range{K}}} \}$;

\SetKwInOut{Output}{Output}
\Output{Batch prediction set $\mathcal{C}^{m}_{t,F}$.}

\caption{Constructing a batch prediction set using conformal $p$-values combination}
\label{alg:general}
\end{algorithm}
The proof is provided in \S~\ref{proof:gen}.
The method  is very flexible: combined with adaptive Simes combination $F_{\mbox{\tiny Simes}}$, any estimator $\hat{m}_0(\cdot)$ can be used, see detailed suggestions in \S~\ref{sec:detailAdaptive}. Since there is not one uniformly best estimator, and which estimator to use depends on the unknown properties of the data at hand,  it is possible to take as $\hat{m}_0(\cdot)$ the smallest of several estimators of $\hat{m}_0(\cdot)$.  More generally, any $p$-value combination can be used, for instance the Fisher combination
\begin{equation}
    \label{FFisher}
    \Cp_{\mbox{\tiny Fisher}}((p^{(y_{i})}_i)_{i\in \range{m}})= T\Big(-2\sum_{i \in \range{m}} \log(p^{(y_{i})}_i)\Big),
\end{equation}
where $T$ is the survival function of a $\chi^2(2m)$ distribution. The corresponding method is referred to as {\it Fisher} batch prediction set in what follows.
We refer to \cite{HellerSolari23}, and references within,  for more examples of such combining functions.

\subsection{Shortcut for computing bounds}\label{sec:shortcut}

Computing naively the bounds $[\ell_{\alpha}^{(k)}, u_{\alpha}^{(k)}]$ in \eqref{lbub} incurs exponential complexity and thus is difficult when both $K$ and $m$ increase.  
A pseudoalgorithm  for a computational shortcut, which reduces the time complexity for calculating the bounds from $O(K^m)$ to 
$O(K\times m^2)$, is given in \S~\ref{sec:shortcut_supplementary}. 
This shortcut is exact when $K = 2$ and the scores produced by the machine learning model are probabilities, i.e. they satisfy the relationship $S_k(x_{n+i}) = 1-S_{3-k}(x_{n+i})$ for $k \in \{1,2\}$ and $i \in [m]$. 
However, when $K > 2$ or when arbitrary scores are used, the shortcut may become conservative, resulting in wider bounds but never narrower ones. This ensures that the coverage guarantee of at least $1-\alpha$ probability is
maintained. In Appendix \ref{appendix-sim-Gaussian} we examine the performance of the shortcut in our numerical experiments. Interestingly, the bounds using the shortcut are almost identical to the bounds derived from the batch prediction set for Simes (see \S~\ref{appendix-sim-Gaussian}).  

From the bounds produced by the shortcut, it is straightforward to produce a conservative batch prediction set. The size of the set is the sum of all valid assignments of $( m_1,  \dots, m_K )$ occurrences, where $ \ell^{(k)}_\alpha \leq m_k \leq u^{(k)}_\alpha$ for each $ k \in \{1, \dots, K\} $, and \( m_1 +  \dots + m_K = m \), with each valid assignment counted by the multinomial coefficient \( \binom{m}{m_1, m_2, \dots, m_K} \), see \S~\ref{sec:shortcut_supplementary}  for more details.

Finally, we note that since for any $y\in \range{K}^m$, the rejection by Bonferroni necessarily entails rejection using Simes, then we can first apply the Bonferroni procedure, and then apply the suggested shortcut for  Simes on the $(K-R_{1})\times \dots \times (K-R_{m})$ remaining partitions, where $R_{i}$ are the number of conformal $p$ values at most $\alpha/m$ for the $i$-th example of the batch.

\section{Method using batch scores}\label{sec-batchscores}

Thus far,  we have considered methods that  combine conformal $p$-values. Next, we  present a general method for guaranteeing \eqref{aim}-\eqref{aimcond} using any function that combines the non-conformity scores of the batch.  We suggest a specific function, the estimated likelihood ratio (LRT) statistic, which has been successfully used in hypothesis testing and has asymptotic optimality properties \citep{LR2005b}. We show in \S~\ref{sec:xp} that the batch-score algorithm using the estimated LRT statistic has excellent power, but also non-negligible increased computational complexity, compared with the suggested methods that are based on combining conformal $p$-values, as shows in \S~\ref{sec:LRTbadcomplex}. The added computational complexity is due to the fact that the null distribution of the estimated LRT statistic depends on the actual scores. In contrast, the null distribution of the combination of conformal $p$-values does not depend on the actual scores (it  does, however, depend on the number of examples from each class in the calibration set for the class conditional model).

Let $G((x_{i})_{i\in \range{m}}, (y_{i})_{i\in \range{m}})$ denote the batch-score function. We suggest
$G((x_{i})_{i\in \range{m}}, (y_{i})_{i\in \range{m}}) = \prod_{i=1}^m\frac{ \max_{k\in [K]} (1-S^{(k)}(x_{i})) }{1-S^{(y_{i})}(x_{i})}. $
So our test statistic, called the {\it estimated LRT statistic}, for testing the null hypothesis that $(Y_{n+i})_{i\in \range{m}}= (y_{i})_{i\in \range{m}}$, is
 $G((X_{n+i})_{i\in \range{m}}, (y_{i})_{i\in \range{m}}).$
This test statistic is expected to have excellent   
when  $1-S^{(k)}(X_{n+i})$ is close to the probability of observing $Y_{n+i}=k$ given $X_{n+i}$. To see this, suppose the true (unknown) batch label vector is $\tilde{y}$.
Then the approximate likelihood of observing $(X_{n+i})_{i\in \range{m}}$ together with the true $\tilde{y}$ or together with the null $y$ is, respectively, 
$\Pi_{i=1}^m \left(1-S^{(\tilde{y}_i)}(X_{n+i})\right) \mP(X_{n+i})$ or $   
 \Pi_{i=1}^m \left(1-S^{(y_{i})}(X_{n+i})\right)\mP(X_{n+i}),
$ 
where $\mP(X_{n+i})$ denotes the density of $X_{n+i}$ taken at point $X_{n+i}$ (when it exists and with a slight abuse of notation). So the approximate likelihood ratio is $\Pi_{i=1}^m{ \left(1-S^{(\tilde{y}_i)}(X_{n+i})\right)}/{\left(1-S^{(y_{i})}(X_{n+i})\right)}.$  The numerator is evaluated using the maximum likelihood estimate for $\tilde{y}$ to obtain 
$G((X_{n+i})_{i\in \range{m}}, (y_{i})_{i\in \range{m}}).$

\begin{algorithm}[!htb]
\small
\SetKwInOut{Input}{Input}
\Input{Calibration and test samples data $(X_i,Y_i)_{i\in \range{m}}$, $(X_{n+i})_{i\in \range{m}}$; a batch score function $G((x_{i})_{i\in \range{m}}, (y_{i})_{i\in \range{m}})$; level $\alpha \in (0,1)$; the number of permutations $B$
}

Initialize $C_{\alpha,G}^m \gets \emptyset$

\For{each $y=(y_{i})_{i\in \range{m}}\in \range{K}^m$ }{

    \For{each $b\in [B]$}{

        Sample $m$ indices from the vector $(Y_1,\ldots,Y_n, y_{1}, \ldots, y_{m}).$ The $m$ indices are sampled with the restriction that the frequency of the classes in the `test' sample is  $\left(m_k((y_{i})_{i\in [m]})\right)_{k\in \range{K}}$. Let $((x'_{i})_{i\in \range{m(b)}}, (y'_{i})_{i\in \range{m(b)}})$ denote the vectorized data in the `test' sample.

        Compute the $b$th null batch score $G_b:=G((x'_{i})_{i\in \range{m(b)}}, (y'_{i})_{i\in \range{m(b)}}).$
    }
 
  The $p$-value for testing that $(Y_{n+i})_{i\in \range{m}}= y$, 
  is $$p^{(y)} = \frac{1+\sum_{b=1}^{B}\ind{G_b\geq  G((X_{n+i})_{i\in \range{m}},(y_{i})_{i\in \range{m}})}}{B+1}.$$  

If $p^{(y)}>\alpha$ then $C_{\alpha,G}^m \gets C_{\alpha,G}^m \cup y.$ 
}

\SetKwInOut{Output}{Output}
\Output{Batch prediction set $\mathcal{C}^{m}_{\alpha,G}$.}

\caption{Constructing a batch prediction set using batch scores}
\label{alg:batchscores}
\end{algorithm}

\begin{proposition}\label{th:batchscorescontrol}
The  coverage of the batch prediction set $\mathcal{C}^m_{\alpha,G}$ provided in Algorithm \ref{alg:batchscores}
 is at least $1-\alpha$ both in the iid model and in the conditional model.  For the iid model, the restriction in line 4 of Algorithm \ref{alg:batchscores} is not necessary. 
\end{proposition}

The proof follows from standard theory on permutation tests, see, e.g., Theorem~2.4 in \cite{angelopoulos2024theoretical}. Specifically, for the class conditional model,  the result follows since the non-coverage probability is equal to $\mP\left(p^{(y)}\leq \alpha \mid (Y_{n+i})_{i\in \range{m}}=y\right)$, which is $\leq \alpha$ because  the $B$ null batch scores generated for a specific $y$ in lines 3--6 of Algorithm \ref{alg:batchscores}  are exchangeable with the batch score test statistic when $(Y_{n+i})_{i\in \range{m}}=y$.

\begin{remark}
We presented a computationally efficient shortcut for the bounds when using the $p$-value combining methods in \S~\ref{sec:shortcut}, and demonstrated in  \S \ref{subsec-sim-large-batches} that the bounds can be useful when $m$ is large. Unfortunately, this shortcut is not possible for the estimated LRT statistic, since its (permutation) null distribution varies with the vector $y$ being tested.     
\end{remark}

\section{Experiments}\label{sec:xp}

We study the performances of the following procedures: Bonferroni~\eqref{PredBonf}, Simes~\eqref{PredSimes}, Storey Simes (adaptive Simes~\eqref{PredSimesAdapt} with the Storey estimator~\eqref{pi0estiStorey} where $\lambda=1/2$), Fisher ~\eqref{FFisher}, and the estimated LRT (\S~\ref{sec-batchscores}). We use the conditional setting, with class calibrated conformal $p$-values~\eqref{standardpvalue}. The score function $S_k(x)$ is given by an estimator of the probability that $k$ is not the label of observation $x$.

\subsection{Gaussian multivariate setting}\label{subsec-BVN}

We illustrate the substantial advantage of the new methods over Bonferroni for inferring on batch prediction sets in settings with different signal to noise ratio (SNR). 
We consider $K=3$ categories, where the distribution of the covariate in each category is bivariate normal. The centers of the three categories are (0,0), (SNR,0), and (SNR,SNR). So the classification problem is more difficult  as the SNR decreases. See \S~\ref{appendix-sim-Gaussian} for one example of this data generation. 

In Table \ref{tabBVN} we show the results for a range of SNR values, in the setting with $n= 1200$, $m=6$, and the calibration set and test sets have a fixed and  equal number of examples from each of the three categories. As expected, using Simes is uniformly better than using Bonferroni. Adaptive Simes is far superior to both when the SNR is at most 2.5.
For strong signal, using Simes produces slightly narrower  batch prediction sets than using adaptive Simes. Fisher provides the narrowest batch prediction sets when the SNR is low. However, when the SNR is strong its performance is much worse even than Bonferroni. Thus, using Fisher is only recommended in situations where  the batch prediction set is expected to be large. 
The estimated LRT statistic outperforms all the other practical methods when the SNR is $\geq 2.5$. Moreover, its batch prediction sets are only slightly wider than those obtained using the Fisher combining method when the SNR is $\leq 2.5$. All other methods, however, require less than $1/100$ of the running time that is needed for the estimated LRT method. Thus it is the preferred method overall only if the practitioner has sufficient computing power. 

\setlength{\tabcolsep}{3pt}
{\small
 \begin{table}[h!]
 \centering
\begin{tabular}{|r|rrrrr|}
   \hline
   &  &  & Storey  &  & estimated\\
   SNR & Bonf  &  Simes &  Simes & Fisher & LRT \\ 
   \hline
 1.00 & 410.52 & 384.66 & 327.55  & {\bf 274.36} & 277.58 \\ 
 1.50 & 217.69 & 187.36 & 142.98  & {\bf 107.85} & 113.88 \\ 
 2.00 & 81.63 & 65.52 & 49.12  & {\bf 37.40} & 37.76 \\ 
 2.50 & 23.51 & 17.98 & 15.08  & 14.60 & {\bf 11.91} \\ 
 3.00 & 6.42 & 5.35 & 5.18  & 7.78 & {\bf 4.35} \\ 
 3.50 & 2.46 & 2.24 & 2.27  & 5.20 & {\bf 2.02} \\ 
 4.00 & 1.39 & 1.34 & 1.37  & 4.38 & {\bf 1.28} \\ 
 4.50 & 1.07 & 1.06 & 1.08  & 4.03 & {\bf 1.03} \\ 
    \hline
 \end{tabular}
 \caption{Average batch prediction set size at each SNR for different combining methods (columns). In bold, the combining method that produces the narrowest prediction region. $\alpha = 0.1$ and $2000$ replications. For a single data generation, the average running time on a standard PC was less than 0.05 seconds for all methods but the estimated LRT, which has an average running time of 5.7 seconds.   \label{tabBVN}}
 \end{table}
}

In Appendix \ref{appendix-sim-Gaussian}, Table \ref{tabBVN2LRT}, the non-coverage probability is shown for each method, as well as the results
for median Simes (adaptive Simes~\eqref{PredSimesAdapt} with the 'median' estimator, see \eqref{pi0estiQuantile}), 
and oracle Simes that 
uses the true (unknown in practice) $m_0(y)$. As expected, oracle Simes leads to the narrowest batch prediction sets. For low SNR, the oracle statistic with the true $m_0(y)$ is far lower than all the practical test statistics. This suggests that optimizing the choice of estimate of $m_0(y)$ may improve the inference. As mentioned at the end of \S \ref{sec:numapprox}, one direction may be to use for $\hat m_0(y)$ the minimum of several good candidates. More generally, we could also use as combining function the minimum batch $p$-value from different combining functions. 
We leave for future work the investigation of the benefits from such a compound procedure. 

In Appendix \ref{appendix-sim-Gaussian}, Tables \ref{tabBVNbounds} and \ref{tabBVNbounds2}, we  show the bounds for each SNR. The bounds using Simes are slightly tighter than using Bonferroni. Interestingly, there seems to be no clear benefit for the bounds in using adaptive Simes or Fisher. However, the bounds using the method of combining batch scores with the estimated LRT are tighter when $SNR>=2.5$. Appendix \ref{subsec-sim-large-batches} shows bounds in settings with $m$ large, which are computed using the available shortcut for the methods that combine conformal $p$-values, described in \S~\ref{sec:shortcut_supplementary}. 

\subsection{Real data sets}\label{sec:data}

We use two datasets commonly used in the machine learning community, the USPS dataset \citep{LeCun1989HandwrittenDR} with  $K=10$ digits and the CIFAR-$10$ dataset \citep{Krizhevsky2009LearningML} restricted to $K=3$ classes: ``birds", ``cats" and ``dogs".
For the USPS dataset, the  calibration and batch sample sizes are $n=700$ and  $m=3$, respectively. The score functions are derived using a support-vector classifier with the linear kernel (trained with $2431$ examples).  
For the CIFAR-$10$ dataset, the calibration and batch sample sizes  are $n=2000$ and  $m=5$, respectively. We use a convolutional neural network with $8$ layers, trained with $5666$ examples with $10$ epochs and the `Adam' optimizer. 

 The coverage and   violin plots of the size of the batch prediction sets for the different methods are displayed in Table~\ref{tabCovRealData} and  Figure~\ref{fig:Power}, respectively.  
For the USPS data set, the results strongly depend on the level $\alpha$ considered. For $\alpha=0.01$, the batch prediction sets are all large and Fisher and LRT methods are the best. For $\alpha=0.05$ and $\alpha=0.1$, the best batch prediction sets are the LRT and the Simes methods. 
For the CIFAR data set, the sizes of the prediction sets are  large for all $\alpha$ levels considered (meaning that the prediction task is more difficult on this data set). The Fisher combination and LRT method are comparable and better than the other ones, followed by the Storey Simes method. These findings corroborate those of the previous section. Other qualitatively  similar results are obtained in \S~\ref{sec:appenddata}.
Finally, we note that for $\alpha=10\%$ the estimated coverage is less than one standard error (SE) below $90\%$.

\setlength{\tabcolsep}{3pt}
{
\scriptsize
 \begin{table}[h!]
 \centering
\begin{tabular}{|r|rrr|rrr|}
   \hline
   & \multicolumn{3}{|c|}{USPS} & \multicolumn{3}{|c|}{CIFAR}\\
Coverage & 0.99 & 0.95 & 0.90 & 0.99 & 0.95 & 0.90\\
   \hline
Bonf. & 1& 0.966 & 0.932  & 0.993 & 0.958 & 0.896 \\
Simes & 1& 0.966 & 0.928 & 0.993 & 0.958 & 0.890 \\
Storey & 1 & 0.986 & 0.942& 0.993 & 0.954 & 0.893  \\
LRT & 0.990 & 0.962 & 0.916& 0.992 & 0.957 & 0.892  \\
Fisher & 0.994 & 0.972 & 0.932  & 0.994 & 0.953 & 0.892\\
    \hline
 \end{tabular}
 \caption{Estimated coverage for $\alpha\in\{1\%, 5\%, 10\% \}$ and data sets USPS and CIFAR (in columns) and different procedures (in rows). Based on 500 simulation runs; all the standard errors are below $0.014$
 }\label{tabCovRealData}
 \end{table}
}

\begin{figure}[h!]
 \begin{center}
\begin{tabular}{cc}
 USPS data set & 
 CIFAR data set\vspace{-2mm}\\
\includegraphics[scale=0.18]{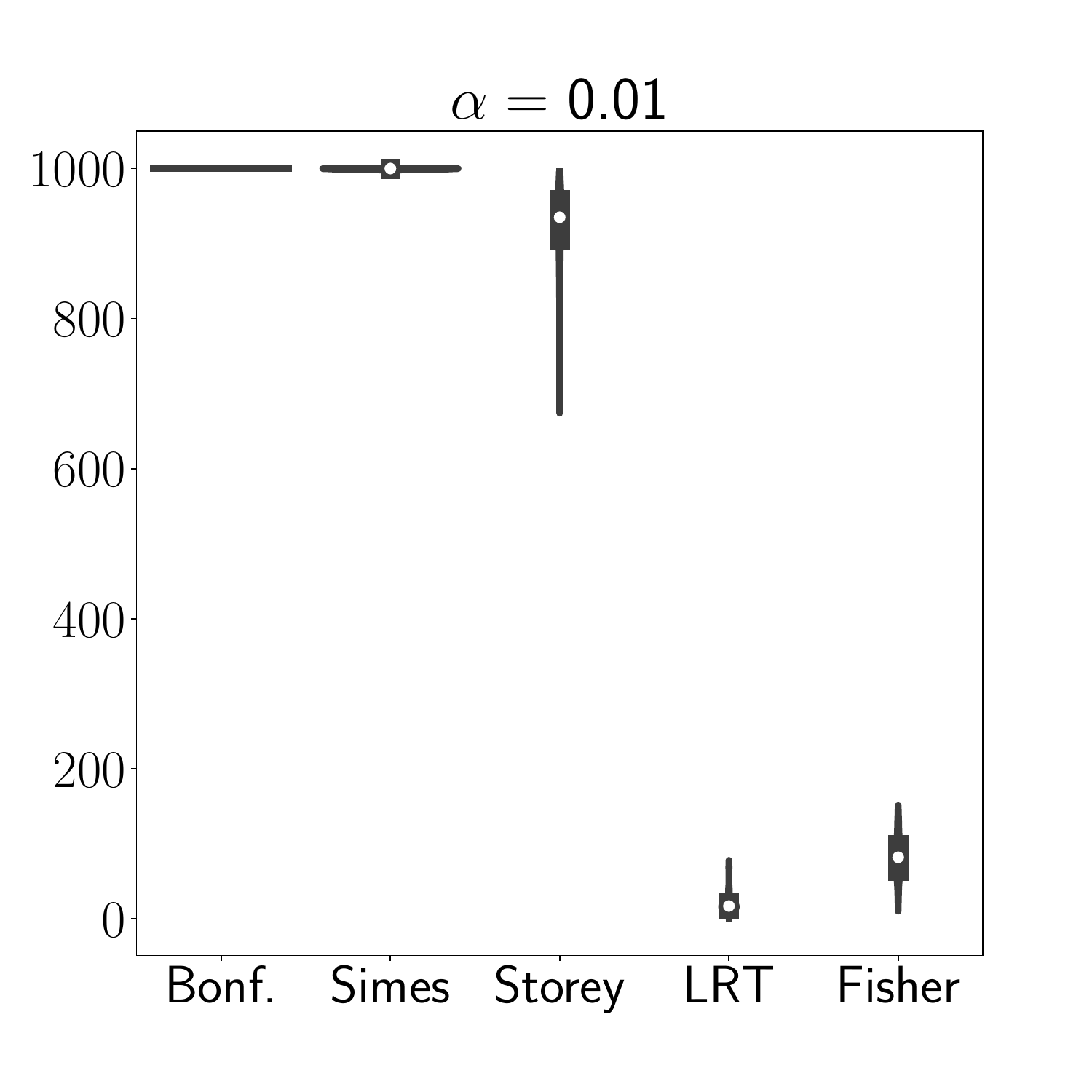}    &  
  \includegraphics[scale=0.18]{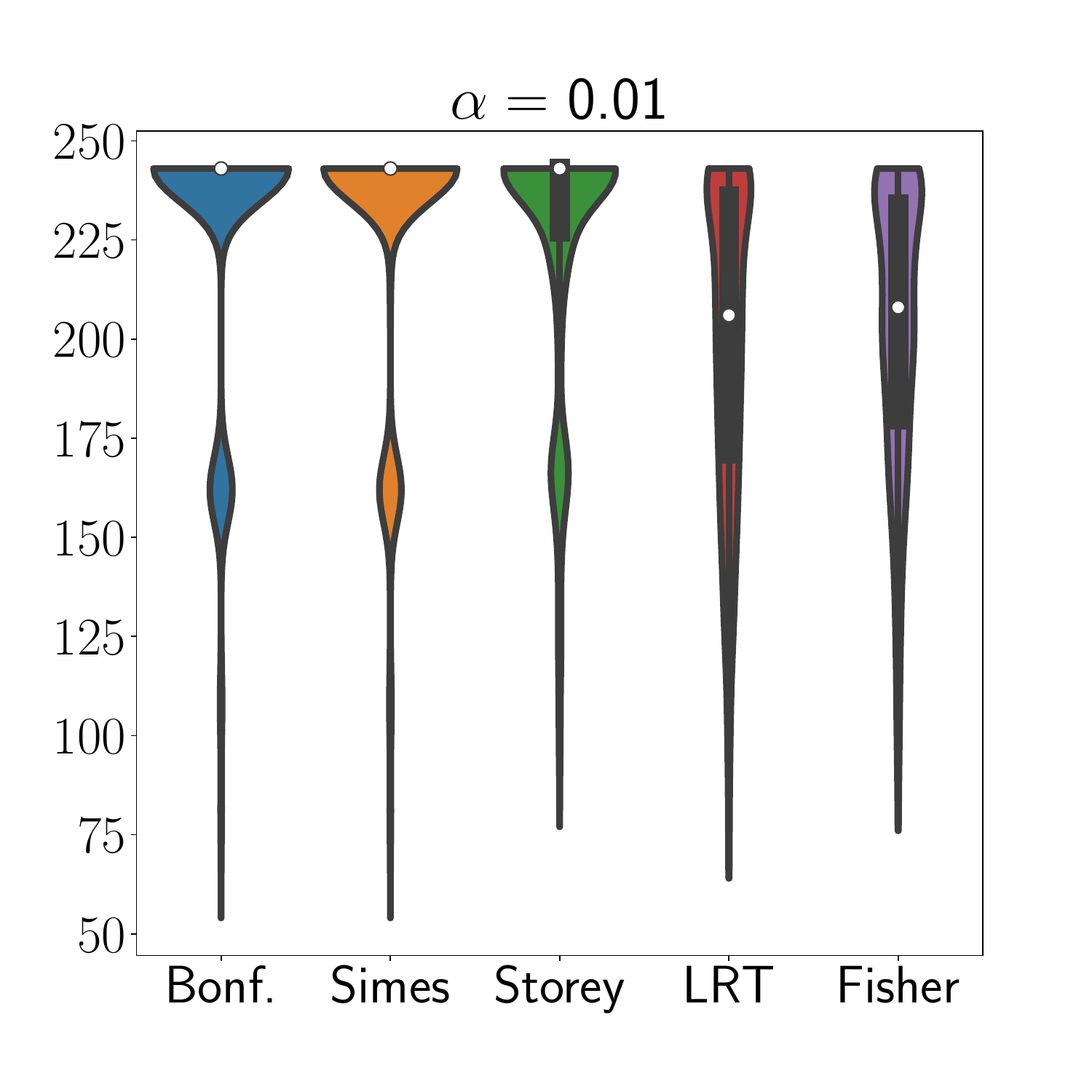} \vspace{-7mm}
\\ 
\includegraphics[scale=0.18]{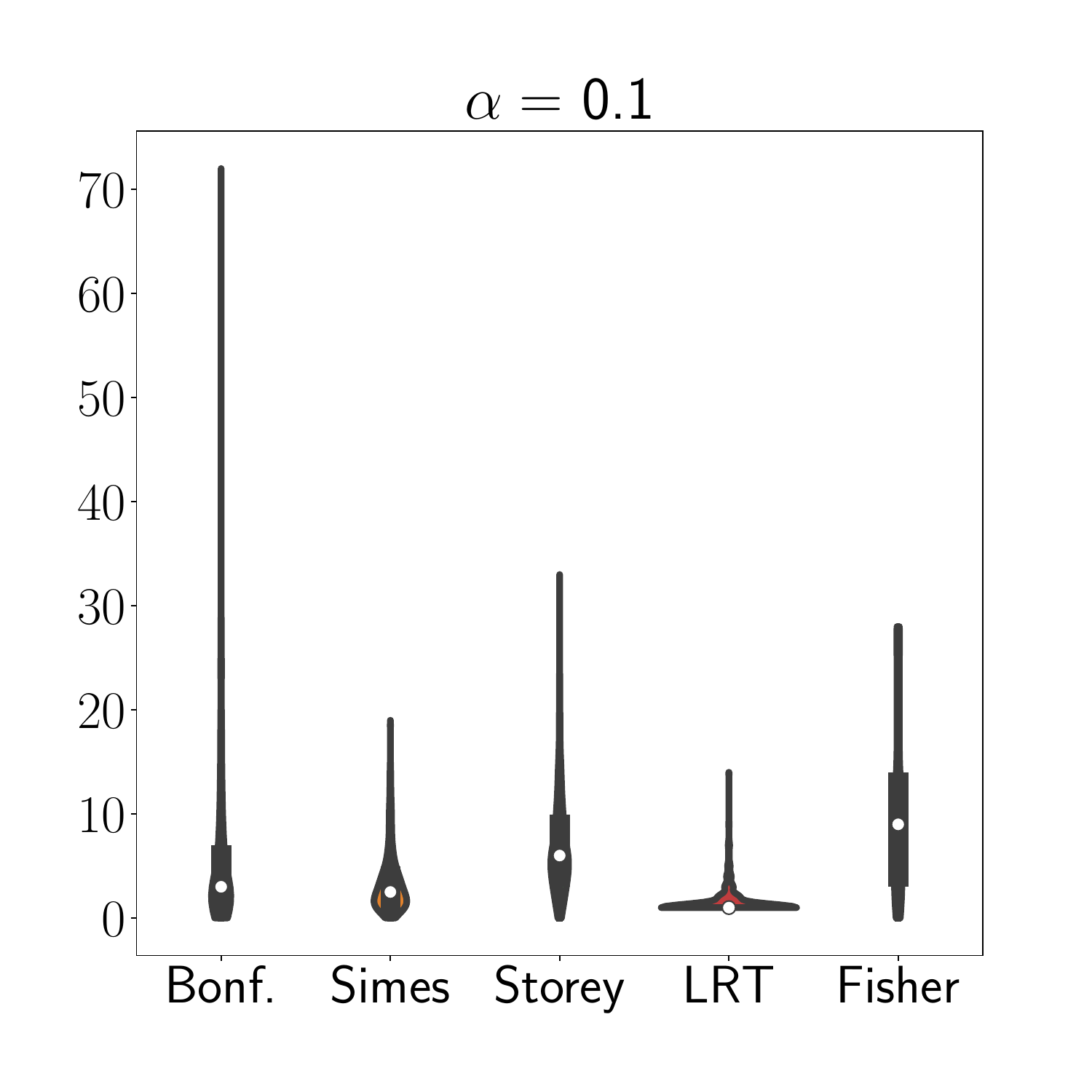}    &  
 \includegraphics[scale=0.18]{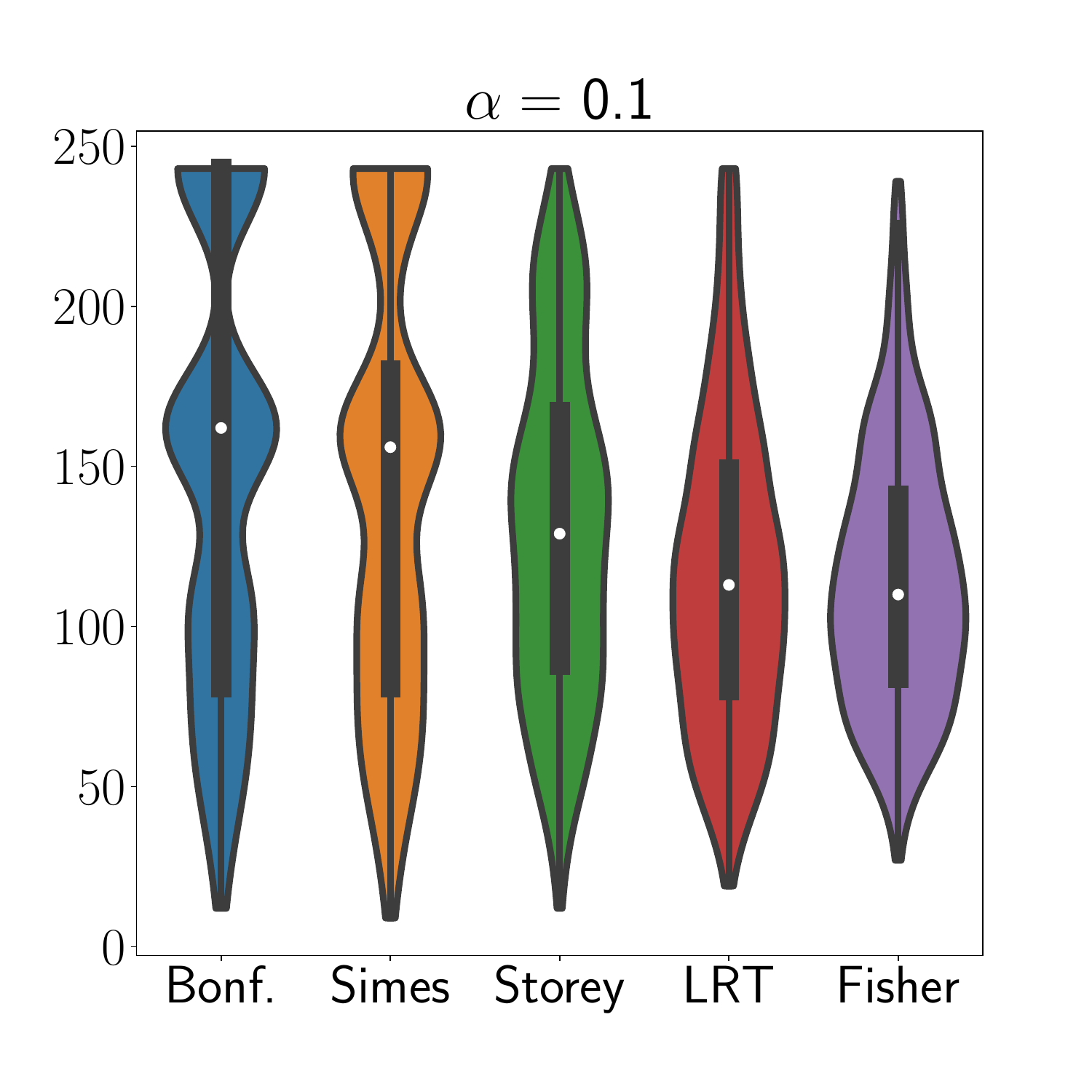} 
\end{tabular}
\end{center}
\caption{\label{fig:Power} 
Violin plots for the size of the batch prediction set for $\alpha\in\{1\%, 10\% \}$ (rows) and data sets USPS and CIFAR (columns), see details in the text. The white dot inside the inter-quartile box of the violin plot is the median. The plots for $\alpha=5\%$ (omitted) are qualitatively similar to the plots for $\alpha =10\%$.
}
\end{figure}

\section{Discussion}
For a batch of test points we provide,  with a $(1-\alpha)$ coverage guarantee, a batch prediction set or bounds for the different classes, by testing that the batch label vector is  $y\in \range{K}^m$ using two approaches: conformal $p$-value  and batch score combination tests.
We demonstrated that we can get  much narrower batch prediction sets than using Bonferroni. For the bounds, the advantage over Bonferroni is modest, but nevertheless with Simes the improvement over Bonferroni is uniform.  A further improvement is noted using the estimated LRT, the statistic suggested for the batch score combination test. However, the computation complexity is much larger with the estimated LRT, since  the permutation null distribution of the batch score combination test depends on the $n+m$ scores. This is in contrast with Simes (and all other conformal $p$-value combination tests), for which  the null distribution  depends only on the number of examples from each class in the calibration sample, not on the realized scores, so in this sense it is distribution free.

As our numerical experiments show, there is no best method for combining the conformal $p$-values. Broadly, Fisher is good for weak signal and adaptive Simes is a better choice for the remaining cases. We can also consider combining the two using the algorithm in \S~\ref{sec:numapprox} (a reasonable combining method is to take the minimum of the Fisher based and median Simes \S~\ref{sec:adapt}). The lack of an overall best combining method is not surprising, since for every combining function that is reasonable there is a data generation that is optimal for it in a specific sense \citep{Birnbaum54,heard2018choosing}.

Our examples concentrated on a fairly small batch size $m$ and class size $K$. For $m$ or $K$ large we suggested,  instead of testing all $y\in \range{K}^m$ to produce the bounds, to use a shortcut  with computational complexity $O(K\cdot m^2)$. It is exact for $K=2$, and appears tight for $K>2$ in our numerical experiments.   
Specifically for Simes type combination tests, computationally efficient algorithms have been developed in the multiple testing literature \citep{Goeman19, Andreella23}. For large $m$ and $K$ it may be  worthwhile to  consider adapting their algorithms to our set-up for greater computational efficiency. A great  challenge is to provide, for $m$ or $K$ large, efficient algorithms that directly target approximating the batch prediction set (rather than via the bounds). Relatedly, an open question is how to concisely summarize the batch prediction set when it is large.

\section*{Acknowledgements}
The authors acknowledge  grants ANR-21-CE23-0035 (ASCAI) and ANR-23-CE40-0018-01 (BACKUP) of the French National Research Agency ANR,  the Emergence project MARS of Sorbonne Universit\'e, and Israel Science Foundation grant no. 406/24.

\bibliography{biblio}

\newpage
\appendix

\section{Estimators for $m_0(y)$}\label{sec:detailAdaptive}

This section complements \S~\ref{sec:adapt}.

\subsection{Storey and quantile type estimator}

We first provide the general formula \eqref{equm0y} for the Storey-type estimator $\hat{m}_0(y)$ that can accommodate any choice of $\lambda\in (0,1)$.

First, in the iid model, the modification corresponds to a simple rounding: 
$$
\hat{m}_0(y) := (1-\lambda)^{-1}\Big(1+\sum_{i\in \range{m}} \ind{p^{(y_{i})}_i\geq \lfloor (n+1)\lambda\rfloor/(n+1) }\Big).
$$
Clearly, the formula reduces to \eqref{pi0estiStoreysimple} when $(n+1)\lambda$ is an integer.

In the conditional model, the modification corresponds to a rounding on each class: 
\begin{align}
&\hat{m}_0(y) := \kappa(y) \left(1+\sum_{k\in \range{K}} \sum_{i:y_{i}=k} \ind{p^{(k)}_i\geq \lambda_{k} }\right),\label{pi0estiStorey}
\end{align}
with $\lambda_{k}=\frac{\lfloor \lambda(n_k+1)\rfloor}{n_k+1}$ for $k\in \range{K}$. 
Above, the parameter $\kappa(y)$ is given by
\begin{equation}\label{equkappaClass}
\kappa(y)= \Big(1-\min_{k\in\range{K}}\lambda_k\Big)^{\frac{1}{m-1}}  \times \prod_{k\in\range{K}} \Big(\frac{1}{1-\lambda_k}\Big)^{\frac{m_k(y)}{m-1}},
\end{equation}
where we recall that $m_k(y)$ is given by \eqref{countk}.
 When $(n_k+1)\lambda$ is an integer for each $k\in \range{K}$, then $\lambda_k=\lambda$, $\kappa(y)=(1-\lambda)^{-1}$, and  the formula reduces to \eqref{pi0estiStoreysimple}.

Second, the `quantile' estimator \citep{BKY2006} is given by
\begin{equation}\label{pi0estiQuantile}
\hat{m}_0(y) = \frac{m-\l+1}{1-p_{(\l)}(y)},
\end{equation}
for some $\l\in \range{m}$, typically $\l=\lceil m/2\rceil$ for the `median' estimator.
The adaptive Simes batch prediction set using the quantile estimator satisfies the correct coverage in the iid model by \cite{marandon2024adaptive}.
Proving such a coverage result in the class-conditional model is an open problem, although our numerical experiments seem to indicate that the control is maintained in that case (for the median estimator).\footnote{Recall that a valid coverage for the quantile Simes procedure can be ensured by using the empirical method of \S~\ref{sec:numapprox} (not used in our numerical experiments).}

\subsection{Choice of the tuning parameters}\label{sec:choosinglambda}

We discuss the choice of the parameter $\lambda\in(0,1)$ in the Storey estimator \eqref{pi0estiStorey} (procedure denoted by $\lambda$-S for short), and of the parameter $\ell$ in the quantile estimator \eqref{pi0estiQuantile}. In the latter, we let $\ell=\lceil qm \rceil$ and discuss rather the choice of $q$ (the corresponding procedure is denoted by   $q$-Q  for short).

The results are displayed in Figure~\ref{fig:Power} for the USPS and CIFAR data sets. For the Storey estimator, while no choice of $\lambda$ seems to be universally the best, this choice affects the performance of the method: 
we observe that choosing 
$\lambda=1/2$ is better for the data set with weak signal (CIFAR) 
while choosing $\lambda$ small (and of the order of $\alpha$) is better for the data set with strong signal (USPS). This is coherent with the intuition behind the Storey estimator which implicitly supposes that the $p$-values above $\lambda$ are under the null. For the quantile procedure, the conclusion is similar to some extent, but the median procedure seems to have a good behavior for both data sets. Roughly, the latter can be seen as a Storey procedure with an adaptive choice $\lambda=p_{(\lceil m/2\rceil)}(y)$, which is able to better adapt to the signal strength. These conclusions corroborate previous findings in the multiple testing literature under independence \citep{BKY2006,BR2009}, see also \citep{Birnbaum54,heard2018choosing}.

\begin{figure}[h!]
\begin{tabular}{cc}
\hspace{-1cm} USPS data set & \hspace{-1.2cm}CIFAR data set\vspace{-7mm}\\
\hspace{-1cm} \includegraphics[scale=0.25]{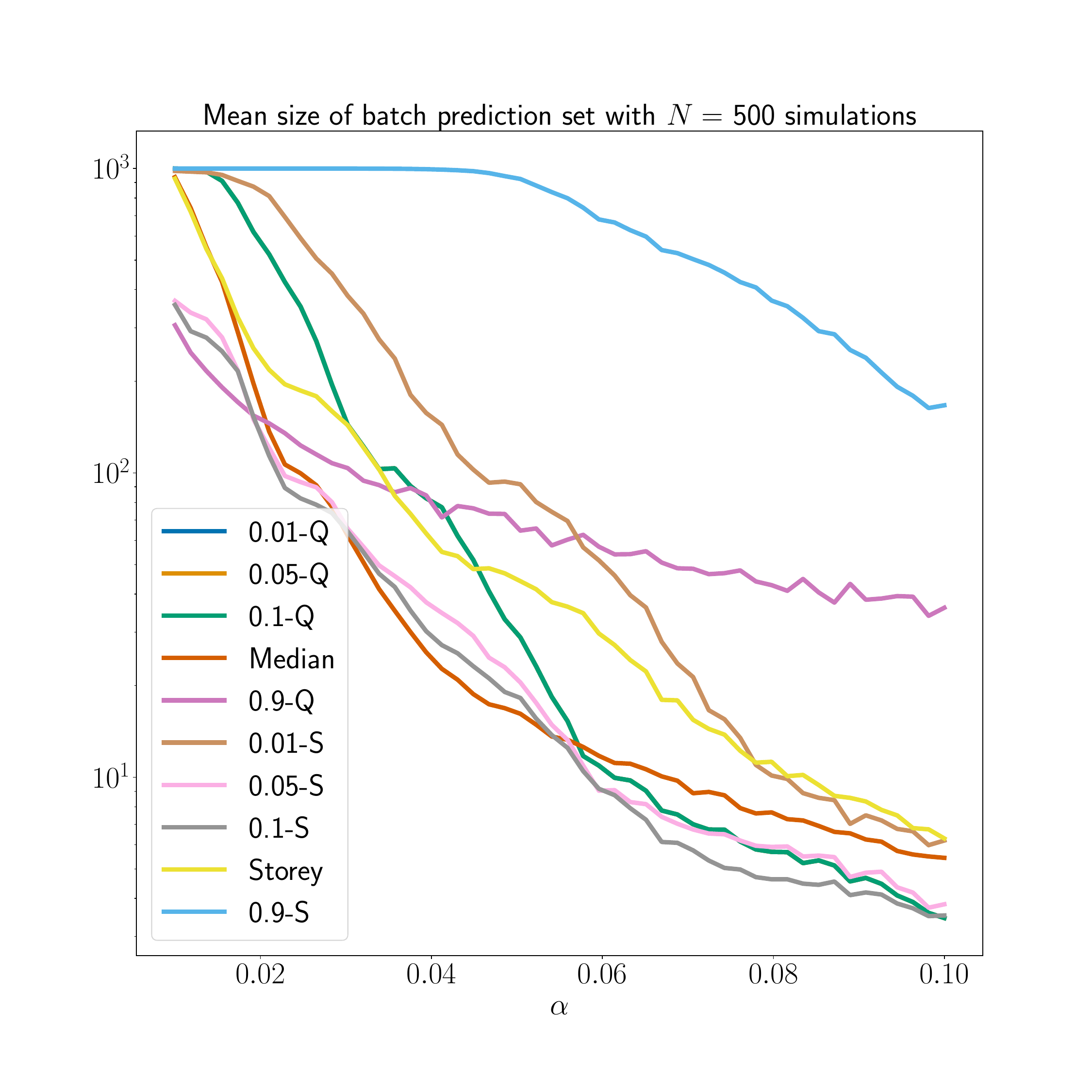}    &  \hspace{-1.2cm}  \includegraphics[scale=0.25]{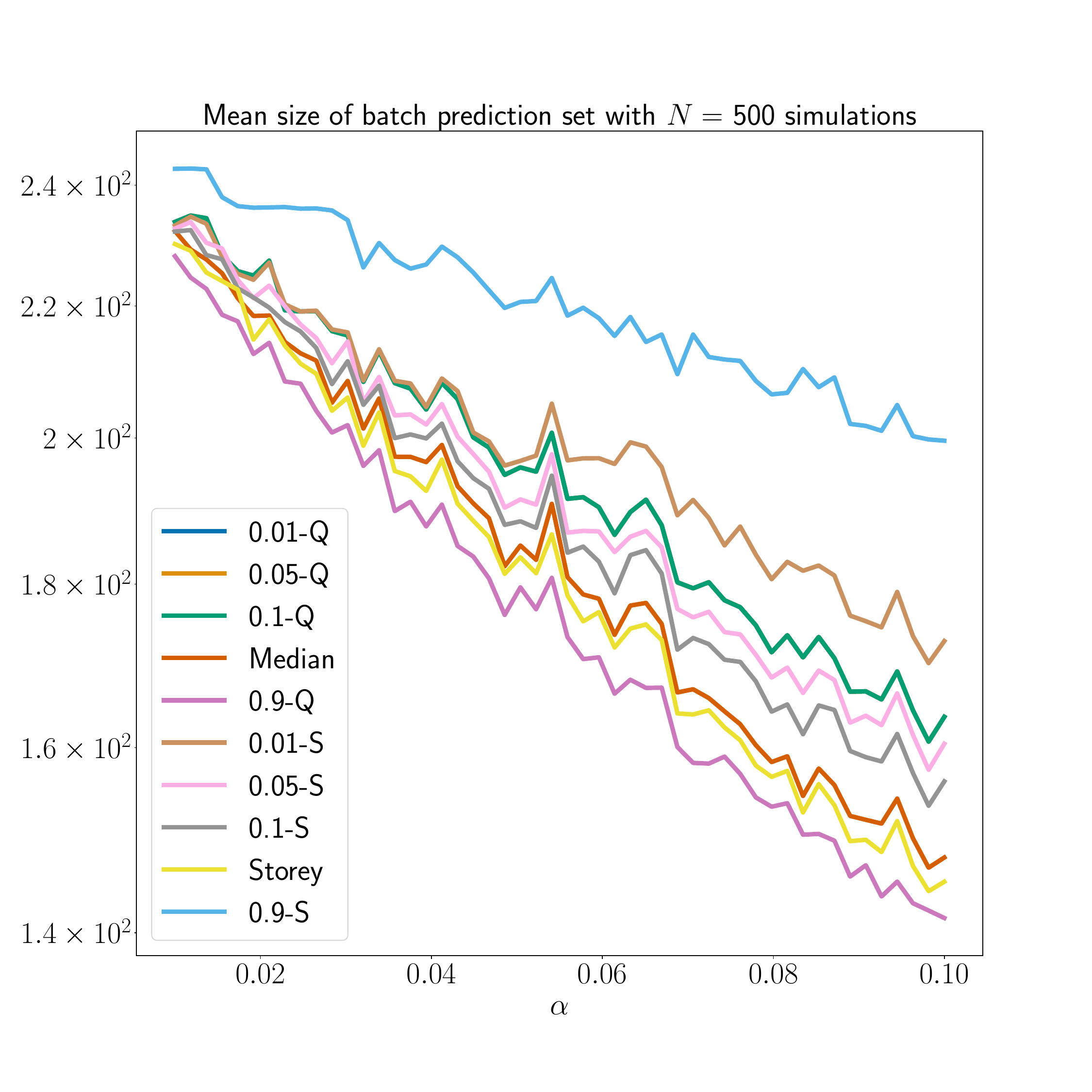} \vspace{-7mm}\\
\end{tabular}
\vspace{-0.3cm}
\caption{\label{fig:Comparison} 
 Averaged size of the batch prediction sets in function of $\alpha$ for different procedures (see text). Storey is $0.5$-S and Median is $0.5$-Q. Same setting as for Figure~\ref{fig:Power} .
}
\end{figure}

\section{Proofs}

In this section, we prove Theorems~\ref{th:Simes},~\ref{th:Simesexact}~and~\ref{th:SimesAdapt}. The proofs follow from previous literature for the iid model (and full-calibrated $p$-values):
\begin{itemize}
    \item Theorem~\ref{th:Simes} for the iid model is a consequence of \cite{BY2001} and of the fact that the full-calibrated $p$-values are PRDS \citep{bates2023testing} (see definition below);
    \item Theorem~\ref{th:Simesexact} for the iid model is a consequence of Corollary~3.5 in \cite{marandon2024adaptive};
    \item Theorem~\ref{th:SimesAdapt} for the iid model is a consequence of Corollary~3.7 in \cite{marandon2024adaptive}.
\end{itemize}
{Below, we extend these arguments to the case of  {\it the conditional model with class-calibrated $p$-values}. The main technical tool for the proof is  Lemma~\ref{lemmaPropConformalClass} (for comparison, we also recall Lemma~\ref{lemmaPropConformalFull} that was obtained for the iid case with full calibrated $p$-values)}. 
On an intuitive point of view, the main idea of this extension is that, conditionally on $(Y_{j})_{j\in \range{n+m}}$, each class-conditional conformal $p$-value $p^{(Y_{n+i})}_i$ depends on the $p$-values of the same class $(p^{(Y_{n+j})}_j)_{j\in \range{m}\backslash\{i\}:Y_{n+j}=Y_{n+i}}$ exactly in the same way as for the iid case, and are independent of the $p$-values of the other classes $(p^{(Y_{n+j})}_j)_{j\in \range{m}\backslash\{i\}:Y_{n+j}\neq Y_{n+i}}$.

Below, we write $p_i$ instead of $p^{(Y_{n+i})}_i$ for simplicity. Also, $n_i$ stands for $n_{Y_{n+i}}$ with a slight abuse of notation {(recall that $n_k$ is the cardinal of $\mathcal{D}^{(k)}_{{\tiny \mbox{cal}}}$)}.

\subsection{Proof of Theorem~\ref{th:Simes}}\label{sec:proofSimes}

It is sufficient to establish the following Simes inequality for class-calibrated $p$-values:
\begin{align}
&\P(\exists \ell \in [m],\:{p}_{(\ell)}\leq \alpha \ell/m \:|\: (Y_{j})_{j\in[n+m]})\leq \alpha\,.\label{equ:Simescond}
\end{align}
Since the families of class-calibrated $p$-values are marginally super-uniform (conditionally on  $(Y_{n+i})_{i\in[m]}$), see Proposition~\ref{prop:marginal}, and by classical FDR controlling theory \citep{BY2001},  it is enough to prove that the following PRDS property {on $m$} holds: for any nondecreasing\footnote{A set $D\subset [0,1]^m$ is nondecreasing if for $x=(x_j)_{1\leq j\leq m}\in D$ and $y=(y_j)_{1\leq j\leq m}\in \R^m$, $(\forall j\in \range{m}, x_j\leq y_j)$ implies $y\in D$.} set $D\subset [0,1]^m$, the function
$$
u\mapsto \P((p_i)_{i\in \range{m}}\in D\:|\: p_i=u, (Y_{j})_{j\in[n+m]}),
$$
is nondecreasing for all $i\in\range{m}$.

\begin{proposition}\label{prop:PRDS4Class}
    {In the conditional model,} the family of class-calibrated $p$-values is PRDS on $\range{m}$.
\end{proposition}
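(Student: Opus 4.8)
The plan is to reduce the claim to the already-established iid case by exploiting the conditional independence structure of the class-calibrated $p$-values across classes. Throughout I condition on $(Y_j)_{j\in\range{n+m}}$, so the class sizes $n_k=|\{j\in\range{n}:Y_j=k\}|$ and the batch counts $m_k=m_k(Y)$ are fixed. I partition the batch indices into blocks according to the true label: for $k\in\range{K}$ let $B_k=\{i\in\range{m}:Y_{n+i}=k\}$ and $\mathbf{p}^{(k)}=(p_i)_{i\in B_k}$. Each $p_i$ with $i\in B_k$ is computed from the common calibration scores $\{S_{Y_j}(X_j):Y_j=k\}$ together with the test score $S_k(X_{n+i})$, all of which are iid draws from the law of $S_k(X)$ given $Y=k$, whereas scores belonging to distinct classes are disjoint and independent. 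This is precisely the content of Lemma~\ref{lemmaPropConformalClass}: conditionally on the labels, the blocks $\mathbf{p}^{(1)},\dots,\mathbf{p}^{(K)}$ are mutually independent, and within block $k$ the joint law coincides with that of the $m_k$ iid-model conformal $p$-values built from a calibration sample of size $n_k$. The proof then rests on two facts: each block is PRDS on its own index set, and PRDS is preserved under independent concatenation.

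For the first fact, I note that within a single class the configuration is exactly the iid model treated by Lemma~\ref{lemmaPropConformalFull}: a calibration sample of size $n_k$ together with $m_k$ exchangeable test points, whose conformal $p$-values form a PRDS family. Hence, conditionally on the labels, $\mathbf{p}^{(k)}$ is PRDS on $B_k$ for every $k\in\range{K}$, with no further work required.

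For the second fact, fix a coordinate $i\in\range{m}$ and let $k=Y_{n+i}$ be its block. Given a nondecreasing set $D\subset[0,1]^m$, write $\mathbf{p}^{(-k)}$ for the collection of all blocks other than $k$ and, for each realization of $\mathbf{p}^{(-k)}$, let $D_{\mathbf{p}^{(-k)}}\subset[0,1]^{|B_k|}$ be the corresponding section of $D$ in the block-$k$ coordinates. Because $D$ is nondecreasing, every such section is a nondecreasing subset of $[0,1]^{|B_k|}$. Using the independence of the blocks, conditioning on $p_i=u$ leaves the law of $\mathbf{p}^{(-k)}$ unchanged, so that
\[
\P\big((p_j)_{j\in\range{m}}\in D \,\big|\, p_i=u,(Y_j)_{j\in\range{n+m}}\big)=\E\Big[\,\P\big(\mathbf{p}^{(k)}\in D_{\mathbf{p}^{(-k)}}\,\big|\,p_i=u,(Y_j)_{j\in\range{n+m}}\big)\,\Big],
\]
where the outer expectation is taken over the conditional law of $\mathbf{p}^{(-k)}$ given $(Y_j)_{j\in\range{n+m}}$. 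By the within-block PRDS of $\mathbf{p}^{(k)}$, the inner probability is nondecreasing in $u$ for each fixed section $D_{\mathbf{p}^{(-k)}}$, and taking expectation preserves this monotonicity. Therefore $u\mapsto\P((p_j)_{j\in\range{m}}\in D\mid p_i=u,(Y_j)_{j\in\range{n+m}})$ is nondecreasing, which is the desired PRDS property.

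I expect the main obstacle to be not the monotonicity itself---which is inherited verbatim from the iid result---but the careful justification of the conditional decomposition, namely that the class-calibrated $p$-values genuinely factor into independent blocks each distributed as iid-model conformal $p$-values. This is where Lemma~\ref{lemmaPropConformalClass} does the heavy lifting; the remaining steps, that sections of a nondecreasing set stay nondecreasing and that independent concatenation preserves PRDS, are then routine.
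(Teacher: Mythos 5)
Your proof is correct, and it rests on the same two structural facts as the paper's proof --- conditional independence of the $p$-values across classes and monotone dependence, within the class of the conditioning coordinate, of the remaining same-class $p$-values on $p_i$ --- but it packages them differently. The paper conditions in a single step on $p_i=u$, $W_i$, and the other-class $p$-values simultaneously; given all of that, the full vector is the deterministic map $(u,\Psi_i(u,W_i),(p_j)_{j:Y_{n+j}\neq Y_{n+i}})$, coordinatewise nondecreasing in $u$, so the conditional probability of a nondecreasing set $D$ is an indicator nondecreasing in $u$, and integrating over the ($u$-independent) law of $(W_i,(p_j)_{j:Y_{n+j}\neq Y_{n+i}})$ finishes the argument. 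You instead factor this into two reusable pieces: within-block PRDS (the iid-model result applied to class $k$ with calibration size $n_k$ and $m_k$ test points) plus the general fact that PRDS is preserved under independent concatenation, argued through sections of nondecreasing sets. The two routes are mathematically equivalent --- your inner-probability step is exactly what the paper's $\Psi_i$ computation establishes --- but yours is more modular and isolates a general closure property of PRDS, while the paper's is self-contained in that it never needs the iid-case PRDS as an external input. One small imprecision to fix: Lemma~\ref{lemmaPropConformalFull} does not itself assert PRDS (it only gives the representation, independence, and monotonicity properties), so ``no further work required'' overstates what it delivers; the within-block PRDS you invoke should either be cited from \cite{bates2023testing}, as the paper does for the iid model, or derived from Lemma~\ref{lemmaPropConformalFull}(i)--(ii) by the same conditioning-on-$W_i$ argument the paper runs for the class-calibrated case. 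This is a citation matter, not a gap in the logic.
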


The proof relies on the general property of Lemma~\ref{lem:PRDSgroup}, establishing that per-group PRDS for independent groups yields entire set PRDS.
\begin{proof}
    Let us work conditionally on $(Y_{j})_{j\in[n+m]}$ and consider the partition of $\range{m}$ given by $G_k=\{j\in\range{m}\::\:Y_{n+j}=k\}$ then we know that for each $k\in [K]$, $(p_j)_{j\in G_k }$ is a family which is PRDS on $G_k$ \citep{bates2023testing}. In addition, the $p$-values $(p_j)_{j\in G_k }$ and $(p_j)_{j\in G_{k'} }$ are independent for $k\neq k'$, because the calibration samples are not the same for each (since we use class-calibrated $p$-values). Hence, we can apply Lemma~\ref{lem:PRDSgroup} to conclude.
\end{proof}

\subsection{Proof of Theorem~\ref{th:Simesexact}}\label{sec:proofSimesexact}
{To establish the result, we use the well known relationship between the Simes inequality and the FDR control of BH procedure under the full null, see, e.g., \S~2.2 in \cite{barber2017p}.} 
Let us denote for any $y=(y_{i})_{i\in \range{m}}\in \range{K}^m$,
\begin{equation}\label{BHrej}
    \wh{\ell}(\mathbf{p}(y))=\max\{ \ell\in \range{m}\::\: p_{(\ell)}(y)\leq \alpha \ell/m\},
\end{equation}
(with the convention $\wh{\ell}(\mathbf{p}(y))=0$ if the set is empty)
the number of rejections of the BH procedure \citep{BH1995} associated to the $p$-value family $\mathbf{p}(y)=(p_i^{(y_{i})})_{i\in \range{m}}$. Observe that, $y\notin \mathcal{C}^m_{\alpha,\mbox{\tiny Simes} }$ if and only if $\wh{\ell}(\mathbf{p}(y))\geq 1$. In addition, the latter holds if and only if $\sum_{i\in \range{m}}\ind{p_i^{(y_{i})}\leq (\alpha/m) (1\vee \wh{\ell}(\mathbf{p}(y)))}=1\vee \wh{\ell}(\mathbf{p}(y))$. 

Therefore, denoting  $\mathbf{p}=(p_i)_{i\in \range{m}}$ the family of class-calibrated $p$-values, we can express the non-coverage probability as follows: 
\begin{align}\label{fromSimestoBH}
     \P( (Y_{n+i})_{i\in \range{m}}\notin \mathcal{C}^m_{\alpha,\mbox{\tiny Simes}}\:|\: (Y_{j})_{j\in[n+m]})
     &=\sum_{i\in \range{m}} \E\Big[\frac{\ind{p_i\leq (\alpha/m) (1\vee  \wh{\ell}(\mathbf{p}))}}{1\vee \wh{\ell}(\mathbf{p})}\:\Big|\: (Y_{j})_{j\in[n+m]}\Big].
\end{align}
Consider $\mathbf{p}'=(p'_i)_{i\in \range{m}}$ the vector defined in Lemma~\ref{lemmaPropConformalClass} (v) with in addition $p'_j=p_j$ for $j\in \range{m}:Y_{n+j}\neq Y_{n+i}$. Combining Lemma~\ref{lemmaPropConformalClass} (v) with Lemma~\ref{BHsmallerp}, we obtain 
$$
\{p_i\leq \alpha \wh{\ell}(\mathbf{p})/m \}=\{ p_i\leq \alpha \wh{\ell}(\mathbf{p}')/m \}\subset \{ \wh{\ell}(\mathbf{p})=\wh{\ell}(\mathbf{p}')\}.
$$
Hence, by letting $L_i= 1\vee \wh{\ell}(\mathbf{p}')\in \range{m}$, which is $W_i$-measurable \et{(as defined in Lemma~\ref{lemmaPropConformalClass})}, we have that \eqref{fromSimestoBH} can be written as
\begin{align*}
     \P( (Y_{n+i})_{i\in \range{m}}\notin \mathcal{C}^m_{\alpha,\mbox{\tiny Simes}}\:|\: (Y_{j})_{j\in[n+m]})
     &=\sum_{i\in \range{m}} \E\Big[\frac{\ind{p_i\leq (\alpha/m) L_i}}{L_i}\:\Big|\: (Y_{j})_{j\in[n+m]}\Big]\\
     &=\sum_{i\in \range{m}} \E\Big[\frac{\P(p_i\leq (\alpha/m) L_i \:|\: W_i)}{L_i}\:\Big|\: (Y_{j})_{j\in[n+m]}\Big].
\end{align*}
Now, by Lemma~\ref{lemmaPropConformalClass} (ii), we have $\P(p_i\leq (\alpha/m) L_i \:|\: W_i)=\frac{\lfloor (n_i+1) (\alpha/m) L_i\rfloor}{n_i+1} = (\alpha/m) L_i$ if $(n_i+1) (\alpha/m)$ is an integer for all $i\in \range{m}$. This finishes the proof.

\subsection{Proof of Theorem~\ref{th:SimesAdapt}}\label{sec:proofSimesadapt}
\et{Recall $\lambda_{k}=\frac{\lfloor \lambda(n_k+1)\rfloor}{n_k+1}$ for $k\in \range{K}$.}  
For short, we sometimes write in this proof $\lambda_i$, $m_i$ and $n_i$ instead of $\lambda_{Y_{n+i}}$, $m_{Y_{n+i}}$ and $n_{Y_{n+i}}$ respectively, for all $i\in \range{m}$. Also, we write $\kappa$ instead of $\kappa((Y_{n+i})_{i\in \range{m}})$ and $m_k$ instead of $m_k((Y_{n+i})_{i\in \range{m}})$.

Let $G(\mathbf{p})=\hat{m}_0((Y_{n+i})_{i\in \range{m}}) = \kappa({1+\sum_{i=1}^m \ind{p_i\geq \lambda_i}})$ the  estimator of $m_0$ at the true point $(Y_{n+i})_{i\in \range{m}}$ given in \eqref{pi0estiStorey} \et{(this means that this proof deals with the general case and not only the simple Storey estimator described in \eqref{pi0estiStoreysimple})}. 
Similarly to \eqref{fromSimestoBH}, we have 
\begin{align*}
     \P( (Y_{n+i})_{i\in \range{m}}\notin \mathcal{C}^m_{\alpha,\mbox{\tiny A-Simes}}\:|\: (Y_{j})_{j\in[n+m]})
     &=\sum_{i\in \range{m}} \E\Big[\frac{\ind{p_i\leq (\alpha/G(\mathbf{p})) (1\vee  \wh{\ell}(\mathbf{p}))}}{1\vee \wh{\ell}(\mathbf{p})}\:\Big|\: (Y_{j})_{j\in[n+m]}\Big]
\end{align*}
for $\wh{\ell}(\mathbf{p})=\max\{ \ell\in \range{m}\::\: p_{(\ell)}\leq \alpha \ell /G(\mathbf{p})\}$ (with the convention $\wh{\ell}(\mathbf{p})=0$ if the set is empty).
Now we use Lemma~\ref{lemmaPropConformalClass} and the notation therein, and we observe that
\et{
$$(p_j)_{j\in \range{m}\backslash\{i\}}=( (p_j)_{j\in\range{m}\backslash\{i\}:Y_{n+j}=Y_{n+i} }, (p_j)_{j\in\range{m}:Y_{n+j}\neq Y_{n+i} })=(\Psi_i(p_i,W_i), (p_j)_{j\in\range{m}:Y_{n+j}\neq Y_{n+i} })$$
(with some abuse of notation in the ordering of the vector) 
is a function of $(p_i, W_i)$ which is nondecreasing in $p_i$}. Hence, $1/G(\mathbf{p})$ and $1\vee  \wh{\ell}(\mathbf{p})$ are functions of $(p_i, W_i)$, \et{say $\Psi_2(p_i, W_i)$ and $\Psi_3(p_i, W_i)$ respectively, which are both nonincreasing in $p_i$.} Now let
\begin{align*}
c^*(W_i)&=\max \mathcal{N}(W_i)\\
\mathcal{N}(W_i)&= \{ a/(n_i+1)\::\: a \in \range{n_i+1}, a/(n_i+1)\leq \et{\alpha\Psi_2(a/(n_i+1), W_i) \Psi_3( a/(n_i+1),W_i)}\} ,
\end{align*}
with the convention $c^*(W_i)=(n_i+1)^{-1}$ if $\mathcal{N}(W_i)$ is empty. Since $1\vee \wh{\ell}(\mathbf{p})\geq 1\vee \wh{\ell}(c^*(W_i),(p_j)_{j\in \range{m}\backslash\{i\}})$, we have 
\begin{align*}
     \P( (Y_{n+i})_{i\in \range{m}}\notin \mathcal{C}^m_{\alpha,\mbox{\tiny A-Simes}}\:|\: (Y_{j})_{j\in[n+m]})
     &\leq \sum_{i\in \range{m}} \E\Big[\frac{\P(p_i\leq c^*(W_i), p_i\in \mathcal{N}(W_i)\:|\: W_i)}{\et{\Psi_3( c^*(W_i),W_i)}}\:\Big|\: (Y_{j})_{j\in[n+m]}\Big]\\
     &\leq \sum_{i\in \range{m}} \E\Big[\frac{c^*(W_i)}{\et{\Psi_3( c^*(W_i),W_i)}}\:\Big|\: (Y_{j})_{j\in[n+m]}\Big]\\
     &\leq \et{\alpha}\sum_{i\in \range{m}} \E\Big[\et{\Psi_2(1/(n_i+1), W_i)}\Big|\: (Y_{j})_{j\in[n+m]}\Big],
\end{align*}
where the first inequality comes from the definition of $\mathcal{N}(W_i)$ and $c^*(W_i)$ and from the fact that $\et{\Psi_3( c^*(W_i),W_i)}$ is $W_i$-measurable; the second inequality comes from Lemma~\ref{lemmaPropConformalClass} (ii); and the third one comes from the fact that $c^*(W_i)$ is in $\mathcal{N}(W_i)$ and $\Psi_2(a/(n_i+1), W_i)$ is nonincreasing in $a$. Given the notation of Lemma~\ref{lemmaPropConformalClass} (v), this leads to 
\begin{equation}\label{firstbigstepAdapt}
    \P( (Y_{n+i})_{i\in \range{m}}\notin \mathcal{C}^m_{\alpha,\mbox{\tiny A-Simes}}\:|\: (Y_{j})_{j\in[n+m]})
     \leq \et{\alpha} \sum_{i\in \range{m}} \E\Big[\frac{1}{G(\mathbf{p}')}\Big],
\end{equation}
where $\mathbf{p}'=(p'_j)_{j\in \range{m}}$ is such that $p'_i=(n_i+1)^{-1}$, $(p'_j)_{j\in \range{m}:Y_{n+j}=Y_{n+i}} \sim \et{\mathcal{L}_{i,(Y_{j})_{j\in \range{n+m}}}}$ and for each $k\neq Y_{n+i}$, $(p'_j)_{j\in \range{m}:Y_{n+j}=k} \sim \et{\mathcal{L}^{(k)}_{(Y_{j})_{j\in \range{n+m}}}}$ where the distribution of $\et{\mathcal{L}_{i,(Y_{j})_{j\in \range{n+m}}}}$ and $\et{\mathcal{L}^{(k)}_{(Y_{j})_{j\in \range{n+m}}}}$ are defined in Lemma~\ref{lemmaPropConformalClass}. Also note that $(p'_j)_{j\in \range{m}:Y_{n+j}=Y_{n+i}}$ and all $(p'_j)_{j\in \range{m}:Y_{n+j}=k}$, $k\neq Y_{n+i}$, are independent vectors, so that the distribution of $\mathbf{p}'$ is well specified. Now observe that \et{(all expectations/probabilities below are taken implicitly conditionally on $(Y_{j})_{j\in \range{n+m}}$)}
\begin{align*}
    \E\Big[\frac{1}{G(\mathbf{p}')}\Big]&=\E\Big[\frac{1/\kappa}{1+\sum_{j=1}^m \ind{p'_j\geq \lambda_j}}\Big]\\
    &= \E\Big[\frac{1/\kappa}{1+\sum_{k\neq Y_{n+i}} \sum_{j: Y_{n+j}=k} \ind{p'_j\geq \lambda_j} + \sum_{j\in \range{m}\backslash\{i\}:Y_{n+j}=Y_{n+i}} \ind{p'_j\geq \lambda_j} }\Big]\\
    &= \E\Big[\frac{1/\kappa}{1+\sum_{k\neq Y_{n+i}}  \mathcal{B}(m_k,\nu_k) + \mathcal{B}(m_i-1,\nu'_i)
     }\Big],
\end{align*}
by using Lemma~\ref{lemmaPropConformalClass} (iii), (iv), 
where $\mathcal{B}(a,b)$ denotes (independent) binomial variables of parameters $a$ and $b$, and where
$\nu_k=U^{(k)}_{(\lfloor (n_k+1)\lambda\rfloor -1)}$ (with the convention $\nu_k=1$ if $\lfloor (n_k+1)\lambda\rfloor\leq 1$) and $\nu'_i=U_{(\lfloor (n_i+1)\lambda\rfloor)}$ (with the convention $\nu'_i=1$ if $\lfloor (n_i+1)\lambda\rfloor=0$).
The latter comes from the fact that for $j\in \range{m}$ such that $Y_{n+j}=k\neq Y_{n+i}$, 
\begin{align*}
\P(p'_j\geq \lambda_j \:|\: (U^{(k)}_{(1)},\dots,U^{(k)}_{(n_k)}))
&= \P\Big(\sum_{s\in A_i} \ind{s\geq S_{n+j}} > \lfloor \lambda(n_k+1)\rfloor -1 \:\Big|\: (U^{(k)}_{(1)},\dots,U^{(k)}_{(n_k)})\Big)\\
&=1-(1- U^{(k)}_{(\lfloor (n_k+1)\lambda\rfloor -1)}) = U^{(k)}_{(\lfloor (n_k+1)\lambda\rfloor -1)}.
\end{align*}
Similarly, for $j\neq i$ such that $Y_{n+j}=Y_{n+i}$, $\P(p'_j\geq \lambda_j \:|\: (U_{(1)},\dots,U_{(n_i+1)}))=U_{(\lfloor (n_i+1)\lambda\rfloor)}$.

Now, by Lemma~\ref{lem:beta}, we have 
$\nu_k\sim \beta(n_k+2-\lfloor (n_k+1)\lambda\rfloor, \lfloor (n_k+1)\lambda\rfloor -1)$ and $\nu'_i\sim \beta(n_i+2-\lfloor (n_i+1)\lambda\rfloor, \lfloor (n_i+1)\lambda\rfloor )$.
Let $\nu$ be the random variable
$$
\nu= (\nu'_i)^{m_i/m} \prod_{k\neq Y_{n+i}} (\nu_k)^{m_k/m}.
$$
By the stochastic domination argument of Lemma~\ref{lem:domin}, we have
\begin{align*}
   & \E\Big[\frac{1}{1+\sum_{k\neq Y_{n+i}}  \mathcal{B}(m_k,\nu_k) + \mathcal{B}(m_i-1,\nu'_i)}\:\Big|\:(\nu_k)_{k\neq Y_{n+i}}, \nu'_i\Big]\\
   &\leq
   \E\Big[\frac{1}{1+\sum_{k\neq Y_{n+i}}  \mathcal{B}(m-1,\nu)}\:\Big|\:(\nu_k)_{k\neq Y_{n+i}}, \nu'_i\Big]\leq 1/(m \nu), 
\end{align*}
where we used Lemma~\ref{lem:BY} in the last inequality. As a result, 
\begin{align*}
   \sum_{i\in \range{m}}  \E\Big[\frac{1}{G(\mathbf{p}')}\Big]
   &\leq \kappa^{-1} m^{-1}\sum_{i\in \range{m}}\E\Big((\nu'_i)^{-(m_i-1)/(m-1)} \prod_{k\neq Y_{n+i}} (\nu_k)^{-m_k/(m-1)}\Big)\\
   &=\kappa^{-1} m^{-1}\sum_{i\in \range{m}}\E\big((\nu'_i)^{-(m_i-1)/(m-1)}\big) \prod_{k\neq Y_{n+i}} \E\big((\nu_k)^{-m_k/(m-1)}\big),
\end{align*}
by using the independence between the variables $\nu'_i$, $\nu_k$, $k\neq Y_{n+i}$. By Jensen's inequality, the last display is at most 
\begin{align*}
&  m^{-1}\sum_{i\in \range{m}}\kappa^{-1} (\E((\nu'_i)^{-1}))^{(m_i-1)/(m-1)} \prod_{k\neq Y_{n+i}} (\E((\nu_k)^{-1})))^{m_k/(m-1)}\\
  &= m^{-1}\sum_{i\in \range{m}} \kappa^{-1}\Big(\frac{n_i+1}{n_i+1-\lfloor (n_i+1)\lambda\rfloor}\Big)^{(m_i-1)/(m-1)} \prod_{k\neq Y_{n+i}} \Big(\frac{n_k}{n_k+1-\lfloor (n_k+1)\lambda\rfloor}\Big)^{m_k/(m-1)}\\
&\leq m^{-1}\sum_{i\in \range{m}} \kappa^{-1}\Big(\frac{1}{1-\lambda_i}\Big)^{(m_i-1)/(m-1)} \prod_{k\neq Y_{n+i}} \Big(\frac{1}{1-\lambda_k}\Big)^{m_k/(m-1)}\leq 1,
\end{align*}
because $\E (\nu_k^{-1})= \frac{n_k}{n_k+1-\lfloor (n_k+1)\lambda\rfloor}\leq \frac{n_k+1}{n_k+1-\lfloor (n_k+1)\lambda\rfloor}$ and $\E ((\nu'_i)^{-1})= \frac{n_i+1}{n_i+1-\lfloor (n_i+1)\lambda\rfloor}$ by Lemma~\ref{lem:beta} and by the definition \eqref{equkappaClass} of $\kappa$.
Combining the latter with \eqref{firstbigstepAdapt} gives the result.

\subsection{Proof of Theorem~\ref{th:gencontrol}}\label{proof:gen}

\et{Let us prove the result for the iid model (the proof for the conditional model is similar). 
Recall the definition of $(\hat{p}_{i,b})_{i\in \range{m}}=(\hat{p}^{(z^h_{i})}_{i,b})_{i\in \range{m}}$ (not depending on $z^h$ for the iid model, see Algorithm~\ref{alg:general_iid}), for $1\leq b\leq B$ in Algorithm~\ref{alg:general}. Since the scores $S_{Y_{i}}(X_i)$, $i\in [n+m]$, are iid and have no ties, and $p$-values $(p_i)_{i\in \range{m}}=(p^{(Y_{n+i})}_i)_{i\in \range{m}}$ involve only ranks between those scores, we have that  
the variables $(\hat{p}_{i,b})_{i\in \range{m}}$, $1\leq b\leq B$, and $(p_i)_{i\in \range{m}}$ are iid. This means that $\xi_b=\xi^h_b$, $1\leq b\leq B$, and $\xi:=\Cp((p_i)_{i\in \range{m}})$ are iid and thus exchangeable. Letting
\begin{equation}
\hat{q}=(B+1)^{-1}\Big(1+\sum_{b=1}^B \ind{\xi_b\leq \xi}\Big),\label{formulapvaluedistfree}    
\end{equation}
we thus have by \cite{RW2005} that $\P(\hat{q}\leq \alpha)\leq\alpha$.
Now, we have
\begin{align*}
    \P((Y_{n+i})_{i\in \range{m}}\notin \mathcal{C}^m_{t,F})&=\P(\Cp((p^{(Y_{n+i})}_i)_{i\in \range{m}}) < \xi_{(\lfloor (B+1)\alpha)\rfloor)})=\P(\xi < \xi_{(\lfloor (B+1)\alpha)\rfloor)})
    =\P(\hat{q}\leq \alpha)\leq \alpha,
\end{align*}
which concludes the proof.
}
\section{Technical results}

The next result is a variation of results in appendices of \cite{marandon2024adaptive,gazin2024selecting}. 

\begin{lemma}[For full-calibrated $p$-values]\label{lemmaPropConformalFull}
    Let us consider the scores $S_j=S_{Y_j}(X_j)$, $j\in \range{n+m}$, and assume them to be exchangeable and have no ties almost surely. Consider the full-calibrated $p$-values \eqref{standardpvalue} $p_i:=p_i^{(Y_{n+i})} $, $i\in \range{m}$, and let for any fixed $i\in \range{m}$, 
    \begin{align*}
W_i&:=(A_i,(S_{n+j})_{j\in\range{m}\backslash\{i\}});\\
A_i&:=\{S_{j},j\in \range{n}\}\cup \{S_{n+i}\} =: \{a_{i,(1)},\dots,a_{i,(n+1)}\};\\
\Psi_i(u,W_i) &:= \left(\frac{1}{n+1}\Big(\ind{a_{i,(\lceil u(n+1)\rceil )}< S_{n+j}}+\sum_{s\in A_i} \ind{s\geq S_{n+j}} \Big)\right)_{j\in\range{m}\backslash\{i\}},
\end{align*} 
with $a_{i,(1)}>\dots>a_{i,(n+1)}$. Then we have
\begin{itemize}
    \item[(i)] $\mathbf{p}_{-i}:=(p_j)_{j\in\range{m}\backslash\{i\}}$ is equal to $\Psi_i(p_i,W_i)$ and  $u\in [0,1] \mapsto \Psi_i(u,W_i)\in \R^{m-1}$ is a nondecreasing function (in a coordinate-wise sense for the image space);
    \item[(ii)] $(n+1)p_i$ is uniformly distributed on $\range{n+1}$ and independent of $W_i$;
    \item[(iii)] the distribution of $\mathbf{p}_{-i}$ conditionally on $p_i=(n+1)^{-1}$ is the same as if all the scores were all iid $U(0,1)$. In particular, this distribution is equal to a distribution $\mathcal{D}_i$  which is defined as follows: $\mathbf{p}'_{-i}:=(p'_j)_{j\in\range{m}\backslash\{i\}}\sim \mathcal{D}_i$ if, 
    conditionally on the ordered statistics $U_{(1)}>\dots>U_{(n+1)}$ of an iid sample of uniform random variables $(U_1,\dots,U_{n+1})$, the variables 
$(p'_j)_{j\in\range{m}\backslash\{i\}}$ are iid with common cdf $F(x)=(1-U_{(\lfloor (n+1)x\rfloor +1)})\ind{(n+1)^{-1}\leq x<1} + \ind{ x\geq 1}$.
    \item[(iv)] Let $(p'_j)_{j\in \range{m}}$ such that $p'_i=(n+1)^{-1}$ and $p'_j=(n+1)^{-1}\sum_{s\in A_i} \ind{s\geq S_{n+j}} $ for $j\neq i$. Then, $(p'_j)_{j\in \range{m}}$ is $W_i$-measurable and almost surely, for all $j\neq i$, $p'_j\leq p_j$ when $p_j\leq p_i$ and $p'_j= p_j$ when $p_j> p_i$.
\end{itemize}  
\end{lemma}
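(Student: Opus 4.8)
The plan is to reduce every item to the single combinatorial fact that $(n+1)p_i$ is the rank of $S_{n+i}$ inside the augmented set $A_i$, and then to lean on exchangeability. Writing $A_i=\{S_1,\dots,S_n,S_{n+i}\}$ with descending order statistics $a_{i,(1)}>\dots>a_{i,(n+1)}$, the no-ties assumption gives that if $S_{n+i}=a_{i,(r)}$ then exactly $r-1$ calibration scores exceed $S_{n+i}$, so $(n+1)p_i=r$ and $S_{n+i}=a_{i,(\lceil u(n+1)\rceil)}$ at $u=p_i$. For part~(i), I would rewrite the calibration sum defining $p_j$ (for $j\neq i$) as $\sum_{s\in A_i}\ind{s\geq S_{n+j}}-\ind{S_{n+i}\geq S_{n+j}}$ and substitute $S_{n+i}=a_{i,(\lceil p_i(n+1)\rceil)}$ together with $\ind{S_{n+i}\geq S_{n+j}}=1-\ind{a_{i,(\lceil p_i(n+1)\rceil)}<S_{n+j}}$; the terms rearrange into exactly $[\Psi_i(p_i,W_i)]_j$. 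Monotonicity of $u\mapsto\Psi_i(u,W_i)$ is then immediate, since $\lceil u(n+1)\rceil$ is nondecreasing, hence $a_{i,(\lceil u(n+1)\rceil)}$ is nonincreasing and each indicator $\ind{a_{i,(\lceil u(n+1)\rceil)}<S_{n+j}}$ is nondecreasing in $u$.

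For part~(ii), I would invoke exchangeability of $(S_1,\dots,S_{n+m})$ restricted to the subgroup of permutations acting only on the indices $\{1,\dots,n,n+i\}$ while fixing $\{n+j:j\neq i\}$. Since $W_i$ records only the \emph{unordered} set $A_i$ and the other test scores, it is invariant under this subgroup, so conditionally on $W_i$ each of the $n+1$ elements of $A_i$ is equally likely to be the one labelled $n+i$; thus the rank of $S_{n+i}$, equivalently $(n+1)p_i$, is uniform on $\range{n+1}$ and independent of $W_i$. Part~(iii) I would then split in two. The whole family $(p_j)_{j\in\range{m}}$ is a deterministic function of the relative ordering of $S_1,\dots,S_{n+m}$, whose law is uniform over all orderings for any exchangeable no-ties model, so the joint law of $(p_j)_j$ and its conditioning on $\{p_i=(n+1)^{-1}\}$ coincide with the iid $U(0,1)$ case. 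To identify $\mathcal D_i$ explicitly I would condition on $\{p_i=(n+1)^{-1}\}$, i.e.\ on $S_{n+i}$ being the maximum of $A_i$, observe that the descending order statistics $U_{(1)}>\dots>U_{(n+1)}$ of the $n+1$ uniforms in $A_i$ are independent of which index attains the maximum, and that each $S_{n+j}$ ($j\neq i$) is an independent uniform. Plugging $\lceil p_i(n+1)\rceil=1$ into the formula from (i), I would check that $p_j\leq k/(n+1)$ holds iff $S_{n+j}>U_{(k+1)}$ (the top two cells merge at $k=1$), yielding the conditional cdf $F(x)=1-U_{(\lfloor(n+1)x\rfloor+1)}$ on $[(n+1)^{-1},1)$, with conditional independence across $j$ clear since each $p_j$ depends only on its own $S_{n+j}$ and the shared order statistics.

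For part~(iv), $W_i$-measurability is immediate from the definitions. The difference is $p_j-p'_j=(n+1)^{-1}\ind{a_{i,(\lceil p_i(n+1)\rceil)}<S_{n+j}}=(n+1)^{-1}\ind{S_{n+i}<S_{n+j}}\geq 0$, which gives $p'_j\leq p_j$ always, with equality iff $S_{n+j}\leq S_{n+i}$. Since the conformal map $t\mapsto\sum_{l\in\range{n}}\ind{S_l\geq t}$ is nonincreasing, $p_j>p_i$ forces $S_{n+j}<S_{n+i}$ (using no ties), so the equality case holds exactly when $p_j>p_i$, as claimed.

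I expect the main obstacle to be the explicit identification of $\mathcal D_i$ in part~(iii): the reduction to iid uniforms is conceptually clean, but correctly tracking the ``merged top cell'' produced by conditioning on $p_i=(n+1)^{-1}$, and matching it to the floor/ceiling indexing of the stated cdf, is where the bookkeeping must be done carefully.
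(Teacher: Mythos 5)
Your proof is correct and follows essentially the same route as the paper: the paper defers Lemma~\ref{lemmaPropConformalFull} itself to prior work, but it proves the class-calibrated analogue, Lemma~\ref{lemmaPropConformalClass}, by exactly your argument — the algebraic identity expressing $p_j$ via $\sum_{s\in A_i}\ind{s\geq S_{n+j}}-\ind{S_{n+i}\geq S_{n+j}}$ for (i) and (iv), conditional exchangeability of the augmented calibration set for (ii), and the rank-based reduction to iid uniforms followed by the conditional cdf computation for (iii). Your careful handling of the cdf indexing (including the merged top cell at $k=1$) matches the paper's computation of $\P\bigl(1+\sum_{s\in A_i\backslash\{a_{i,(1)}\}}\ind{s\geq S_{n+j}}\leq x(n+1)\bigr)$ at the atoms of the $p$-value distribution.
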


The next lemma adapts Lemma~\ref{lemmaPropConformalFull} to the class conditional model (with class-calibrated $p$-values). \et{In a nutshell, it says that the previous lemma applies within each class and uses the independence between scores of different classes (conditionally on all the labels). }

\begin{lemma}[For class-calibrated $p$-values]\label{lemmaPropConformalClass}
    Let us consider the scores $S_j=S_{Y_j}(X_j)$, $j\in \range{n+m}$, and assume that for all $k\in \range{K}$, the scores $S_j,$ $j\in \range{n+m}:Y_{j}=k$, are exchangeable\et{, independent of the scores $S_j,$ $j\in \range{n+m}:Y_{j}\neq k$} and have no ties almost surely. Consider the class-calibrated $p$-values \eqref{standardpvalue} $p_i:=p_i^{(Y_{n+i})} $, $i\in \range{m}$, and let for any fixed $i\in \range{m}$, $n_i=|\mathcal{D}^{(Y_{n+i})}_{{\tiny \mbox{cal}}}|$ and
    \begin{align*}
W_i&:=(A_i,(S_{n+j})_{j\in\range{m}\backslash\{i\}}, (S_j)_{j\in\range{n}: Y_{j}\neq Y_{n+i}});\\
A_i&:=\lbrace S_j, j\in \mathcal{D}^{(Y_{n+i})}_{{\tiny \mbox{cal}}}\rbrace\cup \{S_{n+i}\} = \{a_{i,(1)},\dots,a_{i,(n_i+1)}\};\\
\Psi_i(u,W_i) &:= \left(\frac{1}{n_i+1}\Big(\ind{a_{i,(\lceil u(n_i+1)\rceil )}< S_{n+j}}+\sum_{s\in A_i} \ind{s\geq S_{n+j}} \Big)\right)_{j\in\range{m}\backslash\{i\}:Y_{n+j}=Y_{n+i} },
\end{align*} 
with $a_{i,(1)}>\dots>a_{i,(n_i+1)}$. Then we have
\begin{itemize}
    \item[(i)] $(p_j)_{j\in\range{m}\backslash\{i\}:Y_{n+j}=Y_{n+i} }$ is equal to $\Psi_i(p_i,W_i)$ and  $u\in [0,1] \mapsto \Psi_i(u,W_i)$ is a nondecreasing function (in a coordinate-wise sense for the image space);
    \item[(ii)] Conditionally on $(Y_{j})_{j\in \range{n+m}}$, the variable $(n_i+1)p_i$ is uniformly distributed on $\range{n_i+1}$ and independent of $W_i$ and $(p_j)_{j\in\range{m}:Y_{n+j}\neq Y_{n+i} }$;
    \item[(iii)] the distribution of $(p_j)_{j\in\range{m}\backslash\{i\}:Y_{n+j}=Y_{n+i} }$ conditionally on $p_i=(n_i+1)^{-1}$ and $(Y_{j})_{j\in \range{n+m}}$ is the same as if all the scores were all iid $U(0,1)$. In particular, this distribution is equal to a distribution $\et{\mathcal{L}_{i,(Y_{j})_{j\in \range{n+m}}}}$  which is defined as follows: $(p'_j)_{j\in\range{m}\backslash\{i\}:Y_{n+j}=Y_{n+i}}\sim \et{\mathcal{L}_{i,(Y_{j})_{j\in \range{n+m}}}}$ if, 
    conditionally on the ordered statistics $U_{(1)}>\dots>U_{(n_i+1)}$ of an iid sample of uniform random variables $(U_1,\dots,U_{n_i+1})$ (independent of everything else), the variables 
$(p'_j)_{j\in\range{m}\backslash\{i\}:Y_{n+j}=Y_{n+i}}$ are iid with common cdf 
$$F(x)=(1-U_{(\lfloor (n_i+1)x\rfloor +1)})\ind{(n_i+1)^{-1}\leq x<1} + \ind{ x\geq 1}.$$
\item[(iv)] For $k\neq Y_{n+i}$, conditionally on $(Y_{j})_{j\in \range{n+m}}$, the distribution of $(p_j)_{j\in\range{m}:Y_{n+j}=k }$  is the same as if all the scores were all iid $U(0,1)$. In particular, this distribution is equal to a distribution $\et{\mathcal{L}^{(k)}_{(Y_{j})_{j\in \range{n+m}}}}$  which is defined as follows: $(p'_j)_{j\in\range{m}:Y_{n+j}=k}\sim \et{\mathcal{L}^{(k)}_{(Y_{j})_{j\in \range{n+m}}}}$ if, 
    conditionally on the ordered statistics $U^{(k)}_{(1)}>\dots>U^{(k)}_{(n_k)}$ of an iid sample of uniform random variables $(U^{(k)}_1,\dots,U^{(k)}_{n_k})$ (independent of everything else), the  variables 
$(p'_j)_{j\in\range{m}:Y_{n+j}=k}$ are iid with common cdf 
$$F^{(k)}(x)=(1-U^{(k)}_{(\lfloor (n_k+1)x\rfloor )})\ind{(n_k+1)^{-1}\leq x<1} + \ind{ x\geq 1}.$$
    \item[(v)] Let $(p'_j)_{j\in \range{m}:Y_{n+j}=Y_{n+i}}$ such that $p'_i=(n_i+1)^{-1}$ and $p'_j=(n_i+1)^{-1}\sum_{s\in A_i} \ind{s\geq S_{n+j}} $ for \et{$j\neq i$ with $Y_{n+j}=Y_{n+i}$.} Then, $(p'_j)_{j\in \range{m}:Y_{n+j}=Y_{n+i}}$ is $W_i$-measurable and almost surely, for all \et{$j\neq i$ with $Y_{n+j}=Y_{n+i}$,} $p'_j\leq p_j$ when $p_j\leq p_i$ and $p'_j= p_j$ when $p_j> p_i$.
\end{itemize}  
\end{lemma}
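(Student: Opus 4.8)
The plan is to reduce the entire statement to its already-established full-calibrated counterpart, Lemma~\ref{lemmaPropConformalFull}, applied separately within each class, exploiting the conditional independence of the classes. The guiding principle, sketched in the discussion preceding the proofs, is that after conditioning on $(Y_j)_{j\in\range{n+m}}$ the score collection splits into $K$ mutually independent sub-families, one per class, and that $p_i$ reads only the sub-family of class $Y_{n+i}$.

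First I would condition throughout on $(Y_j)_{j\in\range{n+m}}$. Under this conditioning the vectors $(S_j)_{j:Y_j=k}$, $k\in\range{K}$, are independent, and within each class they are exchangeable with no ties by assumption. The key observation is that $p_i=p_i^{(Y_{n+i})}$ is a function only of $A_i$ and of the test scores $(S_{n+j})_{j:Y_{n+j}=Y_{n+i}}$, hence only of class-$Y_{n+i}$ scores; consequently $p_i$, and more generally the whole same-class family $(p_j)_{j:Y_{n+j}=Y_{n+i}}$, is independent of $(p_j)_{j:Y_{n+j}\neq Y_{n+i}}$ (the latter being a measurable function of the other-class scores collected inside $W_i$).

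For items (i), (iii), (v) I would simply specialize Lemma~\ref{lemmaPropConformalFull} to class $Y_{n+i}$: taking $\mathcal{D}^{(Y_{n+i})}_{\mathrm{cal}}$ (of size $n_i$) in the role of $\range{n}$ and the same-class test indices $\{j:Y_{n+j}=Y_{n+i}\}$ in the role of $\range{m}$, the hypotheses of Lemma~\ref{lemmaPropConformalFull} hold, and its conclusions transfer verbatim with $n$ replaced by $n_i$. This yields the $\Psi_i$-representation and coordinatewise monotonicity in (i), the class-conditional null law $\mathcal{D}_i((Y_j)_{j\in\range{n+m}})$ in (iii), and the pointwise domination in (v). For (ii), the within-class part, namely that $(n_i+1)p_i$ is uniform on $\range{n_i+1}$ and independent of $W_i$, is exactly Lemma~\ref{lemmaPropConformalFull}(ii) for class $Y_{n+i}$, while the additional independence of $p_i$ from $(p_j)_{j:Y_{n+j}\neq Y_{n+i}}$ is the cross-class independence established above.

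Item (iv) is the genuinely new piece, though still routine: for fixed $k\neq Y_{n+i}$ the family $(p_j)_{j:Y_{n+j}=k}$ consists of standard conformal $p$-values all calibrated against $\mathcal{D}^{(k)}_{\mathrm{cal}}$, and its conditional law is obtained by conditioning on the order statistics of the $n_k$ calibration scores, which renders the test $p$-values iid with cdf $F^{(k)}$. The main point requiring care, and the only real obstacle, is the index bookkeeping between (iii) and (iv): in (iii) the score $S_{n+i}$ is adjoined to the class-$Y_{n+i}$ calibration scores, so $A_i$ has $n_i+1$ elements and, after pinning $p_i=(n_i+1)^{-1}$, one works with $n_i+1$ uniforms and the index $\lfloor(n_i+1)x\rfloor+1$; whereas in (iv) no test score of class $k$ is pinned, so only the $n_k$ calibration uniforms enter and the index is $\lfloor(n_k+1)x\rfloor$. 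Checking that these two cdf's match the exchangeable rank distributions is a direct computation once the conditional-independence decomposition is in place, so I expect no difficulty beyond careful indexing.
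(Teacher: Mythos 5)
Your proposal is correct, but it is architected differently from the paper's own proof. The paper proves Lemma~\ref{lemmaPropConformalClass} by direct, self-contained verification: item (i) via the algebraic identity \eqref{equinterm} expressing $p_j$ through $A_i$ and $S_{n+i}$; item (ii) from exchangeability of $A_i$ conditionally on all other scores; items (iii)--(iv) from the rank-based distribution-free argument plus an explicit computation of the conditional cdf; and item (v) read off \eqref{equinterm}. You instead invoke Lemma~\ref{lemmaPropConformalFull} as a black box, applied conditionally on $(Y_j)_{j\in\range{n+m}}$ to the class-$Y_{n+i}$ sub-problem (calibration set $\mathcal{D}^{(Y_{n+i})}_{\mathrm{cal}}$ of size $n_i$, test indices $\{j:Y_{n+j}=Y_{n+i}\}$), and stitch in cross-class conditional independence; only item (iv), which has no pinned test point and hence no counterpart in the full lemma, is proved directly---exactly as the paper does it. Your route is more modular and makes rigorous the intuition the paper itself states before its proofs (same-class $p$-values behave exactly as in the iid case, other-class $p$-values are independent); the paper's route buys self-containedness, which matters here because Lemma~\ref{lemmaPropConformalFull} is only quoted from prior work, not proved in this paper. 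Two caveats on your write-up. First, in item (ii) the class-lemma $W_i$ is strictly larger than the conditioning variable of Lemma~\ref{lemmaPropConformalFull}(ii) for the sub-problem (it also contains the other-class test and calibration scores), so the full lemma alone does not yield independence of $p_i$ from the whole $W_i$; you do supply the missing ingredient (independence of $(p_i,W_i^{\mathrm{class}})$ from the other-class scores, combined with $p_i\perp W_i^{\mathrm{class}}$, gives $p_i\perp W_i$), but your phrasing conflates the two $W_i$'s and this composition step deserves an explicit line. Second, the cross-class independence you use is not literally among the lemma's stated hypotheses (only within-class exchangeability is assumed); the paper's proof of (ii) makes the same implicit strengthening (exchangeability of $A_i$ ``conditionally on all other scores''), which is what holds in the conditional model where the lemma is applied, so this is not a gap relative to the paper.
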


\begin{proof}
    Let us prove (i), we have for $j\in\range{m}\backslash\{i\}$ with $Y_{n+j}=Y_{n+i}$,
    \begin{align}
        p_j&= \frac{1}{|\mathcal{D}^{(Y_{n+j})}_{{\tiny \mbox{cal}}}|+1}\Big(1+\sum_{\l\in \mathcal{D}^{(Y_{n+j})}_{{\tiny \mbox{cal}}}} \ind{S_\l\geq S_{n+j}} \Big)\nonumber\\
        &=\frac{1}{|\mathcal{D}^{(Y_{n+i})}_{{\tiny \mbox{cal}}}|+1}\Big(1+\sum_{s\in A_i} \ind{s\geq S_{n+j}} - \ind{S_{n+i}\geq S_{n+j}} \Big),\label{equinterm}
    \end{align}
    which gives the relation because $S_{n+i}=a_{i,( p_i(n_i+1) )}$. Since the monotonicity property is clear, this gives (i).

    Point (ii) comes from the fact that the scores $\lbrace S_j, j\in \mathcal{D}^{(Y_{n+i})}_{{\tiny \mbox{cal}}}\rbrace\cup \{S_{n+i}\}$ have not ties and are exchangeable conditionally on all other scores (and of $(Y_{j})_{j\in \range{n+m}}$).

    For proving (iii), we first note that the calibrated $p$-values are ranks of exchangeable scores with not ties. Hence, the distribution of the $p$-value vector is free from the distribution scores and thus is the same as if the scores were generated as iid $U(0,1)$. \et{Hence, the latter assumption is made for the rest of the proof.} Now, by (i), we have for all $j\in\range{m}\backslash\{i\}$ with $Y_{n+j}=Y_{n+i}$, \et{and if $p_i=(n_i+1)^{-1}$},
    $$
p_j=\frac{1}{n_i+1}\Big(1+\sum_{s\in A_i\backslash\{ a_{i,(1)}\}} \ind{s\geq S_{n+j}} \Big),
    $$
    \et{because $a_{i,(1)}=S_{n+i}$ in that case. Hence, the $p_j$'s, $j\in\range{m}\backslash\{i\}$ with $Y_{n+j}=Y_{n+i}$,}
    
    are iid conditionally on $A_i$ and $(Y_{j})_{j\in \range{n+m}}$. In addition, the common marginal cdf at a point $x$ is given by
    \begin{align*}
\P\Big(1+\sum_{s\in A_i\backslash\{ a_{i,(1)}\}} \ind{s\geq S_{n+j}}\leq x(n_i+1)\Big)&=\P\Big(\sum_{s\in A_i\backslash\{ a_{i,(1)}\}} \ind{s\geq S_{n+j}}< \lceil x(n_i+1) \rceil\Big)\\
&=\P\Big( a_{i,(\lceil x(n_i+1) \rceil+1)}<S_{n+j}\Big),
    \end{align*}
    provided that $1\leq x(n_i+1)< n_i+1$ and the above probabilities being taken conditionally on $A_i$ and $(Y_{j})_{j\in \range{n+m}}$. 
    The result follows because we considered uniformly distributed scores.

    Point (iv) is similar to point (iii), starting directly from the following relation: for all $j\in\range{m}$ with $Y_{n+j}=k$,
    $$
p_j=\frac{1}{n_k+1}\Big(1+\sum_{s\in \{U^{(k)}_{(1)},\dots,U^{(k)}_{{(n_k)}}\}} \ind{s\geq S_{n+j}} \Big),
    $$
where $U^{(k)}_1>\dots>U^{(k)}_{n_k}$ are the ordered elements of $\{S_j,j\in \mathcal{D}^{(k)}_{{\tiny \mbox{cal}}}\}$. 

Finally, let us prove point (v): first $p'_j\leq p_j$ is obvious from \eqref{equinterm}. Second, if  $j\in\range{m}\backslash\{i\}$ with $Y_{n+j}=Y_{n+i}$ is such that $p_j> p_i$, this means $S_{n+j}<S_{n+i}$ and thus $p'_j = p_j$ from \eqref{equinterm}. The result is proved.
\end{proof}

\begin{lemma}[Lemma D.6 of \cite{marandon2024adaptive}]\label{BHsmallerp} 
Write $\wh{\ell}=\wh{\ell}(\mathbf{p})$ for \eqref{BHrej} with any $p$-value family $\mathbf{p}=(p_i)_{i\in \range{m}}$. Fix any $i\in \{1,\dots,m\}$ and consider two collections $\mathbf{p}=(p_i)_{i\in \range{m}}$ and $\mathbf{p}'=(p'_i)_{i\in \range{m}}$ which satisfy almost surely that
\begin{align}\label{propppprime}
\forall j\in \range{m}, \left\{\begin{array}{cc} p'_j\leq p_j& \mbox{ if } p_j\leq p_i;\\p'_j = p_j& \mbox{ if } p_j> p_i.\end{array}\right.  
\end{align}
Then we have almost surely 
$
\{p_i\leq \alpha \wh{\ell}(\mathbf{p})/m \}=\{ p_i\leq \alpha \wh{\ell}(\mathbf{p}')/m \}\subset \{ \wh{\ell}(\mathbf{p})=\wh{\ell}(\mathbf{p}')\}.
$
\end{lemma}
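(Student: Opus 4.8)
The plan is to recast everything in terms of the empirical counting function $N(t):=\sum_{j\in\range{m}}\ind{p_j\le t}$ (and its analogue $N'$ built from $\mathbf{p}'$), relying on the standard reformulation of the BH count \eqref{BHrej},
$$\wh\ell(\mathbf p)=\max\{\ell\in\range{m}:N(\alpha\ell/m)\ge\ell\},$$
with the convention $\max\emptyset=0$. This is just the observation that $p_{(\ell)}\le\alpha\ell/m$ holds precisely when at least $\ell$ of the $p$-values fall below $\alpha\ell/m$, i.e. $N(\alpha\ell/m)\ge\ell$. Working with $N$ rather than order statistics lets me avoid tracking how individual ranks move when we pass from $\mathbf p$ to $\mathbf p'$.

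The single fact driving the whole argument is that hypothesis \eqref{propppprime} forces $N'(t)=N(t)$ for every $t\ge p_i$. To see this I split the indices according to whether $p_j\le p_i$ or $p_j>p_i$. For a threshold $t\ge p_i$, any index of the first kind satisfies $p'_j\le p_j\le p_i\le t$, so it is counted by both $N(t)$ and $N'(t)$; any index of the second kind has $p'_j=p_j$, hence contributes the same indicator to each count. Summing yields $N'(t)=N(t)$ whenever $t\ge p_i$. (For $t<p_i$ one only gets the one-sided $N'(t)\ge N(t)$, but this will not be needed.)

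With this identity, the rest is a matter of comparing the relevant thresholds to $p_i$. For the inclusion $\{p_i\le\alpha\wh\ell(\mathbf p)/m\}\subseteq\{p_i\le\alpha\wh\ell(\mathbf p')/m\}$, write $\ell^\ast=\wh\ell(\mathbf p)$; on the left-hand event the threshold $t^\ast=\alpha\ell^\ast/m$ satisfies $t^\ast\ge p_i$, so $N'(t^\ast)=N(t^\ast)\ge\ell^\ast$, giving $\wh\ell(\mathbf p')\ge\ell^\ast$ and therefore $p_i\le\alpha\ell^\ast/m\le\alpha\wh\ell(\mathbf p')/m$. The reverse inclusion is symmetric: on $\{p_i\le\alpha\wh\ell(\mathbf p')/m\}$ the threshold $\alpha\wh\ell(\mathbf p')/m\ge p_i$ again lies in the regime where $N=N'$, so $N(\alpha\wh\ell(\mathbf p')/m)=N'(\alpha\wh\ell(\mathbf p')/m)\ge\wh\ell(\mathbf p')$ and hence $\wh\ell(\mathbf p)\ge\wh\ell(\mathbf p')$. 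This establishes the set equality. For the final inclusion into $\{\wh\ell(\mathbf p)=\wh\ell(\mathbf p')\}$ I chain the two one-sided bounds on the common event: the first inclusion gives $\wh\ell(\mathbf p')\ge\ell^\ast$, and since then $\alpha\wh\ell(\mathbf p')/m\ge\alpha\ell^\ast/m\ge p_i$ the identity yields $N(\alpha\wh\ell(\mathbf p')/m)\ge\wh\ell(\mathbf p')$, so $\wh\ell(\mathbf p)\ge\wh\ell(\mathbf p')$ as well; the two inequalities combine to equality.

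The proof is essentially bookkeeping, so I do not anticipate a deep obstacle; the one point deserving care is that the events in the statement compare the \emph{original} value $p_i$ against both BH thresholds, rather than asking whether index $i$ is rejected under $\mathbf p'$ using the perturbed $p'_i$. Keeping $p_i$ fixed throughout and noting that, exactly on the events of interest, the relevant thresholds are $\ge p_i$ is what makes the counting identity $N'=N$ directly applicable. The degenerate case $\wh\ell(\mathbf p)=0$ (where the left-hand event is almost surely empty because $p_i>0$) should be checked, but it follows immediately from the same inequalities, since $p_i\le\alpha\wh\ell(\mathbf p')/m$ would force $\wh\ell(\mathbf p)\ge\wh\ell(\mathbf p')\ge1$, a contradiction.
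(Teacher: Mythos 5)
Your proof is correct. One thing to note at the outset: the paper does not contain its own proof of this statement — it is imported wholesale as Lemma~D.6 of \cite{marandon2024adaptive} and used as a black box in the proof of Theorem~\ref{th:Simesexact} — so there is no in-paper argument to compare yours against; what you have produced is a valid self-contained verification of the cited result. The argument is sound: rewriting $\wh{\ell}(\mathbf{p})=\max\{\ell\in\range{m}:N(\alpha\ell/m)\geq \ell\}$ with $N(t)=\sum_{j\in\range{m}}\ind{p_j\leq t}$ is the standard switching relation for BH, and your key observation — that \eqref{propppprime} forces $N'(t)=N(t)$ for every $t\geq p_i$, because indices with $p_j\leq p_i$ are counted by both sides at any such threshold while indices with $p_j>p_i$ have identical $p$-values — is exactly what is needed, since on each of the two events in the statement the relevant BH threshold ($\alpha\wh{\ell}(\mathbf{p})/m$, respectively $\alpha\wh{\ell}(\mathbf{p}')/m$) is $\geq p_i$ and hence lies in the regime where the two counting functions agree. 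Both one-sided inclusions, the chaining that yields $\wh{\ell}(\mathbf{p})=\wh{\ell}(\mathbf{p}')$ on the common event, and your handling of the degenerate case $\wh{\ell}(\mathbf{p})=0$ are all correct. A purely cosmetic remark: the paper's text describes $p_{(\ell)}$ as the ``$\ell$-th largest'' element, but the increasing-order convention you use is plainly the intended one (otherwise neither the Simes combination \eqref{FSimes} nor \eqref{BHrej} makes sense), so this is not a discrepancy chargeable to you.
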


\begin{lemma}\label{lem:beta}
    For $V_{(1)}>\dots>V_{(\ell)}$ the order statistics of $\ell$ iid uniform variables on $[0,1]$, we have for all $a\in \range{\ell}$, $V_{(a)} \sim \beta(\ell+1-a,a)$. In addition, if $a<\ell$, $\E(1/V_{(a)})= \ell/(\ell-a)$.
\end{lemma}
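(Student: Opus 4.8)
The plan is to reduce the claim to the standard distribution theory of uniform order statistics. Since the $V_{(a)}$ are indexed in \emph{decreasing} order, the $a$-th largest value $V_{(a)}$ coincides with the $(\ell+1-a)$-th \emph{smallest} among the $\ell$ iid uniforms. First I would recall (or re-derive from the elementary multinomial argument that counts how many of the $\ell$ variables fall below, at, and above a level $x$) that the $j$-th smallest uniform order statistic follows a $\beta(j,\ell-j+1)$ law. Applying this with $j=\ell+1-a$ gives at once $V_{(a)}\sim\beta(\ell+1-a,a)$, which is the first assertion.

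For the moment computation, write $r=\ell+1-a$ and $s=a$, so that $V_{(a)}$ has density $x^{r-1}(1-x)^{s-1}/B(r,s)$ on $[0,1]$, where $B$ is the Beta function. Then $\E(1/V_{(a)})=B(r-1,s)/B(r,s)$, and using $B(r,s)=\Gamma(r)\Gamma(s)/\Gamma(r+s)$ together with $\Gamma(r)=(r-1)\Gamma(r-1)$ this ratio simplifies to $(r+s-1)/(r-1)$. Substituting $r+s-1=\ell$ and $r-1=\ell-a$ yields $\E(1/V_{(a)})=\ell/(\ell-a)$, as claimed.

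There is essentially no deep obstacle here; the only point requiring care is the integrability of $x\mapsto x^{-1}$ against the beta density near $0$. The integral $\int_0^1 x^{r-2}(1-x)^{s-1}\,dx$ converges precisely when $r>1$, i.e. when $\ell+1-a>1$, which is exactly the hypothesis $a<\ell$; for $a=\ell$ the variable $V_{(\ell)}$ is the minimum of the sample and $\E(1/V_{(\ell)})=+\infty$, so the restriction cannot be dropped. I would therefore flag this convergence condition explicitly at the step where the Beta-function ratio is rewritten in closed form.
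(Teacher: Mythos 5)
Your proof is correct: the paper states this lemma as a classical fact about uniform order statistics and provides no proof of its own, and your derivation—identifying $V_{(a)}$ with the $(\ell+1-a)$-th smallest order statistic, invoking its $\beta(\ell+1-a,a)$ density, and evaluating $\E(1/V_{(a)})$ as a ratio of Beta functions—is exactly the standard argument being implicitly relied upon. Your explicit remark that the hypothesis $a<\ell$ is precisely the integrability condition (and that $\E(1/V_{(\ell)})=+\infty$ for the sample minimum) is a worthwhile clarification of why the restriction in the lemma cannot be dropped.
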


\begin{lemma}[\cite{klenke2010stochastic}]\label{lem:domin}
For $Z_1,\dots,Z_m$ independent Bernoulli variables of respective parameters $\nu_i\in [0,1]$, $i\in \range{m}$, the Poisson binomial variable $\sum_{i\in \range{m}} Z_i$ is stochastically larger than a binomial variable of parameters $m$ and $\nu=\prod_{i\in \range{m}} \nu_i^{1/m}$.
\end{lemma}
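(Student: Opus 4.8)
The plan is to reduce the general Poisson binomial to the symmetric binomial $\mathcal B(m,\nu)$ by a \emph{geometric-mean smoothing} argument, exploiting that stochastic order is preserved under the addition of an independent summand. First I would dispose of the degenerate case: if some $\nu_i=0$ then $\nu=\prod_\ell\nu_\ell^{1/m}=0$, so $\mathcal B(m,\nu)$ is a point mass at $0$ and the domination is automatic. Hence I may assume $\nu_i>0$ for all $i$, and it suffices to show that replacing any pair $(\nu_i,\nu_j)$ by $(\sqrt{\nu_i\nu_j},\sqrt{\nu_i\nu_j})$, which keeps the product $\prod_\ell\nu_\ell$ fixed, can only decrease each survival probability $\mathbb{P}(\sum_\ell Z_\ell\geq t)$.

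The base step is the two-coordinate comparison. Fix $i\neq j$, write $a=\nu_i$, $b=\nu_j$, and set $p=\sqrt{ab}$. Comparing the laws on $\{0,1,2\}$ of $Z_i+Z_j$ (parameters $a,b$) and of $Z_i'+Z_j'$ (parameters $p,p$), both assign mass $ab$ to the value $2$, so $\mathbb{P}(Z_i+Z_j\geq 2)=\mathbb{P}(Z_i'+Z_j'\geq 2)$; meanwhile $\mathbb{P}(Z_i+Z_j\geq 1)=a+b-ab$ and $\mathbb{P}(Z_i'+Z_j'\geq 1)=2p-p^2=2\sqrt{ab}-ab$. Since $a+b\geq 2\sqrt{ab}$ by AM--GM, $Z_i+Z_j$ is stochastically larger than $Z_i'+Z_j'$. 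As the remaining sum $R=\sum_{\ell\neq i,j}Z_\ell$ is independent of both pairs, a Strassen coupling (realize $Z_i+Z_j\geq Z_i'+Z_j'$ almost surely with $R$ independent) gives $Z_i+Z_j+R$ stochastically larger than $Z_i'+Z_j'+R$. This is exactly the claim that the geometric-mean smoothing of one pair weakly decreases every survival probability of $\sum_\ell Z_\ell$.

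I would then iterate the smoothing. Passing to logarithms, the operation replaces $(\log\nu_i,\log\nu_j)$ by their arithmetic mean, which strictly decreases the sum of squares $\sum_\ell(\log\nu_\ell)^2$ by $\tfrac12(\log\nu_i-\log\nu_j)^2$ while preserving $\sum_\ell\log\nu_\ell$. Choosing at each step a pair realizing the current maximal spread drives $(\log\nu_1,\dots,\log\nu_m)$ to its common mean, so the parameter vector converges to $(\nu,\dots,\nu)$ with $\nu=(\prod_\ell\nu_\ell)^{1/m}$. By transitivity of stochastic order, the original Poisson binomial remains stochastically larger than the vector obtained after any finite number of smoothings; and since each $\mathbb{P}_{\boldsymbol\nu}(\sum_\ell Z_\ell\geq t)$ is a polynomial in $(\nu_1,\dots,\nu_m)$, hence continuous, I can pass to the limit to obtain $\mathbb{P}_{\boldsymbol\nu}(\sum_\ell Z_\ell\geq t)\geq\mathbb{P}_{(\nu,\dots,\nu)}(\sum_\ell Z_\ell\geq t)$ for every $t$. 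This is precisely the asserted stochastic domination over $\mathcal B(m,\nu)$.

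The only genuinely delicate point is the convergence of the iterated geometric averaging to the symmetric vector together with the limit interchange; both are routine once one records the strict decrease of the (log-)variance monovariant under each nontrivial pairwise averaging and notes that a finite state space and a fixed threshold $t$ make the continuity argument immediate. The two-coordinate base case is nothing more than AM--GM, and the stability of stochastic order under an independent perturbation is the standard Strassen coupling, so these pose no obstacle.
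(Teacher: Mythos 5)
Your proof is correct, but there is nothing in the paper to compare it against line by line: the paper does not prove Lemma~\ref{lem:domin} at all, it imports it as a black box from the cited reference (Klenke and Mattner, 2010, \emph{Stochastic ordering of classical discrete distributions}) and uses it inside the proof of Theorem~\ref{th:SimesAdapt}. Your argument is therefore a genuinely different route: a self-contained, elementary proof by pairwise geometric-mean smoothing. I checked its three ingredients and they all hold. The two-coordinate step is exactly right: with parameters $(a,b)$ versus $(\sqrt{ab},\sqrt{ab})$ the masses at the value $2$ coincide ($ab=p^2$), and the survival probabilities at $1$ compare via $a+b-ab\geq 2\sqrt{ab}-ab$, i.e.\ AM--GM; note the equality at the top atom is forced, since $\P(\mathcal{B}(m,\nu)\geq m)=\nu^m\leq \prod_i\nu_i=\P(\sum_i Z_i=m)$ shows the geometric mean is the largest parameter for which the lemma could possibly be true, so your smoothing loses nothing. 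The stability of stochastic order under an independent summand is standard, though invoking Strassen is overkill: conditioning on $R$, $\P(Z_i+Z_j+R\geq t)=\E\big[\P(Z_i+Z_j\geq t-R\mid R)\big]$, already gives the inequality. Finally, the iteration converges because averaging the maximal-spread pair of logarithms preserves the mean and shrinks the sum of squared deviations by $\tfrac12(\log\nu_i-\log\nu_j)^2\geq V/(2m)$, hence geometrically, and the survival functions are polynomials in $(\nu_1,\dots,\nu_m)$, so the limit interchange is immediate; the degenerate case $\nu_i=0$ is correctly set aside. What your approach buys is a short, fully elementary and self-contained justification that could replace the external citation; what the paper's choice buys is brevity and access to the sharper characterization in Klenke--Mattner (their result identifies the geometric mean as the exact threshold for stochastic domination of a binomial by a Poisson binomial, not merely a sufficient choice).
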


\begin{lemma}[Lemma 1 of \cite{BKY2006}]\label{lem:BY}
    If $T$ is a Binomial variable with parameter $m-1\geq 0$ and $\nu\in(0,1]$, we have
  $$
  \E[1/(T+1)]=(1-(1-\nu)^{m})/(m\nu)\leq 1/(m\nu). 
  $$  
\end{lemma}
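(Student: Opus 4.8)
The plan is to compute $\E[1/(T+1)]$ in closed form and then discard a nonnegative term to obtain the stated bound. The cleanest route I would take uses the elementary identity $\frac{1}{T+1}=\int_0^1 x^{T}\,dx$, valid for every nonnegative integer value of $T$, together with the probability generating function of the binomial law. Writing $T\sim\mathrm{Binomial}(m-1,\nu)$, an application of Fubini (the integrand is nonnegative and bounded) gives
\[
\E\left[\frac{1}{T+1}\right]=\int_0^1 \E[x^{T}]\,dx=\int_0^1\bigl(1-\nu+\nu x\bigr)^{m-1}\,dx,
\]
where the last equality is the generating function $\E[x^T]=(1-\nu+\nu x)^{m-1}$ of a $\mathrm{Binomial}(m-1,\nu)$ variable.

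Next I would evaluate the integral by the substitution $u=1-\nu+\nu x$, so that $du=\nu\,dx$ and the limits become $u=1-\nu$ at $x=0$ and $u=1$ at $x=1$. This yields
\[
\int_0^1(1-\nu+\nu x)^{m-1}\,dx=\frac{1}{\nu}\int_{1-\nu}^{1}u^{m-1}\,du=\frac{1-(1-\nu)^{m}}{m\nu},
\]
which is exactly the claimed equality. The inequality $\E[1/(T+1)]\le 1/(m\nu)$ then follows immediately, since $(1-\nu)^m\ge 0$ forces $1-(1-\nu)^m\le 1$.

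There is essentially no obstacle here; the only points requiring a word of care are the degenerate cases. When $m=1$ the variable $T$ is identically $0$, both sides equal $1$, and the formula returns $(1-(1-\nu))/\nu=1$; when $\nu=1$ the factor $(1-\nu)^m$ vanishes and the identity reads $1/m$, consistent with $T\equiv m-1$. A purely combinatorial alternative avoids the integral altogether: expanding the expectation over the binomial pmf and applying $\frac{1}{k+1}\binom{m-1}{k}=\frac{1}{m}\binom{m}{k+1}$, one reindexes with $j=k+1$, factors $\nu^{j-1}=\nu^{j}/\nu$, and recognizes $\sum_{j=1}^{m}\binom{m}{j}\nu^{j}(1-\nu)^{m-j}=1-(1-\nu)^m$, which gives the same closed form. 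Since the result is quoted verbatim as Lemma~1 of \cite{BKY2006}, either derivation suffices.
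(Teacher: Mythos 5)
Your proof is correct. Note that the paper itself offers no proof of this statement: it is quoted verbatim as Lemma~1 of \cite{BKY2006} and used as an external ingredient in the proof of Theorem~\ref{th:SimesAdapt}, so there is no in-paper argument to compare against; your derivation simply supplies what the paper delegates to the literature. Both of your routes are valid and standard. The integral route is airtight: $\frac{1}{T+1}=\int_0^1 x^T\,dx$ holds for every nonnegative integer $T$, the interchange of expectation and integral is justified since $0\le x^T\le 1$, the probability generating function $\E[x^T]=(1-\nu+\nu x)^{m-1}$ is exact for $T\sim\mathrm{Binomial}(m-1,\nu)$ (including the degenerate case $m-1=0$, where it is identically $1$), and the substitution $u=1-\nu+\nu x$ gives the closed form $\bigl(1-(1-\nu)^m\bigr)/(m\nu)$ with no division-by-zero issue since $\nu\in(0,1]$. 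The combinatorial alternative via $\frac{1}{k+1}\binom{m-1}{k}=\frac{1}{m}\binom{m}{k+1}$ is equally complete. The final inequality follows from $(1-\nu)^m\ge 0$, and your check of the edge cases $m=1$ and $\nu=1$ confirms consistency.
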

\et{
The next lemma has been suggested by an anonymous referee.
Recall the definition of PRDS given in \S~\ref{sec:proofSimes}.
\begin{lemma}[Lemma~A.2 of \cite{Bogomolov23}]\label{lem:PRDSgroup}
    Let $(p_i)_{i\in \range{m}}$ be a $p$-value family, $(G_k)_{k\in [K]}$ a partition of $\range{m}$ and assume the following:
    \begin{itemize}
        \item Independence between groups: for $k\neq k'$, $(p_i)_{i\in G_k}$ is independent of $(p_i)_{i\in G_{k'}}$;
        \item PRDS inside each group: for $k\in \range{m}$, $(p_i)_{i\in G_k}$ is a $p$-value family which is PRDS on $G_k$.  
    \end{itemize}
    Then the $p$-value family $(p_i)_{i\in \range{m}}$ is PRDS on $\range{m}$.
\end{lemma}
We provide a proof for completeness.
\begin{proof}
  Fix $i\in \range{m}$ and a nondecreasing set $D\subset [0,1]^m$, and prove that the function
$
u\mapsto \P((p_i)_{i\in \range{m}}\in D\:|\: p_i=u)
$
is nondecreasing. Denote $k_i$ the unique $k$ such that $i\in G_k$ and let
$$
D_{k_i,(p_j)_{j\notin G_{k_i}}}=\{(p_j)_{j\in G_{k_i}}\in [0,1]^{G_{k_i}}\::\: (p_j)_{j\in \range{m}}\in D\}
$$
which is clearly a nondecreasing (measurable) set of $[0,1]^{G_{k_i}}$. By using the two assumptions, we have that $(p_i)_{i\in G_{k_i}}$ is PRDS on $G_k$ conditionally on $(p_j)_{j\notin G_{k_i}}$. Hence, 
$$
u\mapsto \P((p_j)_{j\in G_{k_i}}\in D_{k_i,(p_j)_{j\notin G_{k_i}}} \:|\: p_i=u, (p_j)_{j\notin G_{k_i}})
$$
is nondecreasing. We obtain the result by integrating with respect to $(p_j)_{j\notin G_{k_i}}$.
\end{proof}
}
\section{Computational shortcut for the combinations of conformal $p$-values method}\label{sec:shortcut_supplementary}

Computing the batch prediction set for our methods is in general of complexity of order $K^m$ times the cost of computing the combining function (e.g., order $m$ for Fisher, or $m\log m$ for Simes or adaptive Simes)\footnote{In general, the cost of computing the $p$-value family $(p^{(k)}_i, k\in [K], i\in [m])$ is negligible wrt $K^m$.}.  The aim of this section is to reduce this complexity when the user only want to report lower/upper bounds for $m_k(Y)$, $k\in \range{K}$ \eqref{countk}. We also discuss the issue of reconstructing the batch prediction set from these bounds.

\subsection{Shortcut for computing the bounds}

Naively computing the bounds $[\ell_{\alpha}^{(k)}, u_{\alpha}^{(k)}]$, $k \in [K]$, in  (\ref{lbub}), which are derived from the Simes conformal prediction set in  (\ref{PredSimes}) or its adaptive version in  (\ref{PredSimesAdapt}), results in an exponential complexity of $O(K^m)$.
 This quickly becomes impractical for large batch sizes. To address this issue, we introduce a novel shortcut that allows for a more efficient computation of these bounds, with a computational complexity of at most $O(K \times m^2)$.

This shortcut applies to both the full-calibrated and class-calibrated conformal $p$-values. Proposition \ref{prop:shortcut} shows that it is exact when $K=2$ and the scores produced by the machine learning model are probabilities. However, when $K>2$ or when arbitrary scores are used, the shortcut becomes conservative, potentially yielding wider bounds but never narrower ones.
This ensures that the coverage guarantee of at least $1-\alpha$ probability is maintained.

Algorithm \ref{alg:shortcut} provides the pseudocode for the shortcut to compute the bounds $[\ell_{\alpha}^{(k)}, u_{\alpha}^{(k)}]$ derived from the (adaptive) Simes conformal prediction set.

\begin{algorithm}[!htb]
\SetKwInOut{Input}{Input}
\Input{Full-calibrated or class-calibrated conformal $p$-values $(p_i^{(k)})_{i \in [m], k \in [K]}$, level $\alpha \in (0,1)$, an estimator $\hat{m}_0(\mathbf{p})$ that is monotone in the $p$-values $\mathbf{p}=(p_i)_{i \in [m]}$.}

\For{each $k\in [K]$}{

Sort $(p_i^{(k)})_{i \in [m]}$ in decreasing order and store as $a_1 \geq \ldots \geq a_m$;

Sort $(\max\{p^{(j)}_i, j\neq k\})_{i \in [m]}$ in decreasing order and store as $b_1 \geq \ldots \geq b_m$;

\For{each $v\in\{m, \ldots,0\}$}{

  $(q_1,\ldots,q_m) \gets (a_1,\ldots,a_{v},b_1,\ldots,b_{m-v})$;

  Sort $(q_i)_{i \in [m]}$ in increasing order and store as $q_{(1)} \leq \ldots \leq q_{(m)}$;

  $\displaystyle h_{v,k} \gets \min\Big( \frac{\hat{m}_0(\mathbf{q})}{\ell} q_{(\ell)}, \ell \in [m] \Big)$

}  

$\ell_{\alpha}^{(k)} \gets \min(v\in \{0,\ldots,m\}: h_{v,k} > \alpha)$;

$u_{\alpha}^{(k)} \gets \max(v\in \{0,\ldots,m\}: h_{v,k} > \alpha)$;

}

\SetKwInOut{Output}{Output}

\Output{ $[\ell_{\alpha}^{(k)},u_{\alpha}^{(k)}]$, $k \in [K]$ }

\caption{Shortcut for computing the bounds $[\ell_{\alpha}^{(k)},u_{\alpha}^{(k)}]$, $k \in [K]$, with (adaptive) Simes predition set.} \label{alg:shortcut}
\end{algorithm}

\begin{proposition}\label{prop:shortcut}
For any $\alpha \in (0,1)$, let $[\ell_{\alpha}^{(k)}, u_{\alpha}^{(k)}]$, $k \in [K]$ be the bounds defined by  (\ref{lbub}), derived from the Simes prediction sets in  (\ref{PredSimes}) or its adaptive version in  (\ref{PredSimesAdapt}). Algorithm \ref{alg:shortcut} returns the bounds $[\tilde\ell_{\alpha}^{(k)}, \tilde u_{\alpha}^{(k)}]$ such that $\tilde \ell_{\alpha}^{(k)} \leq \ell_{\alpha}^{(k)}$ and $\tilde u_{\alpha}^{(k)} \geq u_{\alpha}^{(k)}$ for all $k \in [K]$, with a computational complexity of at most $O(K \times m^2)$. In addition, when $K = 2$ and the scores produced by the machine learning model are probabilities, i.e., $S_k(x_{n+i}) = 1 - S_{3-k}(x_{n+i})$ for $k \in \{1, 2\}$ and $i \in [m]$, it holds that $\tilde \ell_{\alpha}^{(k)} = \ell_{\alpha}^{(k)}$ and $\tilde u_{\alpha}^{(k)} = u_{\alpha}^{(k)}$ for all $k \in [K]$.
\end{proposition}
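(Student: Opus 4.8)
The plan is to reduce the statement to two set inclusions for the count sets $\mathcal{N}_k(\mathcal{C}^m_\alpha)=\{m_k(y):y\in\mathcal{C}^m_\alpha\}$. Write $F$ for the (adaptive) Simes combining function and, for a candidate count $v\in\{0,\dots,m\}$, let $\mathbf{q}^{(v,k)}$ be the length-$m$ vector assembled in Algorithm~\ref{alg:shortcut}, so that $h_{v,k}=F(\mathbf{q}^{(v,k)})$. The conservative claim $\tilde\ell^{(k)}_\alpha\le\ell^{(k)}_\alpha$, $\tilde u^{(k)}_\alpha\ge u^{(k)}_\alpha$ follows from the inclusion $\mathcal{N}_k(\mathcal{C}^m_\alpha)\subseteq\{v:h_{v,k}>\alpha\}$, since taking the min (resp.\ max) over a larger set can only decrease (resp.\ increase) it. The exactness claim for $K=2$ follows from the reverse inclusion, which together with the forward one gives $\mathcal{N}_k(\mathcal{C}^m_\alpha)=\{v:h_{v,k}>\alpha\}$ and hence equal minima and maxima.

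I would first record two structural facts. (a) $F$ is symmetric and non-decreasing in each coordinate: for Simes each order statistic is non-decreasing in the inputs with $\hat m_0\equiv m$, and for adaptive Simes this uses the assumed monotonicity of $\hat m_0(\mathbf p)$. As a consequence $F(\mathbf x)\le F(\mathbf x')$ whenever $\mathbf x'$ \emph{sorted-dominates} $\mathbf x$ (every order statistic of $\mathbf x'$ is at least the corresponding one of $\mathbf x$): by symmetry reduce to sorted vectors, then apply coordinatewise monotonicity. (b) A merge lemma: if equal-size multisets satisfy $A_1\succeq A_2$ and $B_1\succeq B_2$ in the sorted-domination order, then $A_1\uplus B_1\succeq A_2\uplus B_2$. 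This follows because sorted-domination is equivalent to the pointwise inequality of counting functions $N_S(x)=|\{s\in S:s\le x\}|$, and these add under disjoint union.

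For the conservative inclusion, fix $y$ with $m_k(y)=v$ and let $I=\{i:y_{n+i}=k\}$. For $i\in I$ the contributing $p$-value equals $a_i=p^{(k)}_i$, while for $i\notin I$ it is at most $b_i=\max_{j\ne k}p^{(j)}_i$; hence the $p$-value vector of $y$ is coordinatewise below $\{a_i:i\in I\}\uplus\{b_i:i\notin I\}$. Since the top-$v$ of the $a$'s sorted-dominate any $v$-subset of the $a$'s and the top-$(m-v)$ of the $b$'s dominate any $(m-v)$-subset of the $b$'s, the merge lemma shows this vector is sorted-dominated by $\mathbf{q}^{(v,k)}$. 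By monotonicity, $F(\mathbf p(y))\le h_{v,k}$, so $y\in\mathcal{C}^m_\alpha$ forces $h_{v,k}>\alpha$, proving $\mathcal{N}_k\subseteq\{v:h_{v,k}>\alpha\}$. The complexity bound then comes from the $m+1$ candidate counts per class, each treatable in $O(m)$ time (via an incremental merge) after an initial $O(m\log m)$ sort, giving $O(K\times m^2)$.

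The delicate part is exactness for $K=2$ with probability scores, which requires that the optimistic vector $\mathbf{q}^{(v,k)}$ be \emph{realized} by some admissible label vector. Here $b_i=p^{(3-k)}_i$, and because $S_k(X_{n+i})=1-S_{3-k}(X_{n+i})$, the two conformal $p$-values $a_i=p^{(k)}_i$ and $b_i=p^{(3-k)}_i$ are oppositely monotone in $S_k(X_{n+i})$: as the threshold $S_k(X_{n+i})$ grows, $a_i$ is non-increasing while $b_i$ is non-decreasing (since $S_{3-k}(X_{n+i})$ decreases). Ordering the test points by increasing $S_k(X_{n+i})$ thus lists the $a_i$ non-increasingly and the $b_i$ non-decreasingly, so taking $I$ to be the $v$ points with smallest $S_k(X_{n+i})$ makes $\{a_i:i\in I\}$ exactly the top-$v$ $a$'s and $\{b_i:i\notin I\}$ exactly the top-$(m-v)$ $b$'s. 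The resulting $y^{(I)}$ has $m_k(y^{(I)})=v$ and, $F$ depending only on the $p$-value multiset, $F(\mathbf p(y^{(I)}))=h_{v,k}$; hence $h_{v,k}>\alpha$ yields $y^{(I)}\in\mathcal{C}^m_\alpha$ and $v\in\mathcal{N}_k$. I expect this anti-monotonicity step, together with the merge lemma, to be the main obstacle: it is precisely where the two-class probability-score structure is used, and it breaks for $K>2$ because $b_i=\max_{j\ne k}p^{(j)}_i$ then has no monotone link to $a_i$, so the optimistic vector need not be realizable and the shortcut is only conservative.
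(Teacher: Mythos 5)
Your proof is correct and follows essentially the same route as the paper's: bound every admissible $p$-value vector by the algorithm's vector $\mathbf{q}$ (top-$v$ values of $p^{(k)}_i$ merged with top-$(m-v)$ values of $\max_{j\neq k}p^{(j)}_i$) and invoke monotonicity of the (adaptive) Simes statistic for conservativeness, then for $K=2$ with probability scores realize $\mathbf{q}$ exactly by an admissible labeling via the anti-monotone link between $p^{(k)}_i$ and $p^{(3-k)}_i$ induced by $S_k(x_{n+i})=1-S_{3-k}(x_{n+i})$. Your explicit merge lemma for sorted domination simply makes rigorous the multiset inequality $\mathbf{q}\geq\mathbf{p}(\tilde{y})\geq\mathbf{p}(y)$ that the paper asserts in passing.
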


\begin{proof}
First, let us establish that the time complexity of the algorithm is
$O(K\times m^2)$.
To produce the sorted concatenation of two sorted vectors $a_1,\ldots,a_{m-i}$ and $b_1,\ldots,b_i$ takes linear time, i.e. $O(m)$. This merging process, which generates the sorted concatenation, is repeated $m+1$ times for each $k$. 
As a result, for each $k$, this step 
contributes $O(m^2)$, leading to an overall complexity of $O(K\times m^2)$.

We first discuss the case where $\hat{m}_0 = m$, meaning the estimator is the constant $m$.
Let $\mathbf{p}=(p_i)_{i\in [m]}$ denotes a vector of $p$-values, with the sorted values represented as $p_{(1)} \leq \ldots \leq p_{(m)}$. 
Simes' test is defined as $\displaystyle F_{\mbox{\tiny Simes}}(\mathbf{p}) = \min\Big( \frac{m}{\ell} p_{(\ell)}, \ell \in [m] \Big)$. 
This test is monotonic, meaning that if
$\mathbf{p}\leq \mathbf{q}$ componentwise (i.e.
$p_{(i)} \leq q_{(i)}$ for all $i \in [m]$), then 
$F_{\mbox{\tiny Simes}}(\mathbf{p}) \leq F_{\mbox{\tiny Simes}}(\mathbf{q})$.

By definition, $v \notin \mathcal{N}_k(\mathcal{C}^m_{\alpha,\mbox{\tiny Simes} })$ if $F_{\mbox{\tiny Simes}}(\mathbf{p}(y))  \leq \alpha$ for all $y \in [K]^m$ such that $m_k(y)=v$, for any $v \in \{0,\ldots,m\}$.

Then, for some $\mathbf{q}=(q_i)_{i\in [m]}$ with $\mathbf{q}\geq \mathbf{p}(y)$ for all $y \in [K]^m$ such that $m_k(y)=v$,  $F_{\mbox{\tiny Simes}}(\mathbf{q})  \leq \alpha$ implies $v \notin \mathcal{N}_k(\mathcal{C}^m_{\alpha,\mbox{\tiny Simes} })$. However, $F_{\mbox{\tiny Simes}}(\mathbf{q})  > \alpha$ does not necessarily imply $v \in \mathcal{N}_k(\mathcal{C}^m_{\alpha,\mbox{\tiny Simes} })$.

Given $k$ and $v$, Algorithm \ref{alg:shortcut} identifies a suitable vector $\mathbf{q}=\mathbf{q}_{v,k}$ such that $\mathbf{q}\geq \mathbf{p}(y)$ for all $y \in [K]^m$ where $m_k(y)=v$.  
Then we let
$$
\tilde{\mathcal{N}}_k=\{v \in \{0,\ldots,m\}\::\: F_{\mbox{\tiny Simes}}(\mathbf{q}_{v,k}) > \alpha\},
$$
which ensures $\tilde{\mathcal{N}}_k \supseteq \mathcal{N}_k(\mathcal{C}^m_{\alpha,\mbox{\tiny Simes} })$
The resulting bounds are given by 
$[\tilde{\ell}_{\alpha}^{(k)},\tilde{u}_{\alpha}^{(k)}]=[\min \tilde{\mathcal{N}}_k, \max \tilde{\mathcal{N}}_k]$, which guarantees that $\tilde{\ell}_{\alpha}^{(k)} \leq \ell_{\alpha}^{(k)}$ and $\tilde{u}_{\alpha}^{(k)} \geq u_{\alpha}^{(k)}$ for every $k \in [K]$.

We now need to demonstrate that  Algorithm \ref{alg:shortcut} produces a vector $\mathbf{q}$ such that $\mathbf{q}\geq \mathbf{p}(y)$ for all $y \in [K]^m$ such that $m_k(y)=v$. 

For any $y \in [K]^m$ such that $m_k(y) = v$, the vector $\mathbf{p}(y)$ consists of $v$ conformal $p$-values $p_{i_1}^{(k)}, \ldots, p_{i_v}^{(k)}$ 
and $m-v$ conformal $p$-values $p_{i_{v+1}}^{(j_1)}, \ldots, p_{i_{m}}^{(j_{m-v})}$, where $i_1,\ldots, i_m$ is a permutation of $[m]$ and $j_1, \ldots, j_{m-v} \in [K]\setminus \{k\}$. If we consider the vector $\mathbf{p}(\tilde{y})$, which is formed by $p_{i_1}^{(k)}, \ldots, p_{i_v}^{(k)}$ and the maximum values $\max(p_{i_{v+1}}^{(j)}, j\neq k), \ldots, \max(p_{i_{v+m}}^{(j)}, j\neq k)$, we can conclude that $\mathbf{p}(\tilde{y}) \geq \mathbf{p}(y)$. Since the vector $\mathbf{q}$ in Algorithm \ref{alg:shortcut} is constructed using the largest $v$ values from $(p^{(k)}_i)_{i \in [m]}$ and the largest $m-v$ values from $(\max(p_{i}^{(j)}, j\neq k))_{i \in [m]}$, it follows that $\mathbf{q} \geq \mathbf{p}(\tilde{y}) \geq \mathbf{p}(y)$ for all $y \in [K]^m$ such that $m_k(y) = v$. This establishes the conservativeness of the shortcut for $K\geq 2$ and for any scores produced by the machine learning model.

If $K=2$ and the scores produced by the machine learning model are probabilities, then we have the relationship $S_k(x_{n+i}) = 1-S_{3-k}(x_{n+i})$ for $k \in \{1,2\}$ and $i \in [m]$. 
Given this relationship, there exists a permutation $i_1,\ldots,i_m$ such that the sequence  
$S_k(x_{n+{i_{j_1}}}) \leq \ldots \leq S_k(x_{n+{i_{j_m}}})$ is nondecreasing, while the sequence $S_{3-k}(x_{n+{i_{j_1}}}) \geq \ldots \geq S_{3-k}(x_{n+{i_{j_m}}})$ is nonincreasing. Consequently, the ranks of $S_k(x_{n+{j_1}}), \ldots, S_k(x_{n+{j_m}})$ within the set $(S_{y_j}(x_j))_{j\in \mathcal{D}^{(k)}_{{\tiny \mbox{cal}}} }$ will be nondecreasing, while the ranks of $S_{3-k}(x_{n+{j_1}}), \ldots, S_{3-k}(x_{n+{j_m}})$ within the set $(S_{y_j}(x_j))_{j\in \mathcal{D}^{(3-k)}_{{\tiny \mbox{cal}}} }$ will be nonincreasing. 
Since these ranks are proportional to the conformal $p$-values, it follows that $p_{i_1}^{(k)} \leq \ldots \leq p_{i_m}^{(k)}$ and  $p_{i_1}^{(3-k)} \geq \ldots \geq p_{i_m}^{(3-k)}$. 

Consider $y \in [K]^m$ such that $m_k(y) = v$. Let the vector $\mathbf{p}(y^*)$ consist of the $v$  largest values from ($p_{i}^{(k)})_{i \in [m]}$, specifically $p_{i_{m-v+1}}^{(k)}, \ldots, p_{i_{m}}^{(k)}$. 
Consequently, the remaining 
 $m-v$ values in $\mathbf{p}(y^*)$ are $p_{i_{1}}^{(3-k)}, \ldots, p_{i_{m-v}}^{(3-k)}$, i.e. 
the largest
 $m-v$ values from ($p_{i}^{(3-k)})_{i \in [m]}$. Thus, we have $\mathbf{p}(y^*) \geq \mathbf{p}(y)$ for all $y \in [K]^m$ such that $m_k(y) = v$. Furthermore, by construction, $\mathbf{q}$ in Algorithm \ref{alg:shortcut} is equal to $\mathbf{p}(y^*)$. Therefore $F_{\mbox{\tiny Simes}}(\mathbf{q})  \leq \alpha$ if and only if  $F_{\mbox{\tiny Simes}}(\mathbf{p}(y))  \leq \alpha$ for all $y \in [K]^m$ such that $m_k(y)=v$.  This establishes the exactness of the shortcut when $K=2$ and $S_k(x_{n+i}) = 1-S_{3-k}(x_{n+i})$ for $k \in \{1,2\}$ and $i \in [m]$.

The validity of the shortcut for the adaptive version of Simes follows from the required monotonicity of the estimator: if
$\mathbf{p}(y)\leq \mathbf{q}$, then 
$\hat{m}_0(\mathbf{p}(y)) \leq \hat{m}_0(\mathbf{q})$ holds for any $y \in [K]^m$. This, combined with
$F_{\mbox{\tiny A-Simes}}(\mathbf{p}(y)) \leq \alpha$ if and only if 
$F_{\mbox{\tiny Simes}}(\mathbf{p}(y)) \leq m \alpha / \hat{m}_0(\mathbf{p}(y))$ yields the desired result.
\end{proof}

\subsection{Extension to other combining functions}

Algorithm \ref{alg:shortcut2} presents a more general approach for any $p$-value vector combining function \( F(\mathbf{p}) \), which is symmetric and monotone in the $p$-values \( \mathbf{p} = (p_i)_{i \in [m]} \). It requires the empirical threshold $
t = \xi_{(\lfloor (B+1)\alpha \rfloor)}
$
from Theorem \ref{th:gencontrol}, which depends on \( (m_k)_{k \in \range{K}} \) in the conditional model, i.e.
$t = t(\alpha, (m_k)_{k \in [K]})$.
The proof that Algorithm \ref{alg:shortcut2} yields conservative yet valid bounds is analogous to the previous result and is therefore omitted.

\begin{algorithm}[h!]
\SetKwInOut{Input}{Input}
\Input{Full-calibrated or class-calibrated conformal $p$-values $(p_i^{(k)})_{i \in [m], k \in [K]}$, level $\alpha \in (0,1)$,
$p$-value vector combining function $F(\mathbf{p})$ that is symmetric and monotone in the $p$-values $\mathbf{p}=(p_i)_{i \in [m]}$ and the corresponding critical value $t=t(\alpha, (m_k)_{k \in [K]})$.}

\For{each $k\in [K]$}{

 Sort $(p_i^{(k)})_{i \in [m]}$ in decreasing order and store as $a_1 \geq \ldots \geq a_m$;

 Sort $(\max\{p^{(j)}_i, j\neq k\})_{i \in [m]}$ in decreasing order and store as $b_1 \geq \ldots \geq b_m$;

 \For{each $v\in\{m, \ldots,0\}$}{

   $(q_1,\ldots,q_m) \gets (a_1,\ldots,a_{v},b_1,\ldots,b_{m-v})$;

   Sort $(q_i)_{i \in [m]}$ in increasing order and store as $q_{(1)} \leq \ldots \leq q_{(m)}$;

   $\displaystyle h_{v,k} \gets \ind{ F(\mathbf{q}) \geq \min\{t(\alpha, m_k = v, m_j), j \neq k  \} }$

}  

 $\ell_{\alpha}^{(k)} \gets \min(v\in \{0,\ldots,m\}: h_{v,k} > 0)$;

 $u_{\alpha}^{(k)} \gets \max(v\in \{0,\ldots,m\}: h_{v,k} > 0 \}$;

}

\SetKwInOut{Output}{Output}

\Output{ $[\ell_{\alpha}^{(k)},u_{\alpha}^{(k)}]$, $k \in [K]$ }

\caption{General shortcut for computing the bounds $[\ell_{\alpha}^{(k)},u_{\alpha}^{(k)}]$, $k \in [K]$.} \label{alg:shortcut2}
\end{algorithm}

\subsection{Batch prediction set reconstruction from the bounds}

As described in the previous subsections, from the bounds $[\ell_{\alpha}^{(k)},u_{\alpha}^{(k)}]$, $k \in [K]$, it is straightforward to produce a conservative batch prediction set $\mathcal{ \tilde C}^m_{\alpha}$ such that $\mathcal{ \tilde C}^m_{\alpha} \supseteq \mathcal{  C}^m_{\alpha}$. The cardinality of the conservative set  $\mathcal{\tilde C}^m_{\alpha}$ is the sum of all valid assignments of $( m_1,  \dots, m_K )$ occurrences, where $ \ell^{(k)}_\alpha \leq m_k \leq u^{(k)}_\alpha$ for each $ k \in \{1, \dots, K\} $, and \( m_1 +  \dots + m_K = m \), with each valid assignment counted by the multinomial coefficient \( \binom{m}{m_1, m_2, \dots, m_K} \):
$$|\mathcal{ \tilde C}^m_{\alpha}| = \sum_{\substack{ (m_1,\ldots,m_K)\, :\, \sum_{k=1}^{K} m_k=m ,\\ \ell^{(k)}_\alpha \leq m_k \leq u^{(k)}_\alpha \, \forall k \in [K]}}
 \binom{m}{m_1, m_2, \dots, m_K}.$$

For the reading zip code example, from Table \ref{tabUSPS}, we derive the bounds $[\ell^{(k)}_\alpha , u^{(k)}_\alpha]$ with $\alpha = 0.05$, which are as follows: 
\[
[1,2],\ [0,0],\ [0,0],\ [0,0],\ [1,1],\ [0,2],\ [0,2],\ [0,0],\ [0,1],\ [0,0] \quad \text{for} \quad k = 1, \dots, 10.
\]
The assignments $(m_1, \ldots, m_{10})$ that satisfy $m_1 + \ldots + m_{10} = 5$ and $\ell^{(k)}_\alpha \leq m_k \leq u^{(k)}_\alpha$ for each $k \in \{1, \dots, 10\}$ are ten:
\[
\begin{aligned}
(1, 0, 0, 0, 1, 0, 2, 0, 1, 0), & \quad (1, 0, 0, 0, 1, 1, 1, 0, 1, 0), & \quad (1, 0, 0, 0, 1, 1, 2, 0, 0, 0), \\
(1, 0, 0, 0, 1, 2, 0, 0, 1, 0), & \quad (1, 0, 0, 0, 1, 2, 1, 0, 0, 0), & \quad (2, 0, 0, 0, 1, 0, 1, 0, 1, 0), \\
(2, 0, 0, 0, 1, 0, 2, 0, 0, 0), & \quad (2, 0, 0, 0, 1, 1, 0, 0, 1, 0), & \quad (2, 0, 0, 0, 1, 1, 1, 0, 0, 0), \\
(2, 0, 0, 0, 1, 2, 0, 0, 0, 0). &
\end{aligned}
\]

The corresponding multinomial coefficients are 60, 120,  60,  60,  60,  60,  30,  60,  60 and 30, respectively. 
This results in a cardinality of the conservative set $|\tilde{\mathcal{C}}^m_{\alpha,\text{\tiny Simes}}| = 600$, compared to $|\mathcal{C}^m_{\alpha,\text{\tiny Simes}}| = 6$ given in Table \ref{tabUSPS}. 
This indicates that reconstructing the prediction set solely from the bounds is quite imprecise. For instance, the assignment $(2, 0, 0, 0, 1, 2, 0, 0, 0, 0)$ corresponds to $\binom{5}{2, 0, 0, 0, 1, 2, 0, 0, 0, 0}  =  30$ vectors of size 5, which include two 0s, one 4, and two 5s. 

While $\tilde{\mathcal{C}}^m_{\alpha}$ is not accurate in general, we can combine this information with individual conformal prediction sets $\mathcal{C}^m_{i,\alpha}$, $i\in \range{m}$ to allows for a more accurate batch prediction set reconstructed from the bounds. For this, specific shortcuts can be investigated to compute the individual conformal prediction sets $\mathcal{C}^m_{i,\alpha}$, $i\in \range{m}$. More specifically, for Simes' method, we can always use the Bonferroni individual prediction set to obtain a new batch prediction set from the bounds {\it both with low complexity that can only improve over $\mathcal{C}^m_{\alpha,\text{\tiny Bonf}}$}. 
In addition, the following example shows that this improvement can be strict: we see this as an important `proof of concept'.

For the example of one batch of the CIFAR dataset given in Figure \ref{animals} with $m=10$, $K=3$, and $\alpha=0.1$, the Bonferroni individual conformal prediction sets $\mathcal{C}^m_{i,\alpha}$ are $\{3\}$ for $i = 8$ and $\{1,2,3\}$ for $i =1,2,3,4,5,6,7,9,10$. On the other hand, the Simes bounds $[\ell^{(k)}_\alpha , u^{(k)}_\alpha]$ are $[0,8]$, $[0,9]$, and $[1,10]$ for $k=1,2,3$, which improve upon Bonferroni's $[0,9]$, $[0,9]$, and $[1,10]$. Consequently, the vector $(1, 1, 1, 1, 1, 1, 1, 3, 1, 1)$ must be excluded from $\mathcal{C}^m_{\alpha,\text{\tiny Bonf}}$ because it violates the constraint that the number of 1s must not exceed 8.

\subsection{Simulation results for large batches of test points}\label{subsec-sim-large-batches}
In order to demonstrate the feasibility and usefulness of the shortcut, we carried out simulations with test samples of size $m\in \{200, 2000\}$. Specifically, we considered the Gaussian multivariate setting described in \S~\ref{subsec-BVN} with classes one and two only. The calibration sample has an equal number of examples from each of the two classes.  

Table \ref{tabBVN2classeslowerbounds}  shows results for the case that the test sample has an equal number of examples from each of the two classes. Since the calibration set in each class is 400 examples, the smallest possible class conditional conformal $p$-value is 1/401, so the Bonferroni adjusted $p$-value is at least $m\cdot 1/401$ and the lower bounds are zero. However, Simes and modified Simes have informative lower bounds, and the tightness of the lower bounds increases with the SNR. The computational complexity is very reasonable, running in less than 0.01 seconds for $m=200$, and in 0.20-0.22 seconds for $m=2000$.

 \begin{table}[ht]
 \centering
 \begin{tabular}{|r|rrr|rrr|}
   \hline
   &\multicolumn{3}{|c|}{$m=200$}&\multicolumn{3}{|c|}{$m=2000$}\\ 
     &  & & Storey- &  &  & Storey- \\ 
   SNR & Bonf & Simes & Simes & Bonf & Simes &  Simes  \\ 
   \hline
  1.00 & 0.00 & 1.12 & 2.13 & 0.00 & 9 & 20  \\ 
  1.50 & 0.00 & 6.98 & 10.56 & 0.00 & 62 & 103 \\ 
 2.00 & 0.00 & 23.31 & 26.70 & 0.00 & 212 & 253  \\ 
 2.50 & 0.00 & 40.73 & 41.85 & 0.00 & 413 & 428  \\ 
 3.00 & 0.00 & 58.02 & 57.60 & 0.00 & 729 & 730  \\ 
 3.50 & 0.00 & 73.28 & 72.85 & 0.00 & 830 & 829  \\ 
 4.00 & 0.00 & 84.19 & 83.83 & 0.00 & 830 & 829  \\ 
 4.50 & 0.00 & 90.75 & 90.22 & 0.00 & 897 & 896  \\ 
    \hline
 \end{tabular}
 \caption{The average lower bound for class one at each SNR, for $m=200$ (columns 2--4) and for $m=2000$ (columns 5--7), at level $\alpha = 0.1$, for the following $p$-value combining functions: Bonferroni, Simes, and adaptive Simes using Storey's estimator  (see detailed data generation in text). The fraction of test sample examples from class one is half.  Based on 100 simulations. \label{tabBVN2classeslowerbounds}}
 \end{table}

Table \ref{tabBVN2classeslowerbounds2} provides  results when the distribution of the classes is uneven in the test sample. As in the setting of Table \ref{tabBVN2classeslowerbounds}, Bonferroni's combination method provides only trivial lower bounds so it is not shown. We also omit adaptive Simes since the performance is very similar to that of using Simes combining function. With Simes combining function, we see that as the fraction of test samples from class two increases or the signal strengthens, the probability of detecting that there are examples from class two increases and the expected lower bound increases. The detection of a lower bound being positive is important in many application. For example, in ecology, this is proof that an animal population is not extinct in an area. In medicine, the detection of evidence that a treatment can be positive (class one) in some patients  but negative (class two) in other patients suggests a qualitative interaction that can prompt further investigation.

 \begin{table}[ht]
 \centering
 \begin{tabular}{|c|c|cc|cc|}
   \hline
  & Fraction in test sample   &\multicolumn{2}{|c|}{Average lower bound}&\multicolumn{2}{|c|}{Probability of non-zero lower bound}\\ 
  SNR&  from class one  &  class 1& class 2 & class 1 & class 2 \\ 
   \hline
2   & 1 &  61.9&   0.004& 1.000 & 0.002   \\ 
   & 0.95&  57.9& 0.258 & 1.000 & 0.1334   \\ 
      & 0.9 & 54.0  & 1.225 &  1.000& 0.4242   \\ 
            & 0.7 & 37.39  & 10.12 & 1.000 & 0.9909   \\ 
3   & 1 & 133.6 & 0.008 &  1.000&    0.004\\ 
   & 0.95&  126.3&  2.028&  1.000&  0.680  \\ 
      & 0.9 &  118.3&7.242  &  1.000&   0.9756 \\ 
            & 0.7 &  88.05& 31.99 & 1.000 &  1.000  \\ 
4   & 1 &  175.7&  0.006&  1.000&    0.003\\ 
   & 0.95& 166.3 & 4.828 &  1.000 &  0.9832  \\ 
      & 0.9 & 157.0 & 13.24 & 1.000 &   0.9999 \\ 
            & 0.7 &  119.9&  47.88& 1.000 & 1.000   \\ 
 
    \hline
 \end{tabular}
 \caption{The average lower bound (columns 2 and 3) and probability that the lower bound is non-trivial (columns 4 and 5) using the Simes combination function, for each class at SNR=3, for $m=200$, at level $\alpha = 0.1$. For each SNR, each row has a different relative frequency of the number of examples from class one (in the calibration set, half the examples are from class one, see text for details). Based on 10000 simulations. \label{tabBVN2classeslowerbounds2}}
 \end{table}

\section{General $p$-value combining prediction set algorithm for the iid model}\label{appendix-generalcombinationsalgorithm-iid}
In \S~\ref{sec:numapprox} we provided the most general method of obtaining $1-\alpha$ level prediction sets using combinations of conformal $p$-values. Algorithm \ref{alg:general} shows the construction for the class conditional model. Its computational complexity is $B$ times the number of unique frequency distributions of $\range{K}^m$ vectors. For completeness, we provide in Algorithm \ref{alg:general_iid} an algorithm for the iid model, which requires only $B$ permutations. 

\begin{algorithm}[!htb]
\small
\SetKwInOut{Input}{Input}
\Input{Number of examples in the calibration set $n$;  
       combining function $F$;  
       level $\alpha \in (0,1)$;  
       number of permutations $B$;  
       conformal $p$-values $(p^{(y_{i})}_i)_{i\in \range{m}}$.
}

    \For{each $b\in [B]$}{
        
        Generate a random permutation $\pi_b$ of $[n+m]$;
        
        Compute null conformal $p$-values:
        \[
        \hat{p}_{i,b} \gets \frac{1+\sum_{j = 1}^{n} \ind{\pi_b(j) \geq \pi_b(n+i)} }
        {n+1}
        \]
        for $i \in [m]$;

        Compute combined statistic:\\
        $\xi_b \gets \Cp((\hat{p}_{i,b},i\in \range{m}))$;
    }

    Compute threshold:\\
    $t \gets \xi_{ ( \lfloor (B+1)\alpha \rfloor ) }$,\\ where $\xi_{(1)} \leq \ldots \leq \xi_{(B)}$ are the ordered test statistics and $\xi_{(0)}=-\infty$;

Construct batch prediction set:\\
$\mathcal{C}^m_{t,F}  \gets \{ y = (y_{i})_{i\in [m]} \in \range{K}^m\::\:  \Cp((p^{(y_{i})}_i)_{i\in \range{m}}) \geq t \}$;

\SetKwInOut{Output}{Output}
\Output{Batch prediction set $\mathcal{C}^{m}_{t,F}$.}

\caption{Constructing a $1-\alpha$ level batch prediction set, using combinations of conformal $p$-values, for the iid model}
\label{alg:general_iid}
\end{algorithm}

\section{Additional numerical experiments}\label{appendix-sim}

\subsection{Gaussian multivariate setting}\label{appendix-sim-Gaussian}

We provide more results for the data generation described in \S~\ref{subsec-BVN}. Figure \ref{BVNfig} shows the data available in one data generation. Table \ref{tabBVNlist} shows the batch prediction set for this batch using Bonferroni at $\alpha = 0.1$, as well as the Bonferroni and Simes $p$-values for each $y$ in the batch. Had the analyst used Simes instead of Bonferroni at $\alpha = 0.1$, the batch prediction set size would have been 25\% smaller. 

\begin{figure}[h!]
\begin{center}
\includegraphics[scale = 0.4]{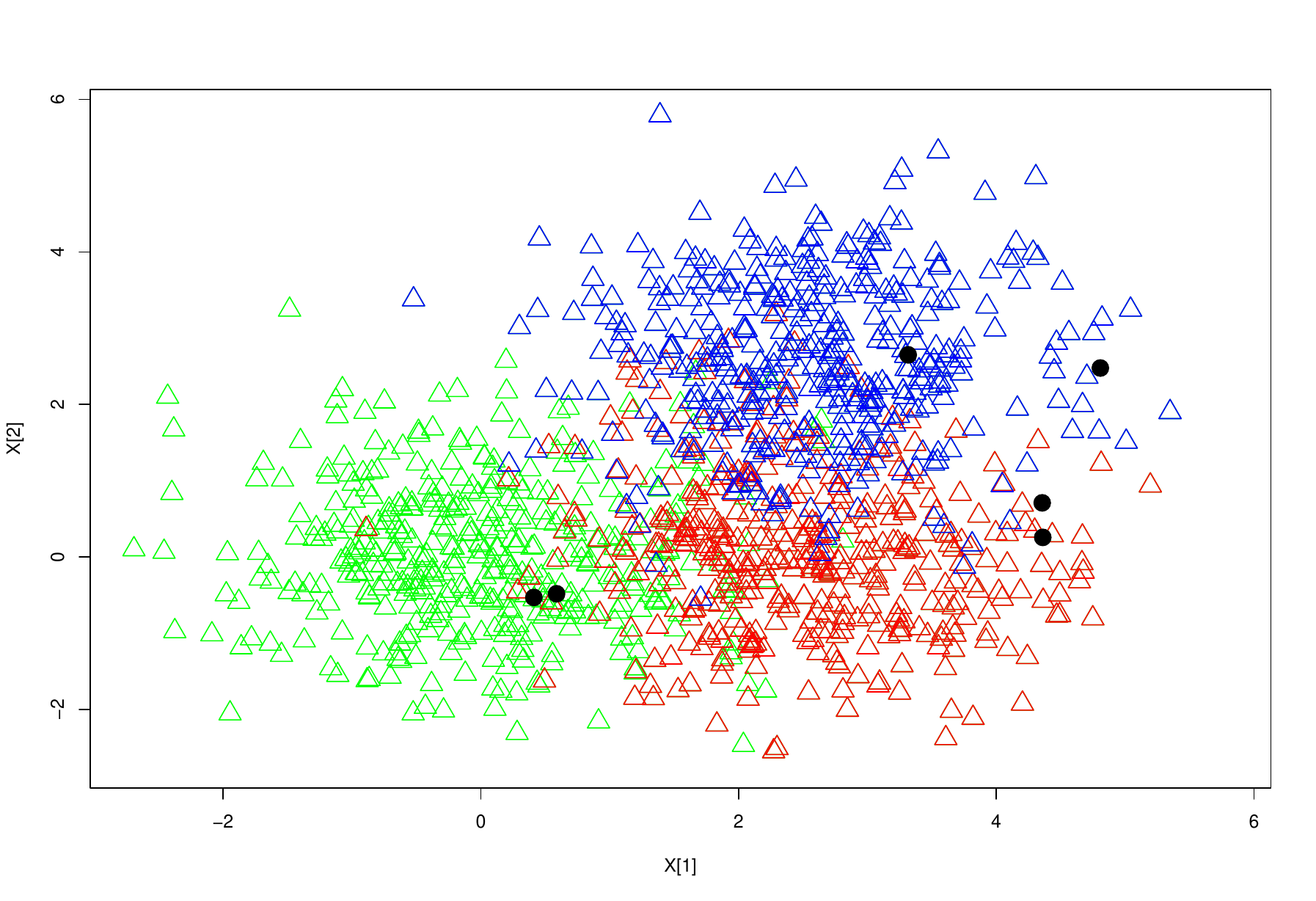}
\end{center}
\vspace{-5mm}
\caption{\label{BVNfig} Illustration of one data generation with SNR = 2.5. The batch of six test samples are in black. There are 400 calibration examples from each class (class one in green, class two in red, and class three in blue).  At $\alpha = 0.1$, the size of the prediction set using Bonferroni and Simes is 32 and 24, respectively.}
\end{figure}

\begin{table}[ht]
\centering
\begin{tabular}{|rrrrrr|rr|}
  \hline
  $Y_1=1$ & $Y_2=1$ & $Y_3=2$ & $Y_4=2$ & $Y_5=3$ & $Y_6=3$ &Bonf & Simes \\ 
  \hline
 1 & 1 & 2 & 3 & 2 & 3 & 0.12 & 0.07 \\ 
   2 & 1 & 2 & 3 & 2 & 3 & 0.12 & 0.07 \\ 
   1 & 2 & 2& 3 & 2 & 3 & 0.12 & 0.07 \\ 
   2 & 2 & 2 & 3 & 2 & 3 & 0.12 & 0.07 \\ 
   1 & 1 & 3 & 3 & 2 & 3 & 0.12 & 0.07 \\ 
   2 & 1 & 3 & 3 & 2 & 3 & 0.12 & 0.07 \\ 
   1 & 2 & 3 & 3 & 2 & 3 & 0.12 & 0.07 \\ 
   2 & 2 & 3 & 3 & 2 & 3 & 0.12 & 0.07 \\ 
   1 & 1  & 2  & 2  & 2  & 3  & 0.12 & 0.12 \\ 
   2  & 1  & 2  & 2  & 2  & 3  & 0.12 & 0.12 \\ 
   1  & 2  & 2  & 2  & 2  & 3  & 0.12 & 0.12 \\ 
   2  & 2  & 2  & 2  & 2  & 3  & 0.12 & 0.12 \\ 
   1  & 1  & 3  & 2  & 2  & 3  & 0.12 & 0.12 \\ 
   2  & 1  & 3  & 2  & 2  & 3  & 0.12 & 0.12 \\ 
   1  & 2  & 3  & 2  & 2  & 3  & 0.12 & 0.12 \\ 
   2  & 2  & 3  & 2  & 2  & 3  & 0.12 & 0.12 \\ 
   2  & 2  & 3  & 3  & 3  & 3  & 0.15 & 0.12 \\ 
   1  & 2  & 3  & 3  & 3  & 3  & 0.15 & 0.12 \\ 
   1  & 1  & 2  & 3  & 3  & 3  & 0.15 & 0.15 \\ 
   2  & 1  & 2  & 3  & 3  & 3  & 0.15 & 0.15 \\ 
   1  & 2  & 2  & 3  & 3  & 3  & 0.15 & 0.15 \\ 
   2  & 2  & 2  & 3  & 3  & 3  & 0.15 & 0.15 \\ 
   1  & 1  & 3  & 3  & 3  & 3  & 0.15 & 0.15 \\ 
   2  & 1  & 3  & 3  & 3  & 3  & 0.15 & 0.15 \\ 
   2  & 2  & 3  & 2  & 3  & 3  & 0.33 & 0.16 \\ 
   1  & 2  & 3  & 2  & 3  & 3  & 0.33 & 0.19 \\ 
   2  & 2  & 2  & 2  & 3  & 3  & 0.33 & 0.24 \\ 
   2  & 1  & 3  & 2  & 3  & 3  & 0.37 & 0.24 \\ 
   1  & 2  & 2  & 2  & 3  & 3  & 0.33 & 0.33 \\ 
   1  & 1  & 3  & 2  & 3  & 3  & 0.37 & 0.37 \\ 
   2  & 1  & 2  & 2  & 3  & 3  & 0.48 & 0.48 \\ 
   1  & 1  & 2  & 2  & 3  & 3  & 1 & 0.65 \\ 
   \hline
\end{tabular}
\caption{\label{tabBVNlist} The batch prediction set using Bonferroni at $\alpha =0.1$, as well as the Bonferroni and Simes $p$-values for each $y$. }
\end{table}

Table \ref{tabBVN2LRT} adds the {\it median  } and the {\it oracle } adaptive Simes procedure, that uses respectively
\eqref{pi0estiQuantile}
 with $\l=\lceil m/2\rceil$ and $\hat{m}_0(y)=m_0(y)$ as estimator, to the comparison in Table \ref{tabBVN}. It also provides the estimated non-coverage for each method. Using oracle adaptive Simes is by far the best, but this is not a practical method since $m_0(y)$ is  unknown.

\setlength{\tabcolsep}{3pt}
{\small
 \begin{table}[h!]
 \centering
\begin{tabular}{r|rrrrrrr|rrrrrrr|}
   \hline
   & \multicolumn{7}{|c|}{Expected size of batch prediction set} & \multicolumn{7}{|c|}{Probability of non-coverage}\\
     &  &  & Storey & Median & Oracle &  & &&&   Storey & Median & Oracle &  & \\
   SNR & Bonf & Simes & Simes &  Simes &  Simes & Fisher & LRT & Bonf & Simes & Simes & Simes &  Simes &  Fisher  &LRT\\ 
   \hline
 1.00 & 410.52 & 384.66 & 327.55 & 346.09 & {\it 160.22} & {\bf 274.36} & 277.58 & 0.10 & 0.10 & 0.10 & 0.10 & 0.10 & 0.10 & 0.11 \\ 
 1.50 & 217.69 & 187.36 & 142.98 & 154.47 &  {\it 70.56} & {\bf 107.85} & 113.88 & 0.09 & 0.09 & 0.10 & 0.09 & 0.09 & 0.10 & 0.10 \\ 
 2.00 & 81.63 & 65.52 & 49.12 & 50.35 &  {\it 26.32} & {\bf 37.40} & 37.76 & 0.08 & 0.08 & 0.09 & 0.09 & 0.08 & 0.09 & 0.10 \\ 
 2.50 & 23.51 & 17.98 & 15.08 & 14.53 & {\it 9.05} & 14.60 & {\bf 11.91} & 0.10 & 0.11 & 0.11 & 0.11 & 0.11 & 0.10 & 0.11 \\ 
 3.00 & 6.42 & 5.35 & 5.18 &   4.90 & {\it 3.57} & 7.78 & {\bf 4.35} & 0.08 & 0.09 & 0.09 & 0.08 & 0.09 & 0.09 & 0.09 \\ 
 3.50 & 2.46 & 2.24 & 2.27 & 2.21 &  {\it 1.79} & 5.20 & {\bf 2.02} & 0.08 & 0.08 & 0.09 & 0.09 & 0.08 & 0.08 & 0.09 \\ 
 4.00 & 1.39 & 1.34 & 1.37 & 1.38 &  {\it 1.22} & 4.38 & {\bf 1.28} & 0.08 & 0.08 & 0.08 & 0.08 & 0.08 & 0.08 & 0.09 \\ 
 4.50 & 1.07 & 1.06 & 1.08 & 1.09 &  {\it 1.03} & 4.03 & {\bf 1.03} & 0.09 & 0.09 & 0.09 & 0.09 & 0.09 & 0.10 & 0.09 \\ 
    \hline
 \end{tabular}
 \caption{The average batch prediction set size at each SNR (columns 2--8) and probability of non-coverage (columns 9--15) for the batch conformal prediction inference at level $\alpha = 0.1$, for the following $p$-value combining functions: Bonferroni, Simes, adaptive Simes using Storey's estimator and the median estimator (see detailed data generation in text), oracle Simes, Fisher, and the estimated LRT. In bold, the (practical) combining method that produces the narrowest prediction region (oracle adaptive Simes is in italic).  Based on 2000 simulations. For a single data generation, the average running time on a standard PC was less than 0.05 seconds for all methods but the estimated LRT, which has an average running time of 5.7 seconds.   \label{tabBVN2LRT}}
 \end{table}
}

Table \ref{tabBVNbounds} provides the average sum of lower and upper bounds for the three classes by the different methods. The goal in the comparisons in this table are two fold. First, to assess how conservative the shortcut suggested in \S~\ref{sec:shortcut_supplementary} for computational efficiency is. Using Simes  (columns 3 and 4), it appears that the shortcut produces almost the same exact bounds for low SNR, and the inflation (i.e., smaller lower bounds and higher upper bounds with the shortcut) for high SNR is tiny. Using adaptive Simes (columns 6 and 7), it appears that there is a light inflation for all SNRs, and it is larger than using Simes.  
The second goal is to compare the efficiency of each combining method. For this purpose, we also provide Table \ref{tabBVN2classeslowerbounds2} that includes the estimated LRT but is based on a smaller number of simulations (since the bounds take 100 times longer to compute with the estimated LRT). As expected, the bounds using Simes are tighter than using Bonferroni, but the advantage is small. A more pronounced difference is with respect to oracle Simes, but it is not a practical method since $m_0(y)$ is unknown in practice. The bounds using Fisher 
is worse than other methods for SNR $\geq 2.5$, and better for the upper bound if SNR $\leq 2$. The bounds using the estimated LRT tend to be the tightest  among the practical methods considered.

{\small
 \begin{table}[h!]
 \centering
\begin{tabular}{|c|ccccccc|}
\hline
&       & & Shortcut& Oracle & Storey& Shortcut Storey& \\
SNR &        Bonf&  Simes&  Simes& Simes&  Simes&  Simes& Fisher \\ \hline
1 &  0.1735 & {\bf 0.1799}  &   {\bf 0.1799 }&    {\it  0.3056} &   0.1601 &        0.1598 & 0.0959\\
 1.5 &0.5731 &0.5923&     0.5923&     {\it  0.8769} &   {\bf 0.5998} &        0.5973 &0.4691\\
 2&   1.3846& 1.4423 &    1.4423&     {\it  1.8984 } &  {\bf 1.4692}   &      1.4665& 1.3304\\
2.5& 2.6567& {\bf 2.7494} &    {\bf 2.7494}  &   {\it  3.2361} &   2.7424  &       2.7375& 2.4744\\
3 &  3.9335& {\bf 4.0222} &    {\bf 4.0222}   &   {\it 4.4062} &    3.9831 &        3.9718 &3.4714\\
3.5 &5.0332 & {\bf 5.0740} &    {\bf 5.0740} &    {\it  5.2971} &   5.0384  &       5.0297& 4.2149\\
4 &  5.6546 & {\bf 5.6741 }  &   5.6725  &   {\it  5.7897} &    5.6505  &       5.6431& 4.6495\\
 4.5 &5.9349 & {\bf 5.9403} &   5.9320 &    {\it  5.9729} &    5.9112 &        5.9031& 4.9124\\
  \hline
  1  & 16.4065 &16.2350   & 16.2350 &    {\it 14.6186}  & 15.9986    &    16.0034 & {\bf 15.5516}\\
 1.5 &14.6781& 14.3764  &  14.3764&      {\it12.8131 }&  14.1638 &       14.2222 & {\bf 13.6339}\\
 2   &12.3595 &11.9946 &   11.9946   &   {\it10.8056 }&  11.9328  &      12.0616 & {\bf 11.6715}\\
 2.5 &10.0392 & {\bf 9.7815}    & {\bf 9.7815}  &    {\it 9.0074}&    9.8433   &      9.9388& 10.0506\\
 3   & 8.2403  & {\bf 8.0921}   &  {\bf 8.0921}   &   {\it 7.6426 } &  8.1661   &      8.2092 & 8.8527\\
 3.5  &6.9937  & {\bf 6.9344}  &   6.9348 &     {\it 6.7016 }&   6.9844  &       6.9952 & 8.0839\\
 4   & 6.3479  & {\bf 6.3242}  &   6.3280  &    {\it 6.2107 }&   6.3514  &       6.3611  &7.6670\\
 4.5  &6.0651  & {\bf 6.0595 } &   6.0693  &    {\it 6.0271 }&   6.0884   &      6.0979 & 7.4120\\ \hline
 \end{tabular}
 \caption{\label{tabBVNbounds}  Sum of average lower bounds $\sum_{k=1}^3\l^{(k)}_\alpha$   (rows 1--8) and upper bounds $\sum_{k=1}^3 u^{(k)}_\alpha$ (rows 9--16) of $\sum_{k=1}^3m_k(Y)=3$ at each SNR for different batch conformal prediction inferences at level $\alpha = 0.1$. Estimation with an average over 10000 replications. 
 The most informative practical bound has highest lower bounds / lowest upper bounds among the practical methods (in bold). Oracle Simes is in italic.  
 }
 \end{table}
}

{\small
 \begin{table}[h!]
 \centering
\begin{tabular}{|c|ccccc|ccccc|}
\hline
& \multicolumn{5}{|c|}{Sum of the average lower bounds} & \multicolumn{5}{|c|}{Sums of the average upper bounds} \\
& & & Storey & & Estimated & & & Storey & & Estimated\\ 
SNR  & Bonf & Simes & Simes & Fisher & LRT & Bonf & Simes &  Simes & Fisher & LRT \\ 
   \hline
 1 & 0.21 & {\bf 0.22} &  0.20 & 0.13 & 0.14 & 16.41 & 16.21 &   15.96  & 15.54 & {\bf 15.47} \\ 
   1.5 & 0.58 & 0.59  & {\bf 0.60}  & 0.45 & 0.55 & 14.66 & 14.32  & 14.14  & 13.60 & {\bf 13.45} \\ 
   2 & 1.36 & 1.41  & 1.46  & 1.32 & {\bf 1.55} & 12.34 & 12.01  & 11.94  & 11.70 & {\bf 11.37} \\ 
   2.5 & 2.58 & 2.67  & 2.70  & 2.47 & {\bf 2.81} & 10.07 & 9.83  & 9.83  & 10.04 & {\bf 9.57} \\ 
   3 & 4.03 & 4.10  & 4.05  & 3.49 & {\bf 4.25} & 8.16 & 8.03  & 8.11  & 8.86 & {\bf 7.92} \\ 
   3.5 & 5.00 & 5.05  & 5.05  & 4.22 & {\bf 5.20} & 7.04 & 6.96  & 6.99  & 8.09 & {\bf 6.85} \\ 
   4 & 5.67 & 5.70  & 5.65  & 4.64 & {\bf 5.76} & 6.33 & 6.30  & 6.35  & 7.67 & {\bf 6.25} \\ 
   4.5 & 5.93 & 5.94  & 5.93  & 4.95 & {\bf 5.98} & 6.07 & 6.06  & 6.07  & 7.38 & {\bf 6.02} \\ 
    \hline
 \end{tabular}
 \caption{\label{tabBVNbounds2}  Sum of average lower bounds $\sum_{k=1}^3\l^{(k)}_\alpha$ (columns 2--6) and upper bounds $\sum_{k=1}^3 u^{(k)}_\alpha$ (columns 7--11) for $\sum_{k=1}^3m_k(Y)=m$  at each SNR for the different batch conformal prediction inferences with average batch size presented in Table \ref{tabBVN},  at level $\alpha = 0.1$. Estimation with an average over 2000 replications. 
 The most informative bound has highest lower bounds / lowest upper bounds (in bold).  
 }
 \end{table}
}

\subsection{USPS and CIFAR data sets}\label{sec:appenddata}

To obtain a visualization different from the one of \S~\ref{sec:data}, Figure~\ref{fig:Power2} displays the averaged size of batch prediction sets as well as an estimation of the coverage in function of $\alpha$ in the same setting as Figure~\ref{fig:Power}. The conclusions are analogue. The fluctuations of the coverage around $\alpha$ in Figure~\ref{fig:Power2} for the Bonferroni, Simes and Storey procedures  is due to the uncertainty of the empirical estimation of the coverage.

\begin{figure}[h!]
\begin{tabular}{cc}
\hspace{-1cm} USPS data set & \hspace{-1.2cm}CIFAR data set\vspace{-7mm}\\
\hspace{-1cm} \includegraphics[scale=0.25]{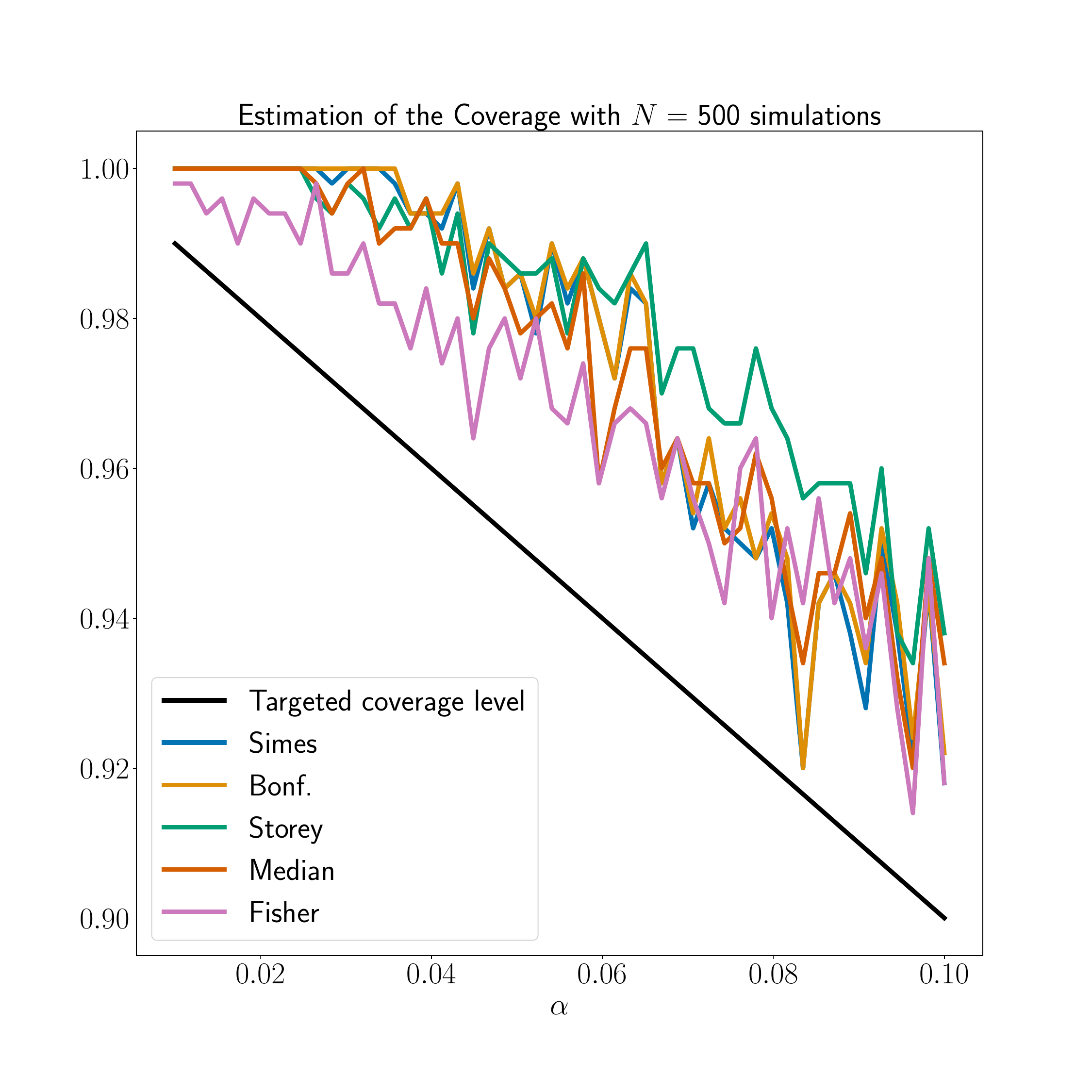}    &  \hspace{-1.2cm}  \includegraphics[scale=0.25]{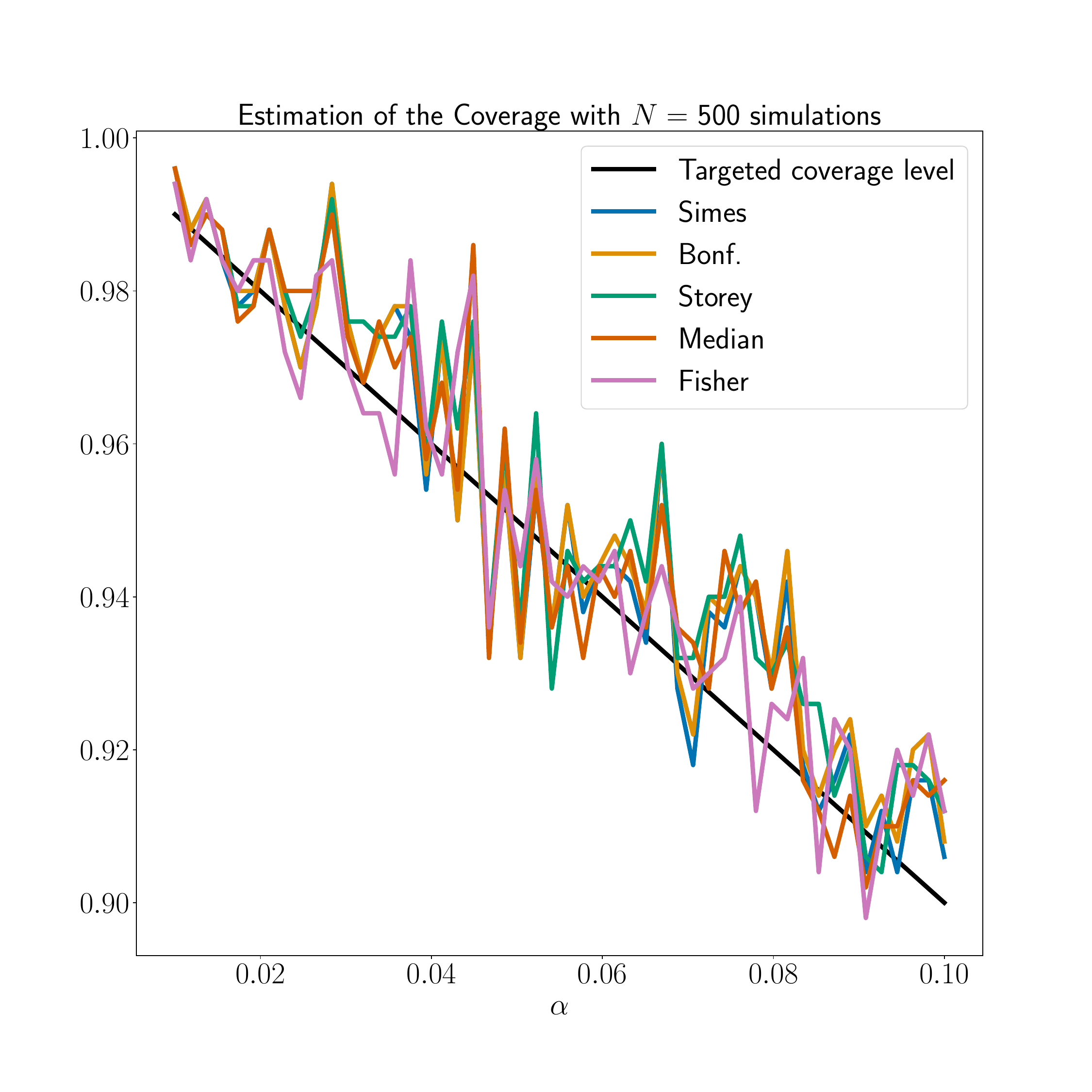} \vspace{-7mm}\\
\hspace{-1cm} \includegraphics[scale=0.25]{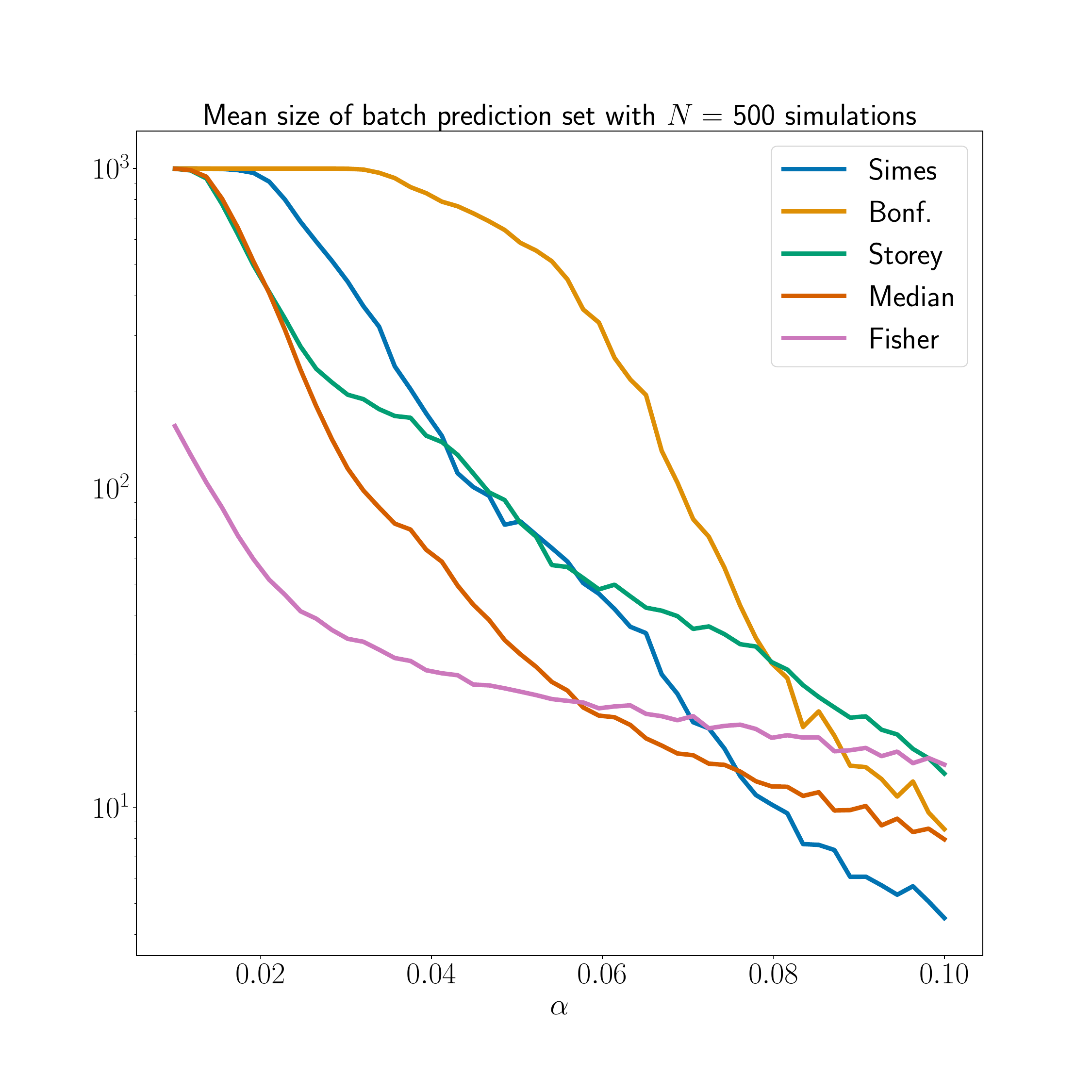}    &  \hspace{-1.2cm}  \includegraphics[scale=0.25]{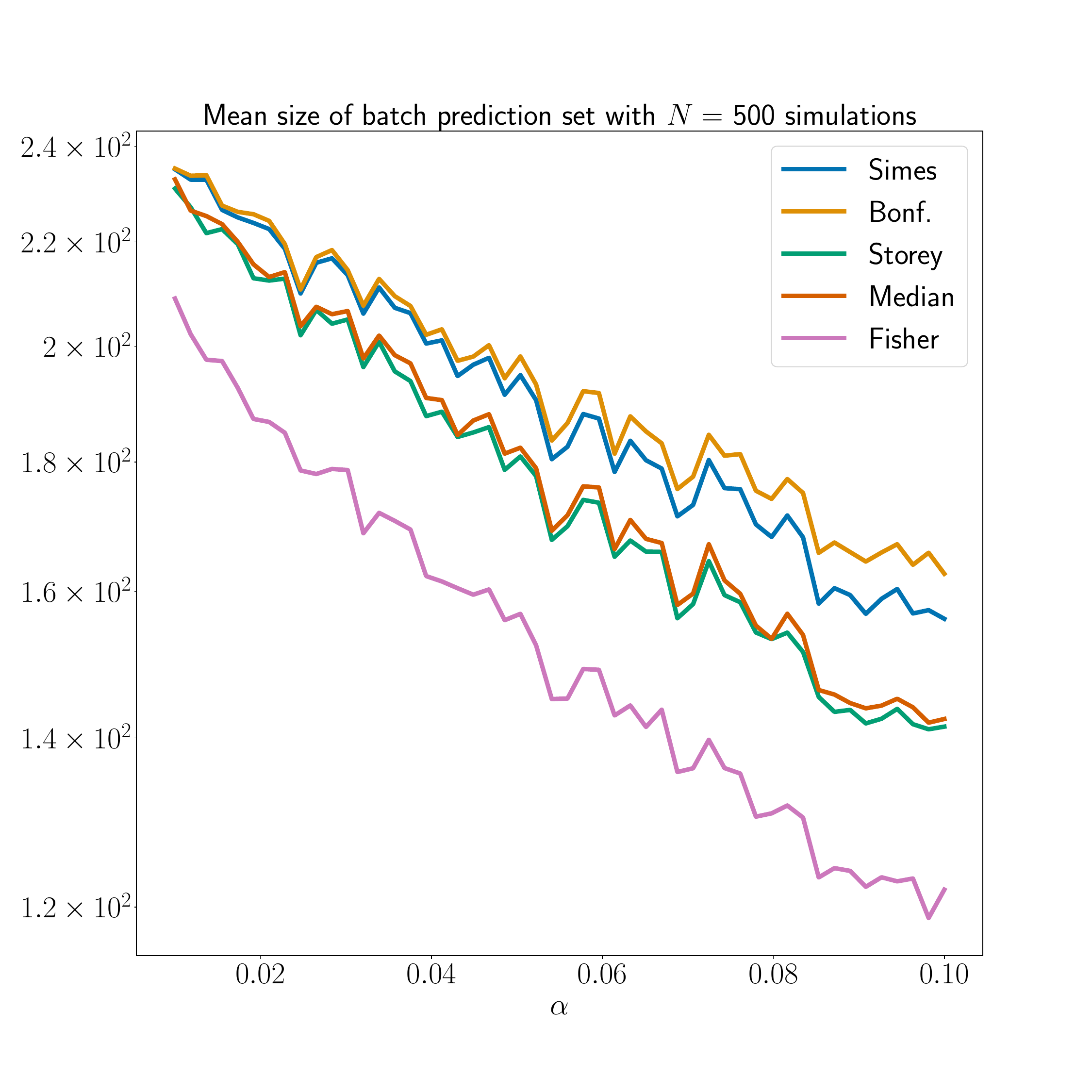} \vspace{-7mm}\\
\end{tabular}
\vspace{-0.3cm}
\caption{\label{fig:Power2} 
Top: averaged coverage of the batch prediction sets in function of $\alpha$ for different procedures.
Bottom: averaged size of the batch prediction sets in function of $\alpha$ for different procedures. Same setting as for Figure~\ref{fig:Power}. The standard error for the USPS dataset is below $0.013$, and is below  $0.014$ for the CIFAR dataset.
}
\end{figure}

\subsection{Survey animal populations for CIFAR data set}\label{sm-subsec-CIFAR}

In this section, we illustrate the task of predicting the counts for each category (task (ii) in the main text) for the batch displayed in Figure~\ref{animals}. The lower and upper bounds for the number of each animal in this batch are given in Table~\ref{table:BoundsCIFAR}. As in the previous section, while the improvement of the new methods are significant for the size of the batch prediction sets, it is more modest for the bounds.

\begin{figure}[h!]
\hspace{1.6cm}\includegraphics[scale=0.6]{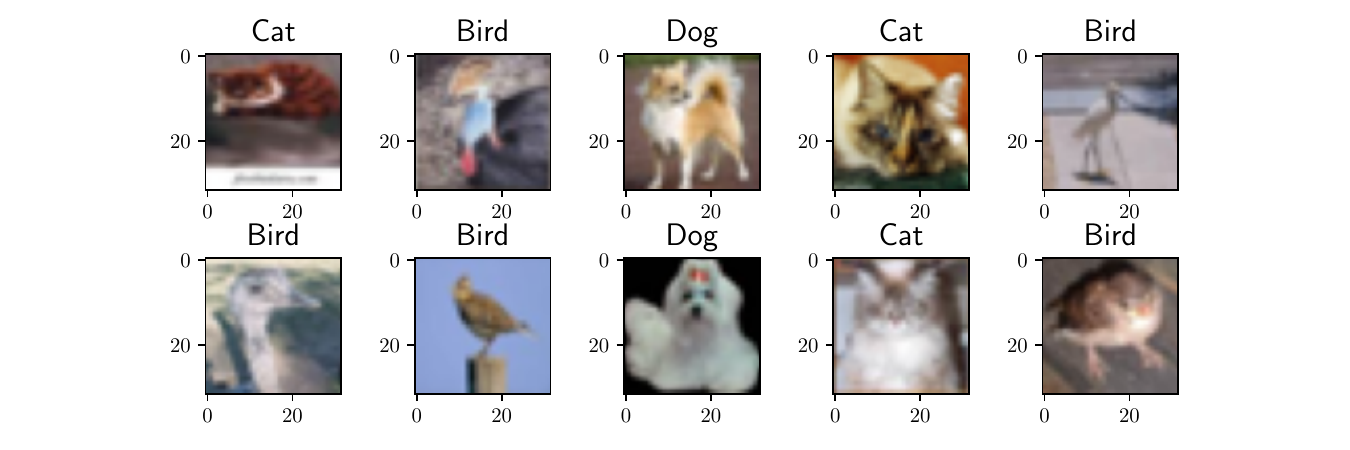}
\caption{\label{animals} One batch of the CIFAR dataset~\citep{Krizhevsky2009LearningML}. }
\end{figure}

 \begin{table}[h!]
 \centering
\begin{tabular}{|l|llllll|}
\hline
 & Simes & Bonf. & Storey & Median & Fisher & LRT \\
\hline
Bird & 0 ; 9 & 0 ; 9 & 0 ; 9 & 0 ; 8 & 0 ; 8 & 0 ; 7 \\
Cat & 0 ; 10 & 0 ; 9 & 0 ; 9 & 0 ; 8 & 0 ; 8 & 0 ; 9 \\
Dog & 0 ; 10 & 0 ; 10 & 0 ; 10 & 0 ; 10 & 0 ; 10 & 0 ; 10 \\
Size & 27216 & 39366 & 24459 & 20680 & 12653 & 11313 \\
\hline
\end{tabular}

 \caption{Bounds for the particular batch of Figure~\ref{animals} from the CIFAR data set at level $\alpha=0.1$. The number of birds, cats, and dogs in the batch is 5, 3, and 2, respectively. \label{table:BoundsCIFAR} }
 \end{table}

\subsection{Full versus class calibrated $p$-values under label shift}\label{sec:ClassvFull}

\et{In this section, we illustrate the importance of the \emph{class conditional model} and the conditional guarantee~\eqref{aimcond} with the CIFAR dataset. The calibration sample is of size $n=2000$, with $10\%$ of birds, $30\%$ of cats and $60\%$ of dogs. The test sample is $m=5$ with $2$ birds ($40\%$), $3$ cats ($60\%$), hence without dogs. They are both drawn without replacement in the CIFAR data set. While the distribution of $X$ given $Y$ is the same, there is a significant label shift between the calibration and test samples. Hence, using full-calibrated $p$-values is not appropriate and we should rely on class-conditional $p$-values to retain the guarantees \eqref{aimcond} and thus the $(1-\alpha)$-coverage under this specific data-generation process. }

\et{The coverage of the different approaches are approximated with $1000$ replications and reported in Figure~\ref{fig:ClassvFull}. This corroborates the theoretical findings: the full calibrated approaches can miss the nominal coverage by a lot in this case, whereas the class calibrated approaches ensure the correct coverage. }

\begin{figure}[h!]
\begin{tabular}{cc}
\hspace{-1cm} Class-calibrated $p$-values & \hspace{-1.2cm} Full-calibrated $p$-values\vspace{-7mm}\\
\hspace{-1cm} \includegraphics[scale=0.25]{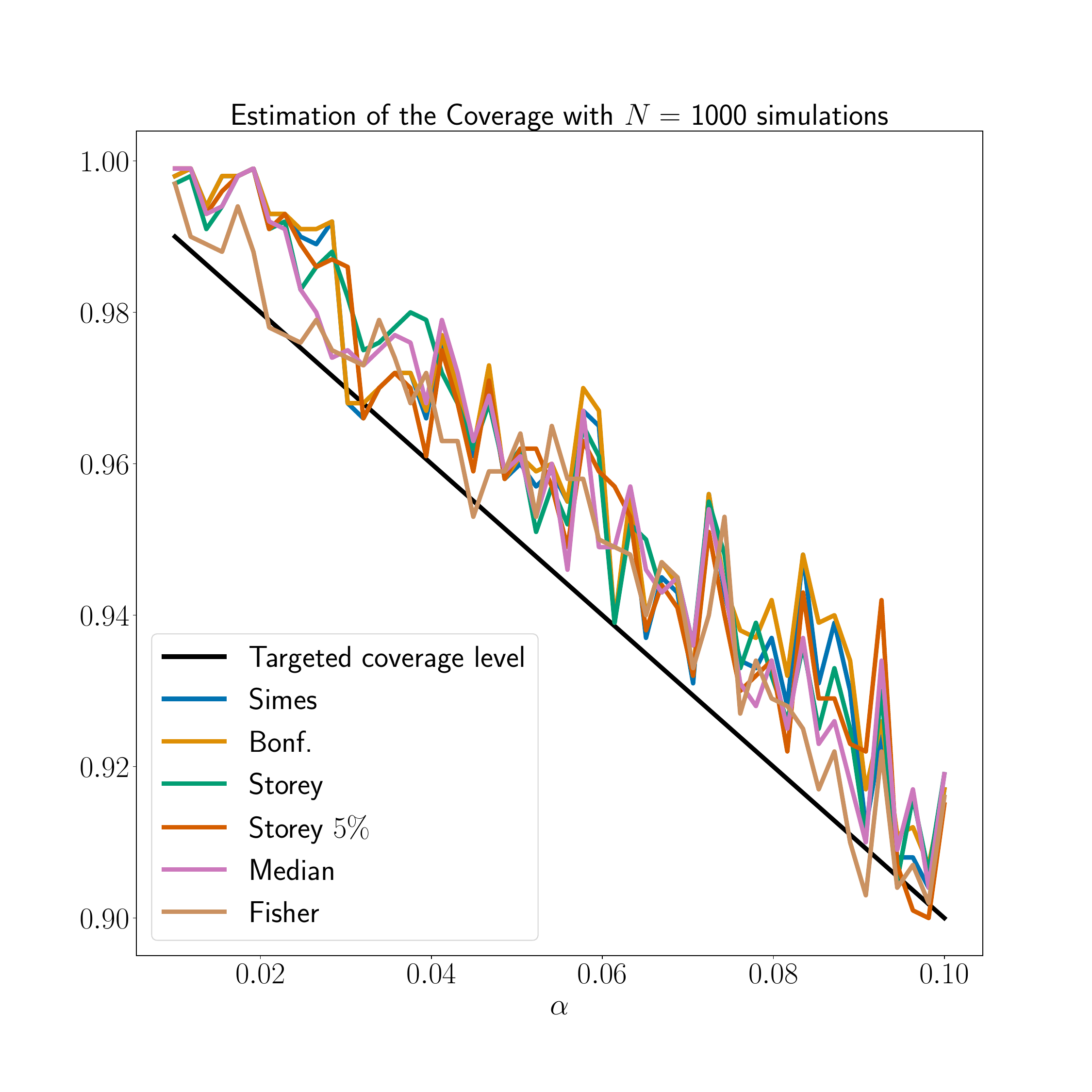}    &  \hspace{-1.2cm}  \includegraphics[scale=0.25]{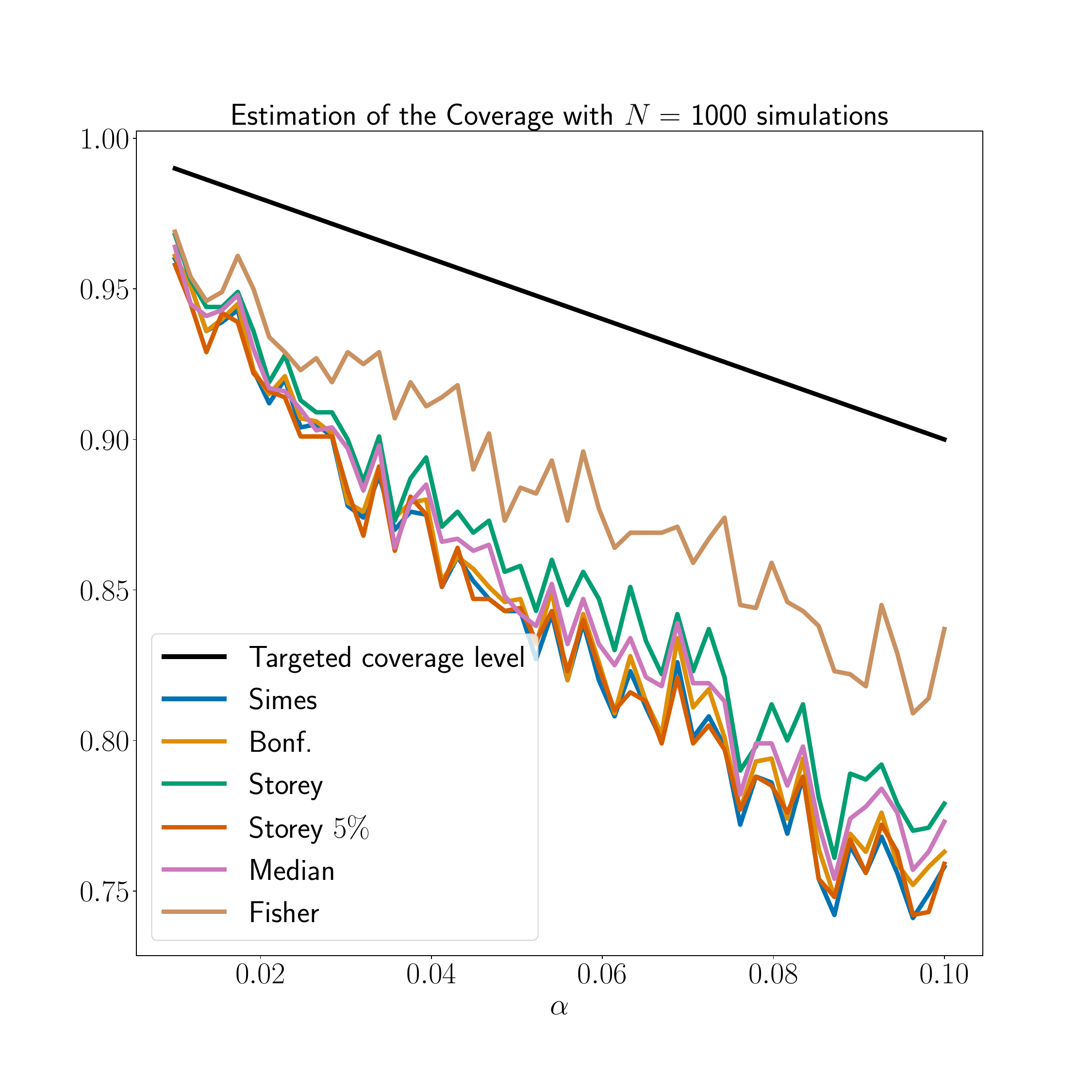} \vspace{-7mm}\\
\end{tabular}
\vspace{-0.3cm}
\caption{\label{fig:ClassvFull} 
 Estimated coverage for class calibrated $p$-values (left) and full calibrated $p$-values (right) in the label shift setting described in \S~\ref{sec:ClassvFull}. The estimated standard errors are below $0.009$ with the class-calibrated $p$-values, and are all below $0.014$ with the full-calibrated $p$-values.
}
\end{figure}

\subsection{LRT computation time}\label{sec:LRTbadcomplex}

\et{Table~\ref{tabTimes} provides the time to compute one batch prediction set for different methods with the CIFAR and USPS datasets, averaged over $500$ simulations and in the simulation setting of \S~\ref{sec:data}. As one can see, the LRT is by far the most computationally demanding method.}

\setlength{\tabcolsep}{3pt}
{\small
 \begin{table}[h!]
 \centering
\begin{tabular}{|c|ccc|ccc|}
\hline
&\multicolumn{6}{|c|}{Targeted coverage}\\
& \multicolumn{3}{|c|}{USPS dataset } & \multicolumn{3}{|c|}{CIFAR dataset} \\
 & 0.99 & 0.95 & 0.90 & 0.99 & 0.95 & 0.90 \\
\hline
Simes & 0.027 & 0.027 & 0.026 & 0.008 & 0.007 & 0.006 \\
Bonf. & 0.024 & 0.023 & 0.023 & 0.003 & 0.003 & 0.003 \\
Storey & 0.030 & 0.030 & 0.030 & 0.005 & 0.005 & 0.005 \\
Median & 0.028 & 0.027 & 0.027 & 0.004 & 0.004 & 0.004 \\
Fisher & 0.072 & 0.072 & 0.072 & 0.015 & 0.015 & 0.015 \\
LRT & 5.690 & 5.668 & 5.656 & 7.475 & 7.461 & 7.507 \\
\hline
\end{tabular}
\caption{Mean time (in second) over $500$ replications of different procedures (in rows) and for different targeted $1-\alpha$ (in columns). The setting is the same as the one in \S~\ref{sec:data}.}\label{tabTimes}
\end{table}
}

\end{document}